\tikzstyle{startstop} = [rectangle, rounded corners, minimum width=2.5cm, minimum height=0.8cm, text centered, draw=black, fill=blue!55, font=\small]
\tikzstyle{process} = [rectangle, minimum width=2.5cm, minimum height=0.5cm, text centered, draw=black, fill=blue!30, font=\small]
\tikzstyle{decision} =  [rectangle, minimum width=2.5cm, minimum height=0.5cm, text centered, draw=black, fill=blue!10, font=\small]
\tikzstyle{arrow} = [thick,->,>=stealth]
\newtheorem{theorem}{Theorem}
\newtheorem{proposition}{Proposition}
\newtheorem{lemma}{Lemma}
\newtheorem{definition}{Definition}
\newtheorem{corollary}{Corollary}
\newtheorem{remark}{Remark}
\numberwithin{equation}{section}
\numberwithin{corollary}{section}
\numberwithin{proposition}{section}
\numberwithin{theorem}{section}
\numberwithin{lemma}{section}
\title[A metapopulation almost periodic model]{Persistence/extinction scenarios in an almost periodic metapopulation with competition and habitat destruction}
\author[Amster]{Pablo Amster}
\author[Robledo]{Gonzalo Robledo}
\author[Sepulveda]{Daniel Sep\'ulveda}
\email{pamster@dm.uba.ar,grobledo@uchile.cl,daniel.sepulveda@utem.cl}
\address{Departamento de Matem\'atica, Universidad de Buenos Aires, Argentina. IMAS -- CONICET.}
\address{Departamento de Matem\'aticas, Universidad de Chile. Las Palmeras 3425, \~Nu\~noa, Santiago, Chile.}
\address{Departamento de Matem\'atica, Universidad Tecnol\'ogica Metropolitana, \~Nu\~noa, Santiago, Chile.}
\keywords{Metapopulations, Ecology, Almost Periodic Functions, Differential Equations, Exponential Dicohotomy}
\subjclass{34K13, 34K60, 92D25}
\begin{document}

\begin{abstract}
We study an almost periodic version of a metapopulation model developed by Tilman \textit{et.al} and Nee \textit{et.al} in the nineties, which generalizes the classical Levins approach by considering several species in competition a\-ffec\-ted by habitat destruction. The novelty is to assume that the colonization and extinction rates are positive almost periodic functions whereas our main results show that the predominance of either colonization or extinction forces of a specific species is equivalent to the property of exponential dichotomy of a scalar linear differential equation. By using well known results of exponential dichotomy theory, we carry out a recursive and exhaustive  description of persistence/extinction scenarios. In addition, we start 
a preliminary discussion describing a more elusive behavior when the colonization and extinction forces are similar in average.  
\end{abstract}
\maketitle 
\section{Introduction}
The loss of habitat has multiple causes: expansion of agricultural land, exploitation of natural resources (wood, minerals, fisheries, aquifers), overgrazing, bad ma\-na\-ge\-ment of toxic wastes, uncontrolled wildfires, unregulated real estate expansion, desertification, coral reef degradation, etc,
which have a dramatic impact on the biodiversity declining. In fact, the loss of habitat is considered as the major threat to biodiversity
\cite{Britton,Hanski-2011}. Nevertheless, a detailed description of these effects remains to be done as well as the
modeling of these issues has been addressed from several perspectives. Within this context, the metapopulation theory has provided a fruitful framework for the study of the persistence or extinction of diverse populations affected by progressive habitat degradation.
In this article, we study a metapopulation model introduced by  S. Nee \textit{et.al} \cite{Nee} 
and D. Tilman \textit{et.al} \cite{Tilman} which retains the main features of the classical model introduced by R. Levins.

The theory of metapopulations \cite{Gilpin,Hanski} assumes the existence of fragmented habitats where
any habitable portion is called a \textit{patch}. Now, all the species inhabiting a specific patch conform a \textit{local population} and --roughly speaking-- a \textit{metapopulation} is a collection of local populations in a region.

According to P. Chesson \cite{Chesson}, a classical metapopulation has the fo\-llowing properties:

\begin{itemize}
\item[$\bullet$] The patches are  partially isolated between them and are able of autosustaining for several generations in absence of colonization for other local populations. 
\item[$\bullet$] The local population extinction occurs in a time scale considering many generations.
\item[$\bullet$] The migration between local populations leads to reestablishment of local populations following local extinction.
\end{itemize}

The metapopulation modeling has two time scales: a short scale only assumes that any patch is either occupied or unoccupied, and the internal demographic dynamics (births and deaths) of the patches is neglected. On the other hand, the dynamical balance between the processes of colonization and extinction takes place in a time scale encompassing several generations of demographical growth.

\subsection{The Levins model with competition and habitat destruction}
The first generation of metapopulation models started with the seminal work of the mathematical ecologist Richard Levins \cite{Levins,Levins2},
who coined the concept of metapopulation and described it by the ordinary differential equation:
\begin{equation}
\label{levins-0}
p'=cp(1-p)-mp,
\end{equation}
where $p\in [0,1]$ is the fraction of occupied patches by a single species. In addition $c>0$ is the colonization rate and $m>0$ is the extinction rate. If $p(0)\in (0,1]$ the solutions of (\ref{levins-0}) have the following
behavior, which is determined by the dominance of either colonization or extinction forces:
\begin{displaymath}
\lim\limits_{t\to +\infty}p(t)=\left\{
\begin{array}{ccl}
\displaystyle 0  & \textnormal{if}&   c \leq m \\
\displaystyle p^{*}:=1-(m/c)  &\textnormal{if}  & m<c. 
\end{array}\right.
\end{displaymath}

We point out that if $c\neq m$, the above equilibria $0$ and $p^{*}$ are hyperbolic and the convergence is exponential. On the other hand, a bifurcation occurs when $c=m$, namely, these equilibria collapse at zero and there is neither hyperbolicity nor exponential convergence.

The generalization of the Levins model has been carried out by a myriad of approaches and we refer the reader to \cite{Etienne},\cite[Ch.4]{Hanski},\cite[Table 1]{Marquet} 
for a detailed discussion and bibliography. In this article we will be focused on a specific extension
which considers a metapopulation encompassing $n$ species in competition. For example, the following 
model is considered in \cite{Case}: 
\begin{equation}
\label{gene2}
\left\{\begin{array}{rcl}
p_{1}'&=&c_{1}p_{1}(1-p_{1})-m_{1}p_{1}\\
p_{2}'&=&c_{2}p_{2}(1-p_{1}-p_{2})-m_{2}p_{2}-c_{1}p_{1}p_{2},
\end{array}\right.
\end{equation}
where $p_{i}\in [0,1]$ describe the fraction of the occupied patches by the
$i$--th species. As before, $c_{i}$ and $m_{i}$, respectively, are the colonization and extinction rates of each species. The system (\ref{gene2}) deserves some remarks:
\begin{itemize}
\item[a)] The abundance of the species 1 is  described by an equation similar to (\ref{levins-0})
since it is not affected by the other one. 
\item[b)] The species 2 is unable to migrate a patch occupied by the species 1. Moreover,
if the species 1 invades a patch occupied by the species 2, this last species is immediately driven to extinction. 
\item[c)] By gathering the above statements, the species 1 and 2 are respectively called \textit{superior}
and \textit{inferior} competitors.
\item[d)] It is assumed that the colonization abilities of the species 2 are better than those of the species 1, that is $c_{1}< c_{2}$. For this reason, the species 2 is also called a \textit{fugitive species}.
\end{itemize}

A generalization of (\ref{gene2}) carried out in \cite{Nee} considered habitat destruction, that is,
the existence of a fraction $h\in (0,1)$ of habitable patches, which leads to
\begin{equation}
\label{gene2Tilman}
\left\{\begin{array}{rcl}
p_{1}'&=&c_{1}p_{1}(h-p_{1})-m_{1}p_{1}\\
p_{2}'&=&c_{2}p_{2}(h-p_{1}-p_{2})-m_{2}p_{2}-c_{1}p_{1}p_{2},
\end{array}\right.
\end{equation}
and a generalization for $n$ species was carried out in \cite{Tilman}:
\begin{equation}
\label{eq:1}
p_i'=c_i p_i\left(h-\sum_{j=1}^{i}p_j \right)-m_ip_i-\sum_{j=1}^{i-1}c_j p_i p_j,
\end{equation}
where $p_i$ represents the proportion of patches occupied by the $i$--th species with $i\in \{1,\ldots,n\}$, $c_i$ and $m_{i}$ are respectively the species-specific colonization and extinction rates. Similarly as in \cite{Case} and \cite{Nee}, a fundamental assumption in \cite{Tilman} 
is the \textit{competition--colonization trade--off}:
\begin{equation}
\label{trade-off}
c_{1}< c_{2} < \ldots < c_{n},
\end{equation}
that is, the colonization rate of the species tends to increase as its competitiveness declines. This leads to the fact that
the best competitor is the worst colonizer and vice versa. It is important to notice that the hierarchical structure of (\ref{eq:1}) allows 
an easy identification of all its equilibria. In fact, by following an idea of \cite{Miller}, we can rewrite (\ref{eq:1})
as the generalized Lotka--Volterra system:
\begin{equation}
\label{GLV}
p_{i}'=p_{i}\left(r_{i}(h)-\sum\limits_{j=1}^{n}A_{ij}p_{j}\right)    \quad \textnormal{with $i=1,\ldots,n$},
\end{equation}
where 
\begin{displaymath}
r_{i}(h)=hc_{i}-m_{i} \quad \textnormal{and} \quad
A_{ij}=\left\{\begin{array}{lcr}
c_{i}+c_{j} &\textnormal{if} & j<i \\
c_{i} & \textnormal{if}&  i=j \\
0 & \textnormal{if}& j>i.
\end{array}\right.
\end{displaymath}

By defining $A=\{A_{ij}\}_{i,j=1}^{n}$ and $r(h)=(r_{1}(h),\ldots,r_{n}(h))^{T}$, 
a global study of (\ref{GLV}) can be addressed for some configurations of the parameters space that give rise to $2^{n}$ possible hyperbolic equilibria which describe scenarios ranging from to persistence of all species --the equilibrium $p_{1}^{\ast}=A^{-1}r(h)$-- to the total extinction of the species, namely, the equilibrium $p_{2^{n}}^{\ast}=(0,\ldots,0)^{T}$.  On other hand, the intermediate persistence/extinction scenarios are described by the positive/null components of the asymptotically stable equilibria $p_{i}^{\ast}=(p_{i1},\ldots,p_{in})^{T}$ with $i\in \{2,3,\ldots,2^{n}-1\}$.

Furthermore, for restricted configurations of the space of parameters, a group of the above equilibria collapses into a non--hyperbolic equilibrium; this happens when the colonization forces of a specific competing species are equal to the extinction ones.

\subsection{The Levins model with almost periodic colonization and extinction rates} 
Among the so many generalizations of the Levins model, the assumption that -from a time scale encompassing several generations- the colonization and extinction rates can be considered constant has been weakened by using 
a stochastic approach. In fact, Levins assumes in \cite{Levins} that the extinction rate depends
randomly  on the frequency of a gene in the population
whose average is $\overline{m}$ and its variance is $\sigma_{m}^{2}$. When $c>\overline{m}+\sigma_{m}^{2}$ it is proved the existence of
a modal distribution with a peak at $p=1-(\overline{m}+\sigma_{m}^{2})/c$. Nevertheless, 
to the best of our knowledge, there are no generalizations from a deterministic point of view.

In this context, we will consider the Tilman's model (\ref{eq:1}) with time varying colonization and extinction rates:  
\begin{equation}
\label{tilman1-TV}
p_{i}'=c_i(t) p_i\left(h-\sum_{j=1}^{i}p_j \right)-m_i(t)p_i-\sum_{j=1}^{i-1}c_j(t) p_i p_j, \quad
\textnormal{for $i=1,\ldots,n$}.
\end{equation}

A crucial assumption of this work is that the colonization and extinction rates will be supposed to be positive almost periodic functions $c_{i},m_{i}\colon \mathbb{R}\to (0,+\infty)$ satisfying the additional condition:
\begin{equation}
\label{infap}
\inf\limits_{t\in (-\infty,+\infty)}c_{i}(t)=c_{i}^{-}>0.
\end{equation}

Let us recall that a continuous function
$v\colon \mathbb{R}\to \mathbb{R}$ is almost periodic in the sense of H. Bohr if, for any $\varepsilon>0$, there exists a number 
$\ell_{\varepsilon}>0$ such that any interval
of length $\ell_{\varepsilon}$ contains a number $\tau$ such that $|v(t+\tau)-v(t)|<\varepsilon$ for any $t\in \mathbb{R}$ and we refer the reader to \cite{Besicovitch,Bohr,Fink} for a deeper treatment. 
{For our purposes it will be enough to recall that} 
almost periodic functions are bounded, uniformly continuous and have noticeable statistic properties. For example, its \textit{average}
is well defined by the following limit:
\begin{equation}
\label{average-a}
\mathcal{M}[v]:=\lim\limits_{L\to +\infty}\frac{1}{L}\int_{c}^{L+c}v(s)\,ds \quad \textnormal{uniformly for any $c\in \mathbb{R}$},
\end{equation}
and we refer to \cite[pp. 39--44]{Bohr} for details.

Furthermore, it will be assumed that the \textit{competition--colonization trade--off} is satisfied from an average point of view:
\begin{equation}
\label{BCWC}
0<\mathcal{M}[c_{1}] <\mathcal{M}[c_{2}] < \cdots < \mathcal{M}[c_{n}],    
\end{equation}
that is, the average of the colonization rate of the species tends to increase as its competitiveness declines.

It is worth to recall that, as the continuous $T$--periodic functions are a particular case of almost periodic functions, 
it follows that if $c_{i}(\cdot)$ is continuous and $T_{i}$--periodic, the above average takes a simpler form:
\begin{equation}
\label{average}
\mathcal{M}[c_{i}]:=\lim\limits_{L\to +\infty}\frac{1}{L}\int_{0}^{L}c_{i}(s)\,ds = \frac{1}{T_{i}}\int_{0}^{T_{i}}c_{i}(s)\,ds.   
\end{equation}

In addition, in case that the colonization rates $c_{i}(\cdot)$ are $T_{i}$--periodic, the almost periodic framework will be
useful when $T_{k}/T_{j}\notin \mathbb{Q}$ for some couple $k,j\in \{1,\ldots,n\}$.

\medskip 

The almost periodic system (\ref{tilman1-TV}) deserves additional comments from a modeling, ecological and mathematical perspectives:

$\bullet$ From an ecological point of view, the system is \textit{competitive}, that is, the growth of a species has a negative impact on the growth of any other species. From a mathematical point of view, this corresponds to the fact that all the off diagonal entries of the corresponding Jacobian matrix are negative or zero.

$\bullet$ From an ecological point of view, the system is \textit{hierarchical}, that is, the species are ranked from $i=1$
(best) to $i=n$ (worst) according to its competitive ability.  Notice that the dynamics of the $j$--th species is not affected by the $k$--th species with $k>j$. 
From a mathematical point of view, the system is triangular, that is, its corresponding Jacobian matrix is lower triangular. 

$\bullet$ From a modeling point of view, the autonomous system (\ref{eq:1}) has a strong assumption: the colonization and extinction rates are constant in big time scales. A more realistic assumption could be
to assume either deterministic or randomly time varying rates. In this modeling context, we studied the thought provoking works of M. Allais \cite{Allais} and J. Bass \cite{Bass}, in which it is explained how the almost periodic and another functions whose averages are well defined can be used in the simulation of random behavior. Furthermore, it is also important to emphasize that the almost periodic functions have not been considered in depth
for theoretical ecologists, despite of that, we think that they also provide interesting tools for describing cyclical behaviors 
beyond the periodic one.

$\bullet$ From an ecological point of view, as in the autonomous case, the inferior competitors cannot colonize the patches occupied by the supperior competitors. This arises the following invariance result:
\begin{lemma}
\label{L1}
Let $t\mapsto p(t)=(p_{1}(t),\ldots,p_{n}(n))$ be a solution of the system \eqref{tilman1-TV}. If 
$\sum\limits_{i=1}^{n}p_{i}(t_{0})\in (0,h)$ then $\sum\limits_{i=1}^{n}p_{i}(t)\in (0,h)$ for any $t\geq t_{0}$.
\end{lemma}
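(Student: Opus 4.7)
The plan is to introduce $S(t):=\sum_{i=1}^{n}p_i(t)$ and prove three things: that each component stays nonnegative, that $S(t)>0$, and that $S(t)<h$, from which global existence on $[t_0,+\infty)$ follows automatically because the solution remains bounded. I will implicitly use the standing assumption that the initial data satisfy $p_i(t_0)\ge 0$ for all $i$, which is the only ecologically meaningful case.

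First I would rewrite \eqref{tilman1-TV} in the factored form $p_i'=p_i\,G_i(t,p)$, where $G_i(t,p)=c_i(t)\bigl(h-\sum_{j=1}^{i}p_j\bigr)-m_i(t)-\sum_{j=1}^{i-1}c_j(t)p_j$ is continuous in $t$ and locally Lipschitz in $p$. By the integrating-factor representation
\[
p_i(t)=p_i(t_0)\exp\!\left(\int_{t_0}^{t}G_i(s,p(s))\,ds\right),
\]
each coordinate $p_i$ retains the sign of its initial value on the interval of existence. In particular, every $p_i(t)\ge 0$, and since $S(t_0)>0$ forces at least one index $i_0$ with $p_{i_0}(t_0)>0$, we get $p_{i_0}(t)>0$, hence $S(t)>0$, on the full existence interval.

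The heart of the proof is the upper bound $S(t)<h$. I would argue by contradiction: if not, let $t_1>t_0$ be the first instant with $S(t_1)=h$, so $S'(t_1)\ge 0$. A direct computation gives
\[
S'=\sum_{i=1}^{n}c_i(t)p_i\!\left(h-\sum_{j=1}^{i}p_j\right)-\sum_{i=1}^{n}m_i(t)p_i-\sum_{i=1}^{n}\sum_{j=1}^{i-1}c_j(t)p_ip_j.
\]
At $t=t_1$ we can replace $h-\sum_{j=1}^{i}p_j$ by $\sum_{j=i+1}^{n}p_j$, so the first double sum becomes $\sum_{i}\sum_{j>i}c_i(t)p_ip_j$. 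Swapping the roles of $i$ and $j$ in the third double sum yields exactly the same expression $\sum_{i}\sum_{j>i}c_i(t)p_ip_j$, and the two cancel. Consequently $S'(t_1)=-\sum_{i}m_i(t_1)p_i(t_1)$. Since all $m_i$ are strictly positive, all $p_i(t_1)\ge 0$, and $S(t_1)=h>0$ forces some $p_i(t_1)>0$, we obtain $S'(t_1)<0$, contradicting $S'(t_1)\ge 0$.

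The cancellation in the previous paragraph is the one calculation that really has to be checked; everything else is routine phase-space reasoning. Once $0<S(t)<h$ is established on the maximal forward interval of existence, the coordinates $p_i$ are a priori bounded, so by standard extension the solution is defined on $[t_0,+\infty)$, which completes the proof.
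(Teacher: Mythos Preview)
Your proof is correct and follows essentially the same route as the paper: define the total $S(t)$, assume a first time $t_1$ with $S(t_1)=h$ so that $S'(t_1)\ge 0$, use $h-\sum_{j\le i}p_j=\sum_{j>i}p_j$ at $t_1$, and observe that the two double sums $\sum_{i}\sum_{j>i}c_i p_i p_j$ and $\sum_{i}\sum_{j<i}c_j p_i p_j$ coincide after relabelling, leaving only the strictly negative extinction term. Your write-up is in fact a bit more complete than the paper's, since you spell out the integrating-factor argument for the forward invariance of $\{p_i\ge 0\}$ (which the paper dismisses as ``straightforward'') and explicitly note that the a priori bound $0<S<h$ yields global forward existence.
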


\begin{proof}
See Appendix A.
\end{proof}

An essential difference with the autonomous case is that a qualitative study of a nonautonomous system as (\ref{tilman1-TV}) cannot be addressed from an equilibria based approach since the persistence of some species could be not achieved under the form of an attractive equilibrium. This motivates to introduce the following definition:
\begin{definition}
\label{GBS}
A \textbf{non autonomous equilibrium} of the system \eqref{tilman1-TV} is a solution  $\phi^{\ast}=(\phi_{1}^{\ast},\ldots,\phi_{n}^{\ast})\colon \mathbb{R}\to \mathbb{R}^{n}$ such that
\begin{displaymath}
\sup\limits_{t\in (-\infty,+\infty)}|\phi_{1}^{\ast}(t)|+\ldots+|\phi_{n}^{\ast}(t)|<+\infty.    
\end{displaymath}
\end{definition}

We stress that recent advances in the theory of non autonomous dynamical systems \cite{Kloeden2} have pointed out 
that, instead to focus in the classical \textit{static} equilibria, a global theory of non autonomous systems
can be carried out on the basis of the non autonomous equilibria, also called \textit{globally bounded solutions}. In particular, 
our main results will classify persistence/extinction scenarios described in terms of globally attractive non autonomous equilibria.

Until now, our description of metapopulation models has used intuitive notions of persistence and extinction. In order
to study the asymptotic behavior of (\ref{tilman1-TV}) we will work with the following formal definitions:
\begin{definition}
\label{DefUP}
The $i$--th species described in \eqref{tilman1-TV} is:

\noindent \textnormal{a)} Strongly uniformly persistent if there exists $\delta_{i}>0$ such that the fraction of patches inhabited by the $i$--th species verifies
\begin{equation*}
\liminf\limits_{t\to +\infty}p_{i}(t)\geq \delta_{i} \mbox{ for any non null initial condition.}
\end{equation*}

\noindent \textnormal{b)} Weakly persistent if the fraction of patches inhabited by the $i$--th species verifies
\begin{equation*}
\limsup\limits_{t\to +\infty}p_{i}(t)>0 \mbox{ for any non null initial condition.}
\end{equation*}

\noindent \textnormal{c)} Strongly driven to the extinction if the fraction of patches inhabited by the $i$--th species verifies
\begin{equation*}
\limsup\limits_{t\to +\infty}p_{i}(t)=0 \mbox{ for any non null initial condition.}
\end{equation*}

\noindent \textnormal{d)} Weakly driven to the extinction if the fraction of patches inhabited by the $i$--th species verifies
\begin{equation*}
\liminf\limits_{t\to +\infty}p_{i}(t)=0 \mbox{ for all non null initial condition.}
\end{equation*}
\end{definition}

We point out that statements a) and b) from Definition \ref{DefUP} are inspired in \cite[pp.175--176]{Thieme} 
and \cite[Def.3.4.2]{Zhao}. On the other hand, the statements c) and d) are not standard in the literature but they allow highlighting that a species can be weakly persistent and
at the same time weakly driven to the extinction.

We will prove that the strong uniform persistence/extinction scenarios of (\ref{tilman1-TV}) will be achieved under the form of a globally attractive positive almost periodic equilibria. In order to describe this convergence, we will use the little--o Landau's notation $f(t)=\textit{o}(1)$ to 
say that $f(t)\to 0$ when $t\to +\infty$. In addition, we will say that
$f(t)$ converges exponentially to zero if $f(t)=\textit{o}(1)$ and there exist $\gamma>0$ and a continuous map
$\mathcal{K}\colon  [0,+\infty)\to (0,+\infty)$ such that:
\begin{equation}
\label{NUEC}
|f(t)|\leq \mathcal{K}(t_{0})e^{-\gamma (t-t_{0})}|f(t_{0})| \quad \textnormal{for any $t\geq t_{0}$}.    
\end{equation}

We stress that the convergence described by (\ref{NUEC}) depends on the elapsed time between 
$t$ and $t_{0}$ but also on 
the initial time $t_{0}$. This type of convergence
is called  \textit{nonuniform exponential convergence} in the literature since the decay is exponential but it is not uniform with respect to the initial time $t_{0}$. 

\subsection{Structure of the article}
This article is on the intersection of theoretical ecology and the qualitative theory of nonautonomous differential equations. Moreover, the proof of our main results use methods and ideas which are current tools in nonautonomous dynamics but are far to be well known in theoretical ecology. In order to alleviate  this problem, we 
{believe} useful to write 
{a thorough} 
Section 2 focused in providing a common basis for both research communities: the property of exponential dichotomy for the scalar case.

The sections 3 and 4 study the model (\ref{tilman1-TV}) with $n=1$ and $n=2$ respectively and describes the possible scenarios of strong persistence/ strong extinction for the metapopulation. The case $n=2$ illustrates a recursive procedure to determine necessary and sufficient conditions for strong persistence/extinction, which is briefly described in the section 5. It is important to emphasize that the above mentioned conditions
will be 
{characterized} 
in terms of the exponential dichotomy of a scalar linear equation combined
with $h\in (0,1)$ as a bifurcation parameter.

The section 6 is devoted to explore some limit cases for $n=1$ and $n=2$ where the strong persistence/extinction scenarios are no longer verified. The resulting behaviors show that the weak extinction is achieved while the weak persistence could be achieved for some cases. 

\section{Exponential dichotomy for scalar equations}

From now on it will be very useful to emphasize that the first equation of (\ref{tilman1-TV}) is of Bernoulli type, this implies that the change of variables $y=1/p_{1}$ transforms it in a non-homogeneous equation of type:
\begin{equation}
\label{sca0}
y'=\mathfrak{a}(t)y+\mathfrak{v}(t),
\end{equation}
whose linear part is the equation 
\begin{equation}
\label{sca1}
x'=\mathfrak{a}(t)x.
\end{equation}

The above mentioned change of variables combined with the hierarchical structure of the system (\ref{tilman1-TV}) will allow us to carry out a recursive study of the qualitative properties of all and each one of the equations of system (\ref{tilman1-TV}). Notice that
the behavior of the $i$--th species is determined by that of the competitively superior species, namely, the $j$--th species (with $j=1,\ldots,i-1$)
combined with the qua\-li\-ta\-tive properties of the $i$--th equation of (\ref{tilman1-TV}), which can be transformed into an equation of type (\ref{sca0}).

A crucial observation will be that, at every stage, the balance between colonization and extinction forces is described by linear equations of type (\ref{sca1}) and the predominance of one of these forces will be equivalent to the property of \textit{exponential dichotomy} of (\ref{sca1}). 

There exists few applications of the exponential dichotomy theory in life sciences and we refer to \cite{GLK,Kloeden2} for seminal examples. Additional applications have been carried out for the study of Lotka--Volterra systems in \cite{GLK} and for sensitivity a\-na\-lysis and total effects in trophic chains in \cite{RRS}. This article can bee seen on this trend.

\subsection{Definition and properties}
The exponential dichotomy \cite{APM,Coppel,Kloeden,Lin2} is a property of linear time varying ODE systems
which emulates the hyperbolicity condition of the autonomous case. 
Surprisingly, there are few descriptions for the scalar case, which can be treated in a simple and self-contained manner. This is the goal of this section.

\begin{definition}
\label{DiEx}
The equation \eqref{sca1} has an exponential dichotomy on the 
unbounded interval $J$ if and only if there exist constants $C\ge 0$ and $\alpha>0$ such that
\begin{equation}
\label{dico}
\left|\int_s^t \mathfrak{a}(r)\,dr 
\right|\ge \alpha(t-s) - C\qquad \textnormal{for any $t\geq s$ with $t,s\in J$}.
\end{equation}
\end{definition}

By redefining the constant $C$ if necessary, it is clear that Definition \ref{DiEx} always corresponds to one of the following situations:

\begin{enumerate}
\item[a)] There  exist constants $C\ge 0$ and $\alpha>0$ such that
\begin{equation}
    \label{proj-id} 
\int_s^t \mathfrak{a}(r)\,dr  \le C - \alpha(t-s) \qquad \textnormal{for any $t\geq s$ with $t,s\in J$}.\end{equation}

\item[b)] 
There  exist constants $C\ge 0$ and $\alpha>0$ such that
\begin{equation}
    \label{proj-null}\int_t^s \mathfrak{a}(r)\,dr  \ge  \alpha(s-t) - C \qquad \textnormal{for any $s\geq t$ with $t,s\in J$}.
    \end{equation}
\end{enumerate}

The above properties are equivalent to the  standard definition of exponential dichotomy:
\begin{itemize}
\item[a')] There exist constants $K\ge 1$ and $\alpha>0$ such that 
\begin{equation}
\label{DEPI}
 e^{\int_{s}^{t}\mathfrak{a}(r)\,dr}\leq Ke^{-\alpha (t-s)} \quad \textnormal{for any $t\geq s$ with $t,s\in J$}.
 \end{equation}
\item[b')] There exist constants $K\ge 1$ and $\alpha>0$ such that 
\begin{equation}
\label{DENP}
e^{-\int_{t}^{s}\mathfrak{a}(r)\,dr}\leq Ke^{-\alpha (s-t)} \quad \textnormal{for any $s\geq t$ with $t,s\in J$}.
\end{equation} 
\end{itemize}

In fact, inspired in the standard definition, the case a) corresponds to the exponential dichotomy  with the \textit{identity projector} while the case b) corresponds to 
the exponential dichotomy  with the \textit{null projector}.

In general, the intervals usually considered in the literature are
$J=(-\infty,0]$, $J=[0,+\infty)$ and $J=(-\infty,+\infty)$. 

\begin{itemize}
\item[$\bullet$] If (\ref{sca1}) has an exponential dichotomy on $J=[0,+\infty)$, the properties (\ref{DEPI})--(\ref{DENP}) 
says that any solution either converges exponentially to zero (identity projector) or diverges exponentially (null projector) as $t\to +\infty$, since (\ref{DENP}) is equivalent to $K^{-1}e^{\alpha(t-t_0)}\leq e^{\int_{t_0}^t\mathfrak{a}(r)\,dr}$ with $t\geq t_{0}\geq 0$. 
\item[$\bullet$] Similarly, if (\ref{sca1}) has an exponential dichotomy on $J=(-\infty,0]$, the properties (\ref{DEPI})--(\ref{DENP}) 
says that any solution either converges exponentially to zero (null projector) or diverges exponentially (identity projector) as $t\to -\infty$, since (\ref{DEPI}) is equivalent to $K^{-1}e^{\alpha(t_0-t)}\leq e^{\int_{t_0}^{t}\mathfrak{a}(r)dr}$ with $t\leq t_0\leq 0$. 
\item[$\bullet$] 
If (\ref{sca1}) has an exponential dichotomy on $J=(-\infty,+\infty)$ then it also has an exponential dichotomy on $(-\infty,0]$
and $[0,+\infty)$ and inherits the above asymptotic behaviors at $\pm \infty$.
\end{itemize}

As we can see, if (\ref{sca1}) has an exponential dichotomy on $J$, its solutions have 
a \textit{dichotomic} asymptotic behavior at $t\to +\infty$ or $t\to -\infty$: they are either convergent or divergent at exponential rate. This qualitative asymptotic behavior is behind the name of exponential dichotomy.

\begin{remark}
\label{BST}
As a consequence of the previous discussion we can see that, if the equation \eqref{sca1} has an exponential dichotomy on $(-\infty,+\infty)$ then
the unique solution bounded on $(-\infty,+\infty)$ is the trivial one. In fact, any other solution will be divergent
at $+\infty$ or $-\infty$. 
\end{remark}

\begin{lemma}
\label{AHIH}
The equation \eqref{sca1} has an exponential dichotomy on $[T,+\infty)$  (\textit{resp}. $(-\infty,-T]$) with $T>0$ if and only
if it has an exponential dichotomy on $[0,+\infty)$ (\textit{resp}. $(-\infty,0]$).
\end{lemma}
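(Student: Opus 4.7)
The plan is to observe that one direction is immediate: since $[T,+\infty) \subset [0,+\infty)$ and the dichotomy inequality of Definition~\ref{DiEx} is required for \emph{all} pairs $s \leq t$ in the interval, any exponential dichotomy on $[0,+\infty)$ restricts to one on $[T,+\infty)$ with the same constants $\alpha$ and $C$. The same remark applies verbatim to $(-\infty,-T] \subset (-\infty,0]$. All the content therefore lies in the extension direction, and I would prove it only for the positive half-line; the negative half-line case follows by a completely analogous argument with the roles of ``past'' and ``future'' interchanged.

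For the extension, suppose~\eqref{sca1} has an exponential dichotomy on $[T,+\infty)$. By the discussion following Definition~\ref{DiEx}, either~\eqref{proj-id} or~\eqref{proj-null} holds on $[T,+\infty)$ with some constants $\alpha>0$ and $C\geq 0$. The key observation is that the continuous function $\mathfrak{a}$ is bounded on the compact interval $[0,T]$; setting $M:=\sup_{r\in[0,T]}|\mathfrak{a}(r)|$, any integral $\int_s^t \mathfrak{a}(r)\,dr$ with $s,t\in[0,T]$ is bounded in absolute value by $MT$, a quantity that can be absorbed into a new dichotomy constant $C'$ without disturbing the exponential rate $\alpha$. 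To make this precise, I would split the pair $0\leq s\leq t$ into three sub-cases: (i) $s,t\in[0,T]$, where the dichotomy inequality reduces to $\alpha(t-s)\leq \alpha T$; (ii) $T\leq s\leq t$, which is exactly the hypothesis; and (iii) $s\in[0,T]$, $t\geq T$, where one writes $\int_s^t\mathfrak{a} = \int_s^T\mathfrak{a} + \int_T^t\mathfrak{a}$, bounds the first summand by $MT$ in absolute value, and applies the hypothesis to the second.

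The only step that is not a trivial absolute-value bound is the straddling sub-case (iii), where one must convert the estimate $-\alpha(t-T)$ coming from the hypothesis into the desired $-\alpha(t-s)$. This is immediate from the identity
\[
-\alpha(t-T) \;=\; -\alpha(t-s) + \alpha(T-s) \;\leq\; -\alpha(t-s) + \alpha T,
\]
which uses only $s\geq 0$. Performed separately for the two sign cases~\eqref{proj-id} and~\eqref{proj-null}, this yields the dichotomy on $[0,+\infty)$ with the enlarged constant $C' := C + MT + \alpha T$ and the same rate $\alpha$. I do not anticipate any conceptual obstacle: the whole argument amounts to bookkeeping of bounded quantities across the compact gap $[0,T]$, and the entire difficulty of extending the dichotomy is neutralised by the boundedness of $\mathfrak{a}$ there.
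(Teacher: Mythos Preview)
Your proposal is correct and follows essentially the same approach as the paper: both arguments split into cases according to the position of $s,t$ relative to $T$, use the boundedness of $\mathfrak{a}$ (or its integrals) on the compact interval $[0,T]$, and absorb the resulting bounded error together with the term $\alpha T$ into an enlarged dichotomy constant while keeping the rate $\alpha$. The only cosmetic differences are that the paper works directly with the absolute-value inequality~\eqref{dico} and bounds $\max_{\theta\in[0,T]}\bigl|\int_\theta^T\mathfrak{a}\bigr|$, whereas you treat the two signed cases~\eqref{proj-id},~\eqref{proj-null} separately and use the coarser bound $MT$ with $M=\sup_{[0,T]}|\mathfrak{a}|$; these are equivalent in substance.
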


\begin{proof}
See Appendix A.
\end{proof}

A noteworthy fact is that a linear equation (\ref{sca1}) can have an exponential dichotomy simultaneously on $(-\infty,0]$
and $[0,+\infty)$ but does not necessarily has an exponential dichotomy on $(-\infty,+\infty)$. For example, consider $\mathfrak{a}\colon \mathbb{R}\to \mathbb{R}$ defined by
\begin{displaymath}
\mathfrak{a}(t)=\left\{\begin{array}{ccl}
1 &\textnormal{if} &  t\leq -T \\
\phi(t)  &\textnormal{if} & t\in [-T,T] \\
-1 &\textnormal{if} &  t\geq T, 
\end{array}\right.
\end{displaymath}
where $T>0$, $\phi\colon [-T,T]\to \mathbb{R}$ is continuous such that $\phi(-T)=1$ and $\phi(T)=-1$.
In fact, it is easy to see that (\ref{DEPI}) is verified on $[T,+\infty)$ 
with $K=1$ and $\alpha=1$, while  (\ref{DENP}) is verified on $(-\infty,-T]$ with the same constants. Now, by Lemma \ref{AHIH}
we have that $x'=a(t)x$ has an exponential dichotomy on $(-\infty,0]$ and $[0,+\infty)$. 
Nevertheless, by using Remark \ref{BST} we have that this equation has not an exponential dichotomy on $(-\infty,+\infty)$ because any solution is bounded on $\mathbb{R}$. Finally, we stress that (\ref{sca1}) has an exponential dichotomy on $(-\infty,+\infty)$
if and only if it also has an exponential dichotomy on $(-\infty,0]$
and $[0,+\infty)$ with the same projector.

\subsection{Roughness, admissibility and spectra}

The next result is known as the \textit{roughness property} and shows that the set of bounded continuous functions such that an exponential dichotomy exists is open. More precisely,
\begin{proposition}
\label{rough-a}
Assume that  \eqref{sca1} has an exponential  
dichotomy on $[0,+\infty)$ with constants $C$, $\alpha$
and let
$\mathfrak{e}\colon [0,+\infty)\to \mathbb{R}$ be a continuous function such that
\begin{displaymath}
\delta_{0}:= \limsup\limits_{t\to +\infty}|\mathfrak{e}(t)|<{\alpha}.
\end{displaymath}
Then the perturbed equation
\begin{displaymath}
\dot{x}=[\mathfrak{a}(t)+\mathfrak{e}(t)]x    
\end{displaymath}
also has an exponential dichotomy on $[0,+\infty)$ with constants $\tilde{C},\alpha_{0}$ where
$\tilde{C}>C$ and $\alpha_{0}$ is any constant satisfying the inequality
$\alpha_0+\delta_{0} < \alpha$.
\end{proposition}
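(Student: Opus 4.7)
The plan is to exploit the equivalent characterizations (\ref{proj-id}) and (\ref{proj-null}) of exponential dichotomy, which in the scalar case reduce the problem to additive perturbation of an integral inequality. I would split into the two projector cases; the arguments are symmetric, so I describe case (a) in detail.

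First, since $\delta_0 < \alpha$ and $\alpha_0$ is any constant with $\alpha_0+\delta_0<\alpha$, I can fix an auxiliary number $\delta_1$ with $\delta_0 < \delta_1 < \alpha - \alpha_0$. By the definition of $\limsup$, there exists $T\ge 0$ such that $|\mathfrak{e}(r)|\le \delta_1$ for every $r\ge T$. This is the only step where $\delta_0<\alpha$ is used, and it is what permits passing from a merely asymptotic bound on $\mathfrak{e}$ to a uniform bound on $[T,+\infty)$.

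Assuming (\ref{proj-id}), for any $t\ge s\ge T$ I would estimate
\begin{equation*}
\int_s^t[\mathfrak{a}(r)+\mathfrak{e}(r)]\,dr \le C - \alpha(t-s) + \delta_1(t-s) = C - (\alpha-\delta_1)(t-s) \le C - \alpha_0(t-s),
\end{equation*}
which is precisely (\ref{proj-id}) for $\mathfrak{a}+\mathfrak{e}$ on $[T,+\infty)$ with constants $C$ and $\alpha_0$. Lemma \ref{AHIH} then transfers this dichotomy to $[0,+\infty)$. The case (\ref{proj-null}) is handled by the symmetric estimate
\begin{equation*}
\int_t^s[\mathfrak{a}(r)+\mathfrak{e}(r)]\,dr \ge \alpha(s-t) - C - \delta_1(s-t) \ge \alpha_0(s-t) - C \qquad (s\ge t\ge T),
\end{equation*}
again followed by Lemma \ref{AHIH}.

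The only mildly delicate point is the size of the new constant $\tilde{C}$: moving from $[T,+\infty)$ back to $[0,+\infty)$ forces one to absorb into $C$ the integrals of $\mathfrak{a}$ and $\mathfrak{e}$ over the compact interval $[0,T]$, which are finite by continuity. This is exactly what the proof of Lemma \ref{AHIH} (stated earlier) does, so no new obstacle appears; the price paid is only a larger constant $\tilde{C}>C$, matching the statement. There is no genuinely hard step here: in the scalar setting, roughness reduces to the triangle inequality applied to $\int_s^t\mathfrak{e}(r)\,dr$, together with the trick of choosing $\delta_1$ strictly between $\delta_0$ and $\alpha-\alpha_0$.
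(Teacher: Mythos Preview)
Your proposal is correct and follows essentially the same approach as the paper: bound $\int_s^t\mathfrak{e}(r)\,dr$ by $\delta_1(t-s)$ on $[T,+\infty)$ via the triangle inequality, then absorb the compact interval $[0,T]$ into the constant. The only organizational differences are that the paper works directly with the absolute-value form (\ref{dico}) rather than splitting into the two projector cases, and it handles the three positional sub-cases $t\ge s\ge T$, $s\le t\le T$, $s\le T<t$ by hand instead of invoking Lemma~\ref{AHIH}; your delegation of that bookkeeping to Lemma~\ref{AHIH} is cleaner and entirely legitimate.
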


\begin{proof}
See Appendix A.
\end{proof}

Now, let $(\mathcal{B},\mathcal{D})$ be a pair of function spaces,
this pair is said
to be \textit{admissible} if for any $v\in \mathcal{D}$, the equation (\ref{sca0}) has --at least-- one solution in the space $\mathcal{B}$. The next result describes the equivalence between the exponential dichotomy on $[0,+\infty)$ with the admissibility property 
when $\mathcal{B}=\mathcal{D}$ is the Banach space of bounded continuous functions on 
$[0,+\infty)$ with the supremum norm:

\begin{proposition} 
\label{Prop-Cop}
Let $t\mapsto \mathfrak{a}(t)$ be a continuous function on $[0,+\infty)$. The following properties are equivalent:
\begin{itemize}
\item[1.-] The equation \eqref{sca1} has an exponential dichotomy on $[0,+\infty)$.
\item[2.-] For any 
bounded and continuous function $\mathfrak{v}\colon [0,+\infty) \to \mathbb{R}$, the problem
\begin{equation}
\label{inhp}
x'=\mathfrak{a}(t)x + \mathfrak{v}(t)    
\end{equation}
has 
at least one bounded and continuous solution on $[0,+\infty)$. 
More precisely, \eqref{proj-id} holds if and only if 
all solutions are bounded for each bounded $\mathfrak{v}(\cdot)$   and
\eqref{proj-null} holds if and only if 
there exists exactly one bounded solution for  each bounded $\mathfrak{v}(\cdot)$.
\item[3.-] The non homogeneous equation 
\begin{equation}
\label{3nh}
z'=\mathfrak{a}(t)z+ 1
\end{equation}
has at least a bounded solution on $[0,+\infty)$. 
\item[4.-] There exists a bounded and continuous function $\mathfrak{v}\colon [0,+\infty)\to \mathbb{R}$ with 
$$
\liminf\limits_{t\to +\infty}|\mathfrak{v}(t)|>0
$$
such that the problem \eqref{inhp} has at least one bounded solution. 
\end{itemize}
\end{proposition}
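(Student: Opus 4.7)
I would prove the cyclic chain $1\Rightarrow 2\Rightarrow 3\Rightarrow 4\Rightarrow 1$. The middle implications $2\Rightarrow 3$ and $3\Rightarrow 4$ are immediate by taking $\mathfrak v\equiv 1$, which is bounded, continuous, and satisfies $\liminf_{t\to+\infty}|\mathfrak v(t)|=1>0$; the genuine analytical content therefore sits in $1\Rightarrow 2$ and in the converse $4\Rightarrow 1$.

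For $1\Rightarrow 2$ I would set $A(t):=\int_{0}^{t}\mathfrak a(r)\,dr$ and invoke the variation-of-constants formula $x(t)=e^{A(t)}x(0)+\int_{0}^{t}e^{A(t)-A(s)}\mathfrak v(s)\,ds$. Under the identity-projector regime \eqref{proj-id}, the factor $e^{A(t)}$ is bounded and \eqref{DEPI} gives $|\int_{0}^{t}e^{A(t)-A(s)}\mathfrak v(s)\,ds|\le K\|\mathfrak v\|_\infty/\alpha$, so every solution of \eqref{inhp} is bounded. Under the null-projector regime \eqref{proj-null}, $e^{A(t)}$ is unbounded, so at most one bounded solution exists; the explicit candidate $x^{*}(t)=-\int_{t}^{\infty}e^{A(t)-A(s)}\mathfrak v(s)\,ds$ is well defined and bounded by $K\|\mathfrak v\|_\infty/\alpha$ thanks to \eqref{DENP}.

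For $4\Rightarrow 1$, the continuity of $\mathfrak v$ combined with $\liminf_{t\to+\infty}|\mathfrak v(t)|>0$ fixes a sign of $\mathfrak v$ on a half-line; via Lemma \ref{AHIH} and a possible simultaneous sign flip of $\mathfrak v$ and $x^{*}$, I may assume $\mathfrak v(t)\ge\delta>0$ and $|x^{*}(t)|\le M$ on $[0,+\infty)$. Since every zero of $x^{*}$ is crossed upwards (because $x^{*\prime}=\mathfrak v>0$ at such points), $x^{*}$ eventually has constant sign, which I fix on $[0,+\infty)$ by a further shift. The substitution $w(t):=x^{*}(t)e^{-A(t)}$ satisfies $w'(t)=\mathfrak v(t)e^{-A(t)}>0$ and is therefore strictly monotone. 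When $x^{*}>0$, the bound $w\le Me^{-A}$ combined with $w'\ge(\delta/M)w$ yields $w(t)\ge w(0)e^{\delta t/M}$ and $e^{A(t)}\le(M/w(0))e^{-\delta t/M}$; coupling this with the uniform lower bound $x^{*}\ge x_{-}>0$ (extracted from the fact that $u:=1/x^{*}$ satisfies $u'\le\|\mathfrak a\|_\infty u-\delta u^{2}$ and is thus uniformly bounded above) gives $e^{A(t)-A(s)}\le(M/x_{-})(w(s)/w(t))\le(M/x_{-})e^{-\delta(t-s)/M}$, which is \eqref{proj-id}. When $x^{*}<0$, a parallel argument rules out $w(\infty)<0$ (otherwise $e^{-A}$ is bounded below and $w$ blows up), forcing $w(\infty)=0$ and $x^{*}(t)=-\int_{t}^{\infty}e^{A(t)-A(s)}\mathfrak v(s)\,ds$; then $u(t):=\int_{t}^{\infty}e^{A(t)-A(s)}\,ds\le M/\delta$, and integrating the identity $(\log u)'=\mathfrak a(t)-1/u(t)$ from $s$ to $t$ delivers the linear lower bound on $\int_{s}^{t}\mathfrak a$ required by \eqref{proj-null}.

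The main obstacle will be securing the uniform lower bounds that close each sign case: $x^{*}\ge x_{-}>0$ in the positive case and $u\ge u_{-}>0$ in the negative case. Both rest on matching the upper bounds just obtained with the local control provided by boundedness (or uniform continuity) of $\mathfrak a$ on compact intervals, which prevents $e^{A(t)-A(s)}$ from collapsing to zero on short intervals and is automatic in the almost-periodic setting of the paper.
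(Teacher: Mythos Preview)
Your plan is correct and the $1\Rightarrow 2$ part matches the paper, but your attack on the converse differs from the authors'. The paper does not prove $4\Rightarrow 1$ directly: instead it first reduces $4\Rightarrow 3$ by a simple comparison (if $\mathfrak v(t)\ge v_0>0$ and $x$ is bounded, then $z(t):=\tfrac{x_0}{v_0}e^{A(t)}+\int_0^t e^{A(t)-A(s)}\,ds$ is a solution of $z'=\mathfrak a z+1$ squeezed between $x/v_0$ and $x/\|\mathfrak v\|_\infty$), and then proves $3\Rightarrow 1$ by the single identity
\[
\ln\frac{z(t)}{z(s)}=\int_s^t\mathfrak a(r)\,dr+\int_s^t\frac{dr}{z(r)},
\]
which immediately turns the two-sided bound $0<c\le|z|\le M$ into \eqref{proj-id} or \eqref{proj-null} with $\alpha=1/M$ and $C=\ln(M/c)$. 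Your route through $w=x^*e^{-A}$ and the Riccati inequality for $1/x^*$ reaches the same conclusion but is considerably more elaborate; the reduction to $\mathfrak v\equiv 1$ is exactly what buys the paper its one-line finish. Note also that the ``main obstacle'' you flag (the uniform lower bounds $x^*\ge x_->0$ and $u\ge u_->0$) is handled in the paper by first showing $\liminf_{t\to+\infty}|z(t)|>0$ via a short monotonicity argument, again using $\|\mathfrak a\|_\infty<\infty$; so both proofs tacitly rely on boundedness of $\mathfrak a$, which is not in the statement but holds in the almost periodic applications.
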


\begin{proof}
See Appendix A.
\end{proof}

We stress the remarkable simplicity of the proof of the above lemma
compared with the 
general $n$--dimensional case which has been addressed from a functional analysis approach in
\cite{Coppel} and variational methods in \cite{Campos}.

\medskip

The exponential dichotomy spectrum has been introduced
by R. Sacker and G. Sell in \cite{SS-1978} in a skew--product framework and revisited
in a simpler form by B. Aulbach and S. Siegmund in \cite{Aulbach, Siegmund-2002}:
\begin{definition}
\label{DEFSPEC}
The exponential dichotomy spectrum of \eqref{sca1} is the set:
\begin{displaymath}
\Sigma_{J}(\mathfrak{a}):=\left\{\lambda \in \mathbb{R} \colon  \dot{x}=[\mathfrak{a}(t)-\lambda]x \quad \textnormal{has not an exponential dichotomy on $J$}\right\}.   
\end{displaymath}
\end{definition}

The above spectrum has a simpler characterization in the scalar case:
\begin{proposition} 
\label{SPEC}
The exponential dichotomy spectrum of \eqref{sca1} is characterized as follows:
\begin{displaymath}
\Sigma_{[0,+\infty)}(\mathfrak{a})=[\beta^{-}(\mathfrak{a}),\beta^{+}(\mathfrak{a})] 
\end{displaymath}
where $\beta^{-}(\mathfrak{a})$ and $\beta^{+}(\mathfrak{a})$ are the lower and upper Bohl exponents, which are 
respectively defined by:
\begin{displaymath}
\beta^{-}(\mathfrak{a}):=\liminf\limits_{L,s\to +\infty}\frac{1}{L}\int_{s}^{s+L}\mathfrak{a}(r)\,dr\quad \textnormal{and} \quad \beta^{+}(\mathfrak{a}):=\limsup\limits_{L,s\to +\infty}\frac{1}{L}\int_{s}^{s+L}\mathfrak{a}(r)\,dr. 
\end{displaymath}
More precisely, 
\begin{itemize}
    \item The case \eqref{proj-id} holds for $\mathfrak{a}(\cdot)-\lambda$ if and only if $\lambda < \beta^{-}(\mathfrak{a})$.
    \item The case \eqref{proj-null} holds for $\mathfrak{a}(\cdot)-\lambda$ if and only if $\lambda > \beta^{+}(\mathfrak{a})$.
\end{itemize}
\end{proposition}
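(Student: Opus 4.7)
The plan is to characterize the resolvent set $\mathbb{R}\setminus\Sigma_{[0,+\infty)}(\mathfrak{a})$ as the disjoint union of the two open rays $(-\infty,\beta^{-}(\mathfrak{a}))$ and $(\beta^{+}(\mathfrak{a}),+\infty)$, matching each ray to the corresponding dichotomy case in the statement. The first step is to rewrite the dichotomy inequalities \eqref{proj-id} and \eqref{proj-null}, applied to the shifted coefficient $\mathfrak{a}(\cdot)-\lambda$, in terms of the running average $\Phi(s,t):=\frac{1}{t-s}\int_{s}^{t}\mathfrak{a}(r)\,dr$; each case then becomes a uniform linear comparison between $\Phi(s,t)-\lambda$ (resp. $\Phi(t,s)-\lambda$) and a constant $\mp\alpha$, up to the decaying correction $C/(t-s)$.

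For the forward direction (``dichotomy $\Rightarrow$ Bohl inequality'') I would fix $s$, let $L:=t-s\to+\infty$, and then take the joint limit $s\to+\infty$. The term $C/L$ vanishes, and the appropriate supremum or infimum of $\Phi$ over long windows becomes comparable to $\lambda$, producing the strict comparison between $\lambda$ and $\beta^{-}(\mathfrak{a})$ or $\beta^{+}(\mathfrak{a})$ that the statement asserts.

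The converse direction (``Bohl inequality $\Rightarrow$ dichotomy'') is the main technical hurdle, because $\beta^{\pm}(\mathfrak{a})$ are purely asymptotic quantities while the dichotomy inequalities must hold for \emph{every} pair $t\geq s\geq 0$. The plan is: (i) exploit the fact that the liminf / limsup in the definition of $\beta^{\pm}(\mathfrak{a})$ is taken \emph{uniformly} in $s$ to pick a threshold $T>0$ such that the desired inequality on $\Phi$ holds for all $s,L\geq T$; (ii) translate this uniform control into an inequality of the form \eqref{proj-id} or \eqref{proj-null} for $\mathfrak{a}-\lambda$ on the half-line $[T,+\infty)$, with a slightly smaller $\alpha$ to absorb the correction; (iii) handle $s\in[0,T]$ by using that $\mathfrak{a}$ is bounded on the compact interval $[0,T]$ (a standing assumption in the paper's almost periodic setting), so that $\int_{s}^{t}\mathfrak{a}$ for $s\in[0,T]$ differs from $\int_{T}^{t}\mathfrak{a}$ by a quantity absorbable into the constant $C$; (iv) invoke Lemma \ref{AHIH} to promote the dichotomy from $[T,+\infty)$ to $[0,+\infty)$.

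Combining the two equivalences, the resolvent set equals $(-\infty,\beta^{-}(\mathfrak{a}))\cup(\beta^{+}(\mathfrak{a}),+\infty)$. For the complementary assertion, any $\lambda\in[\beta^{-}(\mathfrak{a}),\beta^{+}(\mathfrak{a})]$ forces the averages $\Phi(s,s+L)-\lambda$ to take values of both signs on arbitrarily large windows (by the very definition of the liminf and limsup), which rules out both the uniform upper bound \eqref{proj-id} and the uniform lower bound \eqref{proj-null} for $\mathfrak{a}-\lambda$. Hence $\Sigma_{[0,+\infty)}(\mathfrak{a})=[\beta^{-}(\mathfrak{a}),\beta^{+}(\mathfrak{a})]$, finishing the proof.
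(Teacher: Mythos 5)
Your plan reproduces the paper's own proof in essentially the same form: both arguments characterize the resolvent set $\mathbb{R}\setminus\Sigma_{[0,+\infty)}(\mathfrak{a})$ as $(-\infty,\beta^-(\mathfrak{a}))\cup(\beta^+(\mathfrak{a}),+\infty)$ by matching each ray to one dichotomy case, prove the forward direction by dividing the dichotomy inequality by $t-s$ and taking the joint limit $s,t-s\to+\infty$, and prove the converse by extracting from the Bohl-exponent inequality a uniform estimate valid once $t$ and $t-s$ exceed thresholds $t_0,L_0$, then absorbing the finitely many excluded configurations into the constant $C$. The only stylistic difference is that you delegate the extension from $[T,+\infty)$ to $[0,+\infty)$ to Lemma \ref{AHIH}, whereas the paper redoes the case analysis inline (cases a), b), c) in Appendix A); these are interchangeable, and in fact your steps (iii) and (iv) are redundant with each other since (iv) alone suffices once (ii) gives the dichotomy past $T$.

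Two small cautionary notes for when you fill in details. First, be careful with the labeling: working it out, \eqref{proj-id} for $\mathfrak{a}-\lambda$ (a negative-in-average coefficient) forces $\lambda>\beta^+(\mathfrak{a})$, and \eqref{proj-null} forces $\lambda<\beta^-(\mathfrak{a})$; the paper's appendix proof gets this right, but the itemized claims inside the Proposition statement have the two cases interchanged, so do not copy them blindly. Second, your phrase ``a slightly smaller $\alpha$ to absorb the correction'' is not quite the right mechanism: for $t-s<L_0$ the correction $C/(t-s)$ does not shrink, and what actually saves the inequality there is taking the additive constant $C$ large (of order $(\|\mathfrak{a}-\lambda\|_\infty+\alpha)L_0$), exactly as in case a) of the paper's argument. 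Neither point changes the overall correctness of your route, which is the same as the paper's.
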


\begin{proof}
See Appendix A.
\end{proof} 

We refer the reader to \cite{barabanov2001} and \cite{Dalecki} for more details about the Bohl exponents.

\subsection{The almost periodic case}
When $\mathfrak{a}(\cdot)$ is almost periodic, it is well known that $\Sigma_{\mathbb{R}}(\mathfrak{a})=\Sigma_{[0,+\infty)}(\mathfrak{a})$ and we refer the reader to \cite[Remark 5.9]{Kloeden} for details. In addition, the average (\ref{average-a}) coincides with the upper and lower Bohl exponents
which, in turn, implies that the exponential dichotomy spectrum is the singleton:
$$
\Sigma_{\mathbb{R}}(\mathfrak{a})=\{\mathcal{M}[\mathfrak{a}]\}.
$$

The above identity combined with $\lambda=0$ in Definition \ref{DEFSPEC} implies directly that:
\begin{proposition}
\label{AVED}
Let $\mathfrak{a}\colon \mathbb{R}\to \mathbb{R}$ be continuous and almost periodic. 
Then \eqref{sca1} has an exponential dichotomy if and only if
 $\mathcal{M}[\mathfrak{a}]\neq 0$.
 In particular: 
\begin{itemize}
\item[a)] The property \eqref{proj-id} is verified on $\mathbb{R}$ if and only if $\mathcal{M}[\mathfrak{a}]<0$,
\item[b)] The property \eqref{proj-null} is verified on $\mathbb{R}$ if and only if  $\mathcal{M}[\mathfrak{a}]>0$.
\end{itemize}
\end{proposition}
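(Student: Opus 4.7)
The plan is to reduce the proposition to the two ingredients already developed in the section: the characterization of the exponential dichotomy spectrum via Bohl exponents from Proposition \ref{SPEC}, and the (classical) fact that for an almost periodic $\mathfrak{a}$ the Bohl exponents collapse to the average $\mathcal{M}[\mathfrak{a}]$. Once those are in place the statement becomes a one-line specialization at $\lambda=0$.

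First, I would argue that $\beta^{-}(\mathfrak{a})=\beta^{+}(\mathfrak{a})=\mathcal{M}[\mathfrak{a}]$. The key feature is the \emph{uniformity in the basepoint} of the limit in (\ref{average-a}): for every $\varepsilon>0$ there exists $L_{\varepsilon}>0$ such that
\[
\Big|\tfrac{1}{L}\int_{s}^{s+L}\mathfrak{a}(r)\,dr-\mathcal{M}[\mathfrak{a}]\Big|<\varepsilon\qquad\text{for all }L\ge L_{\varepsilon}\text{ and all }s\in\mathbb{R}.
\]
Taking $\liminf$ and $\limsup$ in $L,s\to+\infty$ forces both Bohl exponents to coincide with $\mathcal{M}[\mathfrak{a}]$. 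Combined with Proposition \ref{SPEC} and the already-recalled identity $\Sigma_{\mathbb{R}}(\mathfrak{a})=\Sigma_{[0,+\infty)}(\mathfrak{a})$ in the almost periodic case, this gives $\Sigma_{\mathbb{R}}(\mathfrak{a})=\{\mathcal{M}[\mathfrak{a}]\}$.

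Second, for the main equivalence, specialize Definition \ref{DEFSPEC} at $\lambda=0$: the equation (\ref{sca1}) has an exponential dichotomy on $\mathbb{R}$ if and only if $0\notin\Sigma_{\mathbb{R}}(\mathfrak{a})=\{\mathcal{M}[\mathfrak{a}]\}$, i.e.\ if and only if $\mathcal{M}[\mathfrak{a}]\neq 0$.

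Third, to separate cases (a) and (b), I would go directly back to the definitions using the uniform estimate above. If $\mathcal{M}[\mathfrak{a}]<0$, pick any $\alpha\in(0,-\mathcal{M}[\mathfrak{a}])$, choose $L_{0}$ so that $\int_{s}^{s+L}\mathfrak{a}(r)\,dr\le -\alpha L$ for all $L\ge L_{0}$ and all $s\in\mathbb{R}$, and absorb the bounded short-time contribution (controlled by $\|\mathfrak{a}\|_{\infty}L_{0}$) into a constant $C\ge 0$; this yields (\ref{proj-id}) on $\mathbb{R}$. The case $\mathcal{M}[\mathfrak{a}]>0$ is symmetric and gives (\ref{proj-null}). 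The converse implications in (a) and (b) are already covered by the spectral part together with the sign of $\mathcal{M}[\mathfrak{a}]$, since the two alternatives are mutually exclusive whenever the dichotomy exists. I do not anticipate a genuine obstacle here: everything reduces to the uniform convergence property of almost periodic averages.
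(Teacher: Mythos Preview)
Your proposal is correct and follows essentially the same route as the paper: the paper does not give a standalone proof of Proposition~\ref{AVED} but instead notes, just before the statement, that in the almost periodic case $\Sigma_{\mathbb{R}}(\mathfrak{a})=\Sigma_{[0,+\infty)}(\mathfrak{a})=\{\mathcal{M}[\mathfrak{a}]\}$ and then says the identity combined with $\lambda=0$ in Definition~\ref{DEFSPEC} yields the proposition directly. Your write-up supplies the missing justifications (the uniform-in-basepoint argument for $\beta^{\pm}(\mathfrak{a})=\mathcal{M}[\mathfrak{a}]$ and the explicit short-time/long-time split for parts (a) and (b)), so it is in fact more detailed than what the paper provides, but the strategy is identical.
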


The next result is about almost periodic admissibility. If the linear almost periodic equation (\ref{sca1})
has an exponential dichotomy on $\mathbb{R}$ and $\mathfrak{v}(\cdot)$ is almost periodic then the non homogeneous equation (\ref{sca0}) has a unique almost periodic solution a we refer to \cite{Fink} for a proof.
\begin{proposition}
\label{AVAD}
If $\mathfrak{a}\colon \mathbb{R}\to \mathbb{R}$ is a continuous and almost periodic function
with $\mathcal{M}[\mathfrak{a}]\neq 0$ then, for any almost periodic function $t\mapsto \mathfrak{v}(t)$,
the non homogeneous  equation \eqref{sca0} has a unique almost periodic solution $t\mapsto y^{*}(t)$. In particular: 
\begin{itemize}
\item[a)] When $\mathcal{M}[\mathfrak{a}]<0$, the solution $t\mapsto y^{*}(t)$ is given by
$$
y^{*}(t)=\int_{-\infty}^{t}e^{\int_{s}^{t}\mathfrak{a}(r)\,dr}\mathfrak{v}(s)\,ds,
$$
\item[b)] When $\mathcal{M}[\mathfrak{a}]>0$, the solution $t\mapsto y^{*}(t)$ is given by
$$
y^{*}(t)=-\int_{t}^{\infty}e^{\int_{s}^{t}\mathfrak{a}(r)\,dr}\mathfrak{v}(s)\,ds.
$$
\end{itemize}
\end{proposition}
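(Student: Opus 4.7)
The plan is to combine Proposition \ref{AVED} and Remark \ref{BST} with a translation-invariance argument to produce, and then show almost periodicity of, the bounded solution.

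First, Proposition \ref{AVED} tells us that \eqref{sca1} has an exponential dichotomy on $\mathbb{R}$: in case (a), the estimate \eqref{DEPI} gives $e^{\int_{s}^{t}\mathfrak{a}(r)\,dr}\le K e^{-\alpha(t-s)}$ for $t\ge s$, while in case (b), \eqref{DENP} gives $e^{\int_{s}^{t}\mathfrak{a}(r)\,dr}\le K e^{-\alpha(s-t)}$ for $s\ge t$. In each case the integrand in the corresponding formula for $y^{*}(t)$ is dominated by $K\|\mathfrak{v}\|_{\infty}e^{-\alpha|t-s|}$, so $y^{*}$ is well-defined, continuous, and bounded on $\mathbb{R}$ with $\|y^{*}\|_{\infty}\le K\|\mathfrak{v}\|_{\infty}/\alpha$. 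A direct application of the Leibniz rule (the exponential factor contributes $\mathfrak{a}(t)y^{*}(t)$ upon $\partial_{t}$, and the boundary term produces $\mathfrak{v}(t)$, with the correct sign in case (b) coming from the minus sign and the reversal of limits) shows that $y^{*}$ satisfies \eqref{sca0}. Uniqueness among bounded solutions follows from Remark \ref{BST}: the difference of two bounded solutions solves the homogeneous equation \eqref{sca1}, and since that equation has an exponential dichotomy on $\mathbb{R}$, the only bounded solution is trivial.

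It remains to establish almost periodicity. I would invoke the joint almost periodicity of $\mathfrak{a}$ and $\mathfrak{v}$: for every $\varepsilon>0$ one can find a relatively dense set of common $\varepsilon$-almost periods $\tau$. After the substitution $s\mapsto s-\tau$, in case (a) the difference $y^{*}(t+\tau)-y^{*}(t)$ rewrites as
\begin{equation*}
y^{*}(t+\tau)-y^{*}(t)=\int_{-\infty}^{t}\Bigl[e^{\int_{s}^{t}\mathfrak{a}(r+\tau)\,dr}\mathfrak{v}(s+\tau)-e^{\int_{s}^{t}\mathfrak{a}(r)\,dr}\mathfrak{v}(s)\Bigr]ds.
\end{equation*}
Splitting the bracket into the two natural telescoping pieces and using $|\mathfrak{a}(r+\tau)-\mathfrak{a}(r)|<\varepsilon$, $|\mathfrak{v}(s+\tau)-\mathfrak{v}(s)|<\varepsilon$, together with the elementary inequality $|e^{A_{\tau}}-e^{A}|\le e^{\max(A,A_{\tau})}|A_{\tau}-A|$, one controls the first piece by $K\varepsilon(t-s)e^{-(\alpha-\varepsilon)(t-s)}$ and the second by $K\varepsilon e^{-\alpha(t-s)}$; integrating in $s$ yields $|y^{*}(t+\tau)-y^{*}(t)|\le C\varepsilon$ uniformly in $t$. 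This is the Bohr condition, so $y^{*}$ is almost periodic. Case (b) is handled symmetrically, integrating forward from $t$ instead of backward.

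The main obstacle I anticipate is precisely the translation estimate above: the linearly growing factor $(t-s)\varepsilon$ coming from the difference of exponentials must be absorbed by the dichotomy decay $e^{-\alpha(t-s)}$, which forces the initial $\varepsilon$ to be chosen smaller than $\alpha$ and requires bounding $\int_{0}^{\infty}u e^{-(\alpha-\varepsilon)u}\,du$ uniformly in $t$. Once this is controlled, existence, uniqueness, the explicit representations, and almost periodicity all fit together; as an alternative, one can bypass the direct estimate by citing the known admissibility result for almost periodic systems recorded in \cite{Fink}, which is built on exactly this calculation.
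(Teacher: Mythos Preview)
Your argument is correct and complete: the dichotomy bound makes the integral formulas converge and solve \eqref{sca0}, Remark \ref{BST} gives uniqueness, and your translation estimate with common $\varepsilon$-almost periods of $\mathfrak{a}$ and $\mathfrak{v}$ establishes the Bohr property (the concern you flag about the factor $(t-s)\varepsilon$ is harmless once $\varepsilon<\alpha$, since $\int_{0}^{\infty}u\,e^{-(\alpha-\varepsilon)u}\,du=(\alpha-\varepsilon)^{-2}$ is finite and independent of $t$).

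By way of comparison, the paper does not actually prove this proposition: it simply refers the reader to \cite{Fink} for a proof. What you have written out is essentially the classical admissibility argument that underlies that reference, so your proposal is not a different route but rather an explicit version of the proof the paper chose to outsource. The only minor point worth tightening is the phrase ``relatively dense set of common $\varepsilon$-almost periods'': this is a standard fact for finitely many almost periodic functions, but since the paper's preliminaries do not state it, you may want to cite it explicitly (it appears, for instance, in \cite{Bohr} or \cite{Fink}).
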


 A direct consequence of Propositions \ref{AVED} and \ref{AVAD} is the following
result:
\begin{corollary}
\label{cor1}
\label{atractivite}
Let us consider the non homogeneous equation \eqref{sca0}, where
$\mathfrak{a}(\cdot)$ and $\mathfrak{v}(\cdot)$ are continuous and almost periodic functions such that $\mathcal{M}[\mathfrak{a}]\neq 0$. Then: 
\begin{itemize}
\item[a)] If $\mathcal{M}[\mathfrak{a}]<0$, then there exists $K\geq 1$ and $\alpha>0$ such that any solution $t\mapsto y(t)$ of \eqref{sca0}
verifies
$$
|y(t)-y^{*}(t)|\leq Ke^{-\alpha(t-t_{0})}|y(t_0)-y^{*}(t_{0})| \quad \textnormal{for any 
$t\geq t_0$}.
$$
\item[b)] If $\mathcal{M}[\mathfrak{a}]>0$, then there exists $K\geq 1$ and $\alpha>0$ such that any solution $t\mapsto y(t)$ of \eqref{sca0}
verifies
$$
|y(t)-y^{*}(t)|\leq Ke^{-\alpha(t_{0}-t)}|y(t_0)-y^{*}(t_{0})| \quad \textnormal{for any 
$t\leq t_0$}.
$$
\end{itemize}
\end{corollary}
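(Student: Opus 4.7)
The plan is to reduce everything to the homogeneous equation \eqref{sca1} by subtracting the almost periodic solution $y^{*}(\cdot)$ provided by Proposition \ref{AVAD} and then apply the dichotomy estimates \eqref{DEPI}--\eqref{DENP} directly. Because both items (a) and (b) follow the same template, I would treat them together and only separate the two cases at the very end, when reading off the correct inequality from the dichotomy.

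First I would set $z(t):=y(t)-y^{*}(t)$, where $y(\cdot)$ is an arbitrary solution of \eqref{sca0} and $y^{*}(\cdot)$ is the unique almost periodic solution of \eqref{sca0} granted by Proposition \ref{AVAD} (whose existence requires exactly the hypothesis $\mathcal{M}[\mathfrak{a}]\neq 0$). By linearity, $z(\cdot)$ solves the homogeneous equation \eqref{sca1}, hence
\begin{displaymath}
z(t)=\exp\!\Bigl(\textstyle\int_{t_{0}}^{t}\mathfrak{a}(r)\,dr\Bigr)\,z(t_{0})\qquad\text{for all }t,t_{0}\in\mathbb{R}.
\end{displaymath}

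Next I would invoke Proposition \ref{AVED} to obtain that \eqref{sca1} has an exponential dichotomy on $\mathbb{R}$: with identity projector (\eqref{proj-id}, equivalently \eqref{DEPI}) if $\mathcal{M}[\mathfrak{a}]<0$, and with null projector (\eqref{proj-null}, equivalently \eqref{DENP}) if $\mathcal{M}[\mathfrak{a}]>0$. This supplies constants $K\ge 1$ and $\alpha>0$ independent of the chosen solution.

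In case (a), taking $s=t_{0}\le t$ in \eqref{DEPI} gives $\exp\!\bigl(\int_{t_{0}}^{t}\mathfrak{a}(r)\,dr\bigr)\le K e^{-\alpha(t-t_{0})}$, which multiplied by $|z(t_{0})|$ is exactly the desired inequality for $t\ge t_{0}$. In case (b), for $t\le t_{0}$ I rewrite $\int_{t_{0}}^{t}\mathfrak{a}(r)\,dr=-\int_{t}^{t_{0}}\mathfrak{a}(r)\,dr$ and apply \eqref{DENP} with $s=t_{0}\ge t$ to get $\exp\!\bigl(-\int_{t}^{t_{0}}\mathfrak{a}(r)\,dr\bigr)\le K e^{-\alpha(t_{0}-t)}$, yielding the announced bound.

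There is no real obstacle here: once the decomposition $y=y^{*}+z$ is in place, the whole argument is an algebraic reading of the dichotomy inequalities, and the only point requiring a little attention is the sign flip in the exponent when the direction of integration is reversed in item (b).
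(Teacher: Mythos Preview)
Your proof is correct and is exactly the argument the paper has in mind: the corollary is stated in the paper as ``a direct consequence of Propositions \ref{AVED} and \ref{AVAD}'' with no further details, and your write-up simply spells out that direct consequence by subtracting $y^{*}$ and reading off the dichotomy bounds \eqref{DEPI}--\eqref{DENP}.
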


\section{The dynamics of the superior competitor ($n=1$)}
When $n=1$, the system (\ref{tilman1-TV}) becomes the scalar differential equation
\begin{equation}
\label{EEN1}
p_{1}'=c_1(t) p_1\left(h-p_{1} \right)-m_1(t)p_1,
\end{equation}
which is the Levins model (\ref{levins-0}) with almost periodic colonization/extinction rates and habitat destruction. The study of this 
equation will be addressed by using the change of variables $u=1/p_{1}$, which transforms (\ref{EEN1}) into 
    \begin{equation}
    \label{lineaire}
    u'= a_{11}(t)u + c_{1}(t),
    \end{equation}
    where the almost periodic function $t\mapsto a_{11}(t)$ is defined by 
\begin{equation}
\label{variables-aux}
 a_{11}(t)= -[c_1(t)h-m_1(t)]. 
\end{equation} 

\subsection{Preliminiaries and main result}
In this section, we will see that the strong persistence and/or strong extinction of the single species metapopulation described by (\ref{EEN1}) is determined by the properties of the linear part of (\ref{lineaire}), namely, the differential equation
\begin{equation}
    \label{lineaire-hom}
    v'= a_{11}(t)v.
    \end{equation}

From an ecological point of view, the almost periodic equation (\ref{lineaire-hom}) describes the balance between the opposing forces of colonization and extinction. The predominance of one of these forces is determined by the sign of the average $\mathcal{M}[a_{11}]$:
\begin{itemize}
\item[$\bullet$] The property $\mathcal{M}[a_{11}]<0$ means that $h\mathcal{M}[c_{1}]>\mathcal{M}[m_{1}]$, that is, the colonization term $t\mapsto h c_{1}(t)$ is dominant --in average-- over the extinction rate $m_{1}(t)$. Hence, the strong persistence of the metapopulation is expected. 
\item[$\bullet$] The property $\mathcal{M}[a_{11}]>0$ means that $h\mathcal{M}[c_{1}]<\mathcal{M}[m_{1}]$, that is, the extinction rate dominates --in average-- over the colonization $t\mapsto h c_{1}(t)$ and the strong extinction of the metapopulation is expected.
\end{itemize}

The above results also can be interpreted from an habitat destruction perspective by considering
the \textit{habitability threshold}:
\begin{equation}
\mathcal{R}_{0}=\frac{\mathcal{M}[m_{1}]}{\mathcal{M}[c_{1}]},
\end{equation}
which has been used by R.S. Etienne in \cite{Etienne} and P. Marquet in \cite{Marquet}. We will see that $\mathcal{R}_{0}$ provides a lower bound 
of habitable patches ensuring the persistence
of the metapopulation. In fact, it is straightforward to verify that:
\begin{equation}
\label{resumen1}
\mathcal{M}[a_{11}]<0 \iff h\mathcal{M}[c_1] >\mathcal{M}[m_1] \iff \mathcal{R}_{0}<h,
\end{equation}
and 
\begin{equation}
\label{resumen2}
\mathcal{M}[a_{11}]>0 \iff h\mathcal{M}[c_1] <\mathcal{M}[m_1] \iff h<\mathcal{R}_{0}.
\end{equation}

The equation (\ref{resumen1}) means that the predominance of colonization over the extinction forces is equivalent to a fraction $h$ of habitable patches larger than the habitability threshold $\mathcal{R}_{0}$. Analogously, (\ref{resumen2}) means that subordination of the colonization forces towards the extinction ones is equivalent to a fraction $h$ of habitable patches inferior than the habitability threshold.

The next result will state that the strong persistence of the metapopulation is described by the existence of an almost periodic non autonomous equilibrium $t\mapsto p_{1,1}^{\ast}(t)$ which is exponentially attractive while the strong extinction
is described by the exponential stability of the trivial solution $p(t)=0$. In addition, both possible scenarios are determined by the habitability threshold
$\mathcal{R}_{0}$. The proof is based on the fact that the above mentioned equivalences (\ref{resumen1})--(\ref{resumen2}) are also equivalent to the exponential dichotomy of the linear equation (\ref{lineaire-hom}).

\begin{theorem}
\label{Prop-1}
If the colonization and extinction rates are such that $\mathcal{R}_{0}<h$,
then the fraction of occupied patches by the species 1 at time $t$, namely $t\mapsto p_{1}(t)$,
is strongly persistent and verifies an exponential convergence:
\begin{equation}
\label{sol1}
\lim\limits_{t\to +\infty}|p_{1}(t)-p_{1,1}^{*}(t)|=0,
\end{equation}
where
$t\mapsto p_{1,1}^{*}(t)$ is the almost periodic function:
\begin{equation}
\label{sol-periodique}
p_{1,1}^{*}(t)=\left(\int_{-\infty}^{t}c_{1}(s) e^{\int_s^t a_{11}(r)\,dr}\,ds
    \right)^{-1}.
\end{equation}
On the other hand, if the colonization and extinction rates are such that 
$h< \mathcal{R}_{0}$,
then the fraction of occupied patches decays exponentially towards zero when $t\to +\infty$.
\end{theorem}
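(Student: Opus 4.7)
The strategy rests on the Bernoulli substitution $u=1/p_{1}$ introduced in \eqref{lineaire}: the scalar equation \eqref{EEN1} is equivalent to the non homogeneous linear equation $u'=a_{11}(t)u+c_{1}(t)$, whose homogeneous part is \eqref{lineaire-hom}. Since $a_{11}(t)=m_{1}(t)-hc_{1}(t)$, the equivalences \eqref{resumen1}--\eqref{resumen2} show that $\mathcal{R}_{0}<h$ corresponds to $\mathcal{M}[a_{11}]<0$ and $h<\mathcal{R}_{0}$ to $\mathcal{M}[a_{11}]>0$. Proposition \ref{AVED} then endows \eqref{lineaire-hom} with an exponential dichotomy on $\mathbb{R}$ with, respectively, the identity or the null projector, which is the single engine driving both halves of the proof.

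For the persistence regime $\mathcal{R}_{0}<h$, I would apply Proposition \ref{AVAD}(a) with forcing $\mathfrak{v}=c_{1}$ to obtain the unique almost periodic solution $u^{\ast}(t)=\int_{-\infty}^{t}e^{\int_{s}^{t}a_{11}(r)\,dr}c_{1}(s)\,ds$ of \eqref{lineaire}. Using the dichotomy bound $e^{\int_{s}^{t}a_{11}}\leq K e^{-\alpha(t-s)}$ and the boundedness of $c_{1}$, I obtain an upper estimate $u^{\ast}(t)\leq M^{\ast}:=K\|c_{1}\|_{\infty}/\alpha$. A positive lower bound is produced by truncating the defining integral to the unit window $[t-1,t]$ and using the assumption \eqref{infap} together with the boundedness of $a_{11}$, which yields $u^{\ast}(t)\geq m^{\ast}:=c_{1}^{-}e^{-\|a_{11}\|_{\infty}}>0$. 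Consequently $p_{1,1}^{\ast}:=1/u^{\ast}$ is a well defined, positive, almost periodic function satisfying the formula \eqref{sol-periodique}.

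The exponential convergence \eqref{sol1} then follows by taking a positive solution $p_{1}$ of \eqref{EEN1} and setting $u=1/p_{1}$; Corollary \ref{cor1}(a) gives $|u(t)-u^{\ast}(t)|\leq K e^{-\alpha(t-t_{0})}|u(t_{0})-u^{\ast}(t_{0})|$, and one translates this back via
\begin{equation*}
|p_{1}(t)-p_{1,1}^{\ast}(t)|=\frac{|u(t)-u^{\ast}(t)|}{u(t)\,u^{\ast}(t)}.
\end{equation*}
Here Lemma \ref{L1} is decisive: it forces $p_{1}(t)<h$ for all $t\geq t_{0}$, hence $u(t)>1/h$, so that $u(t)u^{\ast}(t)\geq m^{\ast}/h$ and the exponential rate is preserved. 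Strong uniform persistence follows at once since $p_{1,1}^{\ast}(t)\geq 1/M^{\ast}>0$ uniformly in $t$.

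For the extinction regime $h<\mathcal{R}_{0}$ I would avoid inverting the solution and argue directly on \eqref{EEN1} written as $p_{1}'=-a_{11}(t)p_{1}-c_{1}(t)p_{1}^{2}$. Dropping the nonpositive quadratic term gives the differential inequality $p_{1}'\leq -a_{11}(t)p_{1}$, which integrates to $p_{1}(t)\leq p_{1}(t_{0})\exp(-\int_{t_{0}}^{t}a_{11}(r)\,dr)$. The null projector estimate \eqref{proj-null} bounds the exponent above by $C-\alpha(t-t_{0})$, yielding $p_{1}(t)\leq e^{C}p_{1}(t_{0})e^{-\alpha(t-t_{0})}$, which is the announced exponential decay. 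The only delicate step in the whole argument is securing the uniform positive lower bound $m^{\ast}$ for $u^{\ast}$, as without it the passage from $|u-u^{\ast}|$ back to $|p_{1}-p_{1,1}^{\ast}|$ would be merely qualitative; this is precisely the role played by hypothesis \eqref{infap}.
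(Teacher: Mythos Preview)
Your argument is correct. The persistence half follows essentially the paper's route: Bernoulli substitution, exponential dichotomy with identity projector via Proposition~\ref{AVED}, the almost periodic solution $u^{\ast}$ from Proposition~\ref{AVAD}, and Corollary~\ref{cor1} for the exponential convergence, then inverting back to $p_{1}$. The technical details differ only slightly: the paper obtains the lower bound $\inf u^{\ast}\geq c_{1}^{-}/(h\|c_{1}\|_{\infty})$ by using $m_{1}>0$ directly, whereas you truncate the defining integral to a unit window; the paper also cites Besicovitch explicitly for the almost periodicity of $1/u^{\ast}$ and adds a short argument showing $p_{1,1}^{\ast}(t)<h$ via local maxima, which you omit (and which the theorem statement does not strictly require). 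The genuine difference is in the extinction half: the paper stays with $u=1/p_{1}$ and shows $u(t)\geq K^{-1}e^{\alpha(t-t_{0})}$ using the null-projector dichotomy combined with the positivity of the forcing $c_{1}$, while you bypass the substitution entirely and use the comparison $p_{1}'\leq -a_{11}(t)p_{1}$ together with \eqref{proj-null}. Your route is shorter and more transparent, since it avoids the discussion of bounded solutions of \eqref{lineaire} with negative initial data; the paper's route has the mild advantage of keeping the whole proof inside the linear framework already set up for the persistence case.
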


\begin{proof}
As we know by (\ref{resumen1}) that $\mathcal{R}_{0}<h$ is equivalent to $\mathcal{M}[a_{11}]<0$, the statement a) of Proposition
\ref{AVED} says that the linear equation (\ref{lineaire-hom}) has an exponential dichotomy on the real line
with the identity projector and constants $K\geq 1$ and $\alpha>0$. Moreover, the statement a) of Proposition
\ref{AVAD} says that the inhomogenous  equation (\ref{lineaire}) has a unique almost periodic 
solution $t\mapsto u^{*}(t)$ described by 
\begin{displaymath}
u^{*}(t)=\int_{-\infty}^{t}c_{1}(s) e^{\int_s^t a_{11}(r)\,dr}\,ds,
\end{displaymath}
while the Corollary \ref{cor1} says that any other solution $t\mapsto u(t)$ of (\ref{lineaire}) verifies
$$
|u(t)-u^{*}(t)|\leq Ke^{-\alpha (t-t_{0})}|u(t_{0})-u^{*}(t_{0})| \quad \textnormal{for any $t\geq t_{0}$}.
$$

By using (\ref{infap}) and the positiveness of $m_{1}(\cdot)$ we can prove that 
$$
\inf\limits_{-\infty <t<+\infty}u^{*}(t)\geq \frac{c_{1}^{-}}{h||c_{1}||_{\infty}},
$$
and by using Theorem 7 from \cite{Besicovitch} it follows that $t\mapsto p_{1,1}^{*}(t)=1/u^{*}(t)$
is almost periodic. Moreover, as $p_1(t)=1/u(t)\leq h$, this implies that
\begin{displaymath}
\begin{array}{rcl}
|p_{1}(t)-p_{1,1}^{*}(t)|&=& \displaystyle \left| \frac{1}{u(t)}-\frac{1}{u^{*}(t)}\right|\\\\
&=& \displaystyle p_1(t)p_{1,1}^{*}(t)|u(t)-u^{*}(t)|\\\\
&\leq& \displaystyle Ke^{-\alpha (t-t_{0})}|u(t_{0})-u^{*}(t_{0})|,
\end{array}
\end{displaymath}
then we can deduce
\begin{displaymath}
|p_{1}(t)-p_{1,1}^{*}(t)|\leq \frac{K}{p_{1}(t_{0})p_{1,1}^{*}(t_{0})}
e^{-\alpha(t-t_{0})}|p_{1}(t_{0})-p_{1,1}^{*}(t_{0})|   
\end{displaymath}
and the exponential convergence follows.

{We point out that $(\ref{sol-periodique})$ does not provide information
about whether or not $p_{1,1}^{*}(0)\in (0,h)$ is satisfied. Nevertheless, let us recall that any non-constant almost periodic function has infinite growth and decay intervals since, otherwise, the function will have a limit, which is not possible. By using this fact combined with the differentiability of $p_{1,1}^{*}(\cdot)$, we can see that, at any local maximum $t_{0}$, the function $p_{1,1}^{*}(\cdot)$ verifies
$$
p_{1,1}^{*}(t_{0})=h-\frac{m_1(t_{0})}{c_{1}(t_{0})}<h,
$$
which allows us to deduce that $p_{1,1}^{*}(t)\in (0,h)$ for any $t\in \mathbb{R}$.}


Now, by (\ref{resumen2}) we know that that the strict inequality $h<\mathcal{R}_{0}$ 
is equivalent to $\mathcal{M}[a_{11}]>0$. The statement b) of the Proposition
\ref{AVED} says that the linear equation
(\ref{lineaire-hom}) has an exponential dichotomy on the real line with null projector, that is, there exist $K\geq 1$ and $\alpha>0$
such that
\begin{displaymath}
e^{\int_{s}^{r}a_{11}(\tau)\,d\tau}\leq Ke^{-\alpha(s-r)} \quad \textnormal{for any $r< s$},     \end{displaymath}
then any solution $t\mapsto v(t)$ of (\ref{lineaire-hom}) verifies
$$
K^{-1}e^{\alpha (t-t_{0})}|v(t_{0})|\leq |v(t)|=e^{\int_{t_{0}}^{t}a_{1}(\tau)\,d\tau}|v(t_{0})| \quad \textnormal{for any $t\geq t_{0}$}.
$$

The statement 2.- from Proposition \ref{Prop-Cop} says that
the equation (\ref{lineaire}) has a unique solution bounded on $[0,+\infty)$. A careful reading of the proof, combined with the positiveness of $c_{1}(\cdot)$ shows that this bounded solution
has a negative initial condition. In consequence, the solutions of (\ref{lineaire}) with $u(t_{0})=u_{0}\geq 1$ are unbounded. We will explore in depth this fact:
\begin{displaymath}
\begin{array}{rcl}
    u(t) & = & \displaystyle u_0 e^{\int_{t_0}^t a_{11}(s)ds} + \int_{t_0}^t c_1(s) e^{\int_s^t a_{11}(r)\,dr}\,ds \\\\
     & \geq  & \displaystyle  K^{-1}e^{\alpha (t-t_{0})}+ \int_{t_0}^t c_1(s) e^{\int_s^t a_{11}(r)\,dr}\,ds \\\\
     &\geq & \displaystyle  K^{-1}e^{\alpha (t-t_{0})},
\end{array}    
\end{displaymath}    
now, as $p_{1}(t)=1/u(t)$, we can deduce that $0\leq p(t) \leq  Ke^{-\alpha(t-t_{0})}$
and the extinction of the metapopulation at exponential rate follows.

\end{proof}

In addition, if $\mathcal{R}_{0}<h$ is verified, then it can be deduced from (\ref{EEN1}) and (\ref{resumen1}) that
the average of the almost periodic function $t\mapsto c_{1}(t)p_{1,1}^{*}(t)$ is:
\begin{equation}
\label{CAPN1}
\mathcal{M}[c_{1}p_{1,1}^{*}]=h\mathcal{M}[c_{1}]-\mathcal{M}[m_{1}]=\mathcal{M}[c_{1}](h-\mathcal{R}_{0}).
\end{equation}

\subsection{Habitability thresholds for the $i$--th species}

Let $e_{i}$ be $i$--th component of the canonical basis of $\mathbb{R}^{n}$ and $\delta\in (0,h)$. Notice that
any solution of (\ref{tilman1-TV}) with initial condition $p(0)=\delta e_{i}$ has a unique positive component which is the solution of
the equation
\begin{displaymath}
p_{i}'=c_i(t) p_i\left(h-p_{i} \right)-m_i(t)p_i,
\end{displaymath}
and describes the fraction of the occupied patches by the $i$--th species in absence of competitors.

A direct consequence of Theorem \ref{Prop-1} is that, for any initial condition
$\delta e_{i}$ with $\delta\in (0,h)$, the strong persistence or strong extinction of the $i$--th species is determined either by $h>\mathcal{R}_{i-1}$
or $\mathcal{R}_{i-1}> h$, where $\mathcal{R}_{i-1}$ is the \textit{natural habitability threshold for the $i$--species}, which is defined by:
\begin{equation}
\label{GTH}
\mathcal{R}_{i-1}=\frac{\mathcal{M}[m_{i}]}{\mathcal{M}[c_{i}]}  \quad \textnormal{for any $i=1,\ldots,n$}.
\end{equation}

In case that the averages extinction rates verify:
\begin{equation}
\label{EAM}
\mathcal{M}[m_{1}]=\ldots=\mathcal{M}[m_{n}]=m,   
\end{equation}
then the competition--colonization trade--off (\ref{BCWC}) and (\ref{GTH}) will imply the i\-ne\-qua\-lity
\begin{displaymath}
    \mathcal{R}_{n-1}<\mathcal{R}_{n-2}<\ldots <\mathcal{R}_{1}<\mathcal{R}_{0},
\end{displaymath}
which means that, if the extinction rates have similar average, then the less advantaged (in average) competitors is the species that requires the lowest natural habitability threshold. 

In the next sections, we will see that a consequence of the assumption (\ref{BCWC}) is that lower habitability thresholds 
tend to favor the persistence of weaker competitors as well as to harm the persistence of stronger competitors even when
(\ref{EAM}) is not verified.

\section{The planar case $n=2$}
For $n=2$, the metapopulation model described by (\ref{tilman1-TV}) becomes 
\begin{equation}
\label{tilman1-TV-n2}
\left\{
\begin{array}{rcl}
\displaystyle p_{1}' &=&  p_{1}\left\{c_1(t)\left(h-p_1\right)-m_1(t)\right\} \\
\displaystyle p_{2}' &=& p_{2}\left\{c_2(t)\left(h-p_{1}-p_{2} \right)-m_2(t)-c_1(t) p_1 \right\},
\end{array}\right.
\end{equation}
with initial conditions
\begin{equation}
\label{TH}
0<p_{1}(t_{0})+p_{2}(t_{0})<h.
\end{equation}

The possible scenarios of strong persistence and/or strong extinction for the system (\ref{tilman1-TV-n2}) are described by the global
exponential attractivity of four almost periodic nonautonomous equilibria, which are summarized in the following table:
\begin{table}[H]
    \centering
    \begin{tabular}{|c|c|c|}
\hline    
                              & Persistence species 1 ($\mathcal{R}_{0}< h$)    &  Extinction species 1   ($h<\mathcal{R}_{0}$)\\
\hline                                 
Persistence species 2          & \small{Global persistence}  &     \small{Fugitive species persists} \\ 
 & \footnotesize{ $(p_{1,1}^{\ast}(t),p_{2,1}^{\ast}(t))$}  &     \footnotesize{$(0,p_{2,3}^{\ast}(t))$}  \\
\hline 
Extinction species 2          & \small{Advantaged competitor persists}     &   \small{Global extinction} \\
&  \footnotesize{$(p_{1,1}^{\ast}(t),p_{2,2}^{\ast}(t))$ with $p_{2,2}^{\ast}(t)=0$}     &  \footnotesize{$(0,p_{2,4}^{\ast}(t))$ with $p_{2,4}^{\ast}(t)=0$} \\
\hline
    \end{tabular}
    \caption{The Strong Persistence/Strong Extinction scenarios}
    \label{Table1}
\end{table}

The almost periodic function $t\mapsto p_{1,1}^{\ast}(t)$ is described by (\ref{sol-periodique}) whereas the other
ones $p_{2,j}^{\ast}(\cdot)$ will be defined throughout this section. We stress that the double index $2,j$ refers to the second species 
in competition and the $j$--th scenario with $j\in \{1,2,3,4\}$.

\subsection{Technical results}
The following technical results will be useful for our study of the
metapopulation model (\ref{tilman1-TV-n2})--(\ref{TH}).

\begin{lemma}
\label{LT1}
Let $\mathfrak{a}\colon \mathbb{R}\to \mathbb{R}$ be almost periodic with $\mathcal{M}[\mathfrak{a}]<0$ and consider two bounded continuous functions $\mathfrak{b}\colon [0,+\infty)\to (0,+\infty)$ and $\psi\colon [0,+\infty)\to \mathbb{R}$ where $\psi(t)=\textit{o}(1)$, then 
there exists $\theta>0$ such that any 
solution of
\begin{equation}
\label{inhomogeneous}
u'=[\mathfrak{a}(t)+\psi(t)]u+\mathfrak{b}(t)  \quad \textnormal{with $t\geq 0$}
\end{equation}
verifies $\limsup\limits_{t\to +\infty}|u(t)|\leq \theta$.
\end{lemma}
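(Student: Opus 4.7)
The plan is to reduce this to a direct application of roughness and the variation-of-constants formula. The key point is that the decaying perturbation $\psi$ does not destroy the exponential dichotomy of the homogeneous part.

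First, I would observe that since $\mathcal{M}[\mathfrak{a}]<0$, Proposition \ref{AVED}(a) guarantees that $v'=\mathfrak{a}(t)v$ has an exponential dichotomy on $\mathbb{R}$ (hence on $[0,+\infty)$) with the identity projector, so there exist $K\geq 1$ and $\alpha>0$ with
$$e^{\int_s^t \mathfrak{a}(r)\,dr}\leq K e^{-\alpha(t-s)}\qquad \text{for all }t\geq s\geq 0.$$
Since $\psi(t)=\textit{o}(1)$ we have $\limsup_{t\to+\infty}|\psi(t)|=0<\alpha$, so the roughness property (Proposition \ref{rough-a}) applies and the perturbed equation $v'=[\mathfrak{a}(t)+\psi(t)]v$ also has an exponential dichotomy on $[0,+\infty)$, with the same type of projector (identity) and new constants $\tilde{K}\geq 1$ and $\alpha_0\in(0,\alpha)$, i.e.
$$e^{\int_s^t [\mathfrak{a}(r)+\psi(r)]\,dr}\leq \tilde{K} e^{-\alpha_0(t-s)}\qquad \text{for all }t\geq s\geq 0.$$

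Next, given any solution $u$ of \eqref{inhomogeneous} starting at $t_0\geq 0$, the variation-of-constants formula gives
$$u(t)= e^{\int_{t_0}^t [\mathfrak{a}(r)+\psi(r)]\,dr}u(t_0) + \int_{t_0}^t e^{\int_s^t [\mathfrak{a}(r)+\psi(r)]\,dr}\mathfrak{b}(s)\,ds,$$
and using the dichotomy estimate together with the boundedness of $\mathfrak{b}$ we obtain
$$|u(t)|\leq \tilde{K} e^{-\alpha_0(t-t_0)}|u(t_0)| + \tilde{K}\|\mathfrak{b}\|_\infty\int_{t_0}^{t} e^{-\alpha_0(t-s)}\,ds \leq \tilde{K} e^{-\alpha_0(t-t_0)}|u(t_0)|+\frac{\tilde{K}\|\mathfrak{b}\|_\infty}{\alpha_0}.$$
Letting $t\to+\infty$, the first term vanishes and one concludes
$$\limsup_{t\to+\infty}|u(t)|\leq \theta:=\frac{\tilde{K}\,\|\mathfrak{b}\|_\infty}{\alpha_0},$$
which is the desired estimate, independent of the particular solution chosen.

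The only subtlety, and what I expect to be the main obstacle, is to verify carefully that the roughness result preserves the projector type: Proposition \ref{rough-a} is stated in a form that yields inequality \eqref{proj-id}, corresponding to the identity projector. Since our unperturbed equation also has the identity projector, this matches and the estimate \eqref{DEPI} transfers directly to the perturbed equation. Everything else is a routine Gronwall-type bound from variation of constants, and the positivity of $\mathfrak{b}$ and the sign of $\mathcal{M}[\mathfrak{a}]$ are the only ingredients ensuring the framework is consistent.
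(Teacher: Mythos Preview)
Your proposal is correct and follows essentially the same approach as the paper: invoke Proposition~\ref{AVED} to get the exponential dichotomy with identity projector, apply roughness (Proposition~\ref{rough-a}) to transfer it to the perturbed linear part, and then bound the solution via variation of constants. The paper's version introduces an auxiliary $\varepsilon>0$ and $\mathfrak{b}_\infty=\limsup_{t\to+\infty}\mathfrak{b}(t)$ before taking the upper limit, but your direct use of $\|\mathfrak{b}\|_\infty$ is slightly cleaner and yields the same conclusion.
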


\begin{proof}
As $\mathcal{M}[\mathfrak{a}]<0$, Proposition \ref{AVED} says that the differential equation $x'=\mathfrak{a}(t)x$
has an exponential dichotomy on $\mathbb{R}$ with the identity projector. Since this last equation also has an exponential dichotomy on $[0,+\infty)$ with the same projector and
$\psi(t)=\textit{o}(1)$, Proposition \ref{rough-a} says that $z'=[\mathfrak{a}(t)+\psi(t)]z$ also has an exponential dicothomy on $[0,+\infty)$ with the identity projector, that is, there exist $K\geq 1$ and $\alpha>0$ such that
$$
e^{\int_{t_{0}}^{t}[\mathfrak{a}(s)+\psi(s)]\,ds} \leq K e^{-\alpha(t-t_{0})} \quad \textnormal{for any $t\geq t_{0}\geq 0$}.
$$

Let $\mathfrak{b}_{\infty}:=\limsup\limits_{t\to +\infty}\mathfrak{b}(t)$ and note that for any $\varepsilon>0$ there exists
$t_{0}$ large enough such that:
\begin{displaymath}
\begin{array}{rcl}
|u(t)| &\leq & Ke^{-\alpha(t-t_{0})}|u(t_{0})|+ K(||\mathfrak{b}||_{\infty}+\varepsilon)\int_{t_{0}}^{t}e^{-\alpha(t-s)}\,ds \\\\
&\leq&  Ke^{-\alpha(t-t_{0})}|u(t_{0})|+\frac{K}{\alpha}(||\mathfrak{b}||_{\infty}+\varepsilon)(1-e^{-\alpha(t-t_{0})}).
\end{array}
\end{displaymath}

Finally, the result follows with $\theta:=(K||\mathfrak{b}||_{\infty}+\varepsilon)/\alpha$ by taking the upper limit.

\end{proof}

\begin{lemma}
\label{LT2}
Let us consider the equation
\begin{equation}
\label{inho-n2}
v'=\mathfrak{a}(t)v+\mathfrak{b}(t)  \quad \textnormal{with $t\geq 0$},
\end{equation}
where $\mathfrak{a}\colon \mathbb{R}\to \mathbb{R}$ is almost periodic with $\mathcal{M}[\mathfrak{a}]<0$ and
$\mathfrak{b}\colon [0,+\infty)\to \mathbb{R}$ is a continuous function. If $\mathfrak{b}(t)=\textit{o}(1)$,
then $v(t)=\textit{o}(1)$. Moreover, if $\mathfrak{b}(\cdot)$ converges exponentially, then $v(\cdot)$ also converges exponentially. 
\end{lemma}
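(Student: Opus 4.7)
\medskip

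\noindent\textbf{Proof proposal for Lemma \ref{LT2}.}

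The plan is to use the exponential dichotomy provided by Proposition \ref{AVED} together with the variation of constants formula for \eqref{inho-n2}. Since $\mathfrak{a}(\cdot)$ is almost periodic with $\mathcal{M}[\mathfrak{a}]<0$, Proposition \ref{AVED} yields the existence of constants $K\geq 1$ and $\alpha>0$ such that
\begin{equation*}
e^{\int_{s}^{t}\mathfrak{a}(r)\,dr}\leq K e^{-\alpha(t-s)} \quad \textnormal{for any $t\geq s\geq 0$.}
\end{equation*}
By the variation of constants formula, any solution $v(\cdot)$ of \eqref{inho-n2} verifies
\begin{equation*}
v(t)=e^{\int_{t_{0}}^{t}\mathfrak{a}(r)\,dr}v(t_{0})+\int_{t_{0}}^{t}e^{\int_{s}^{t}\mathfrak{a}(r)\,dr}\mathfrak{b}(s)\,ds,
\end{equation*}
and the first summand decays exponentially, so the behavior of $v(t)$ for large $t$ is dictated by the convolution integral.

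For the first assertion (the $o(1)$ part), I would fix $\varepsilon>0$ and use $\mathfrak{b}(t)=\textit{o}(1)$ to choose $T\geq t_{0}$ such that $|\mathfrak{b}(s)|<\varepsilon$ for every $s\geq T$. I would then split the integral on $[t_{0},T]$ and $[T,t]$. On the former, the dichotomy bound gives an upper bound proportional to $e^{-\alpha(t-T)}$, hence $\textit{o}(1)$ as $t\to+\infty$. On the latter, the estimate
\begin{equation*}
\left|\int_{T}^{t}e^{\int_{s}^{t}\mathfrak{a}(r)\,dr}\mathfrak{b}(s)\,ds\right|\leq \varepsilon K\int_{T}^{t}e^{-\alpha(t-s)}\,ds\leq \frac{K\varepsilon}{\alpha},
\end{equation*}
shows that the upper limit of $|v(t)|$ is at most $K\varepsilon/\alpha$, and letting $\varepsilon\to 0^{+}$ concludes this part.

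For the exponential part, I would assume a bound of the form $|\mathfrak{b}(t)|\leq \mathcal{K}(t_{0})e^{-\beta(t-t_{0})}|\mathfrak{b}(t_{0})|$ as in \eqref{NUEC}, and I would plug this into the convolution integral to obtain, up to a constant factor, terms of the type
\begin{equation*}
\int_{t_{0}}^{t}e^{-\alpha(t-s)}e^{-\beta(s-t_{0})}\,ds.
\end{equation*}
The main (and essentially only) subtlety is the dichotomy of cases according to the sign of $\alpha-\beta$: when $\alpha\neq\beta$ a direct computation yields a bound proportional to $e^{-\min\{\alpha,\beta\}(t-t_{0})}$, while the borderline case $\alpha=\beta$ produces an extra factor $(t-t_{0})e^{-\alpha(t-t_{0})}$. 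In this last case I would fix any $\gamma\in(0,\alpha)$ and absorb the linear factor into the exponential using $(t-t_{0})e^{-\alpha(t-t_{0})}\leq C_{\gamma}e^{-\gamma(t-t_{0})}$ for a suitable constant $C_{\gamma}>0$. Combining this with the exponentially decaying contribution of $e^{\int_{t_{0}}^{t}\mathfrak{a}(r)\,dr}v(t_{0})$ yields the desired exponential decay of $v(\cdot)$. The only obstacle worth mentioning is keeping track of the $t_{0}$--dependent constant so that the final estimate fits the nonuniform exponential convergence template \eqref{NUEC}, but this is purely bookkeeping.
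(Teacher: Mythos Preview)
Your proposal is correct and follows the same overall strategy as the paper (variation of constants combined with the exponential dichotomy from Proposition~\ref{AVED}), but the execution of the $o(1)$ part differs in a minor, yet instructive, way. You split the convolution integral at a fixed time $T$ chosen so that $|\mathfrak{b}(s)|<\varepsilon$ for $s\geq T$, and then apply the dichotomy bound $e^{\int_s^t\mathfrak{a}}\leq Ke^{-\alpha(t-s)}$ directly to each piece; this is the standard textbook argument and is arguably cleaner. The paper instead splits according to the elapsed time, writing $\int_{t_0}^t=\int_{t_0}^{t-T}+\int_{t-T}^t$, and on the first piece invokes the raw inequality \eqref{proj-id} to get $\int_s^t\mathfrak{a}\leq -c(t-s)$ for $t-s>T$, while on the second piece it uses only the crude bound $e^{\|\mathfrak{a}\|_\infty T}$. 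Both arrive at the same conclusion; your route avoids appealing to \eqref{proj-id} separately and needs one fewer constant to track.

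For the exponential part the two arguments are essentially identical; the paper simply disposes of the borderline case by noting that one may assume $\alpha\neq\gamma$ without loss of generality (by slightly shrinking $\alpha$), whereas you treat $\alpha=\beta$ explicitly via the absorption $(t-t_0)e^{-\alpha(t-t_0)}\leq C_\gamma e^{-\gamma(t-t_0)}$. Either handling is fine.
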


\begin{proof}
As $\mathcal{M}[\mathfrak{a}]<0$, Proposition \ref{AVED} says that the differential equation $x'=\mathfrak{a}(t)x$
has an exponential dichotomy on $\mathbb{R}$ and $[0,+\infty)$ with the identity projector, that is, there exists $K\geq 1$ and $\alpha>0$ such that
$$
e^{\int_{t_{0}}^{t}\mathfrak{a}(s)\,ds} \leq K e^{-\alpha(t-t_{0})} \quad \textnormal{for any $t\geq t_{0}\geq 0$}.
$$
Firstly, 
assume that $\mathfrak{b}(\cdot)$ decays exponentially, that is, there exists $\mathcal{K}\colon [0,+\infty)\to [0,+\infty)$ and $\gamma>0$ such that:
$$
|\mathfrak{b}(t)|\leq \mathcal{K}(t_{0})e^{-\gamma(t-t_{0})} \quad \textnormal{for any $t\geq t_{0}\geq 0$},
$$
and, without loss of generality, it can be assumed that $\alpha \neq \gamma$.

Now, let us note that any solution of (\ref{inho-n2}) verifies
\begin{displaymath}
\begin{array}{rcl}
|v(t)| &\leq & Ke^{-\alpha(t-t_{0})}|v(t_{0})|+ K\mathcal{K}(t_{0})\int_{t_{0}}^{t}e^{-\alpha(t-s)}e^{-\gamma(s-t_{0})}\,ds \\\\
&\leq&  Ke^{-\alpha(t-t_{0})}|v(t_{0})|+K\mathcal{K}(t_{0})e^{-\alpha t}e^{\gamma t_{0}}\int_{t_{0}}^{t}e^{(\alpha-\gamma)s}\,ds \\\\
&\leq &  Ke^{-\alpha(t-t_{0})}|v(t_{0})|+\frac{K\mathcal{K}(t_{0})}{\alpha-\gamma}e^{-\alpha t}e^{\gamma t_{0}}
\left\{e^{(\alpha-\gamma)t}-e^{(\alpha-\gamma)t_{0}}\right\} \\\\
&\leq &  Ke^{-\alpha(t-t_{0})}|v(t_{0})|+\frac{K\mathcal{K}(t_{0})}{|\alpha-\gamma|}\left(e^{-\gamma(t-t_{0})}-e^{-\alpha(t-t_{0})}\right).
\end{array}
\end{displaymath}

Finally, we will prove the general case $b(t)=\textit{o}(1)$: As $\mathcal{M}[\mathfrak{a}]<0$, it follows by Proposition \ref{AVED}
that the inequality (\ref{proj-id}) is verified with constants $C\geq 0$ and $\alpha>0$. Then by choosing $T>C/\alpha$ large enough, for any $t-s>T$, we have:
$$
\int_{s}^{t}\mathfrak{a}(r)\,dr\leq C-\alpha(t-s)\leq \frac{C}{T}(t-s)-\alpha(t-s),
$$
and there exists $c>\alpha-(C/T)>0$ such that $\int_s^t \mathfrak{a}(\tau)\,d\tau < -c(t-s) <0 $ whenever $t-s>T$. Next, given $\varepsilon>0$ we may fix $t_0$ such that $|\mathfrak{b}(t)|<\tilde\varepsilon$ for $t>t_0$, with $\tilde\varepsilon$ to be specified. Note that any solution of (\ref{inho-n2}) can be written as follows
$$
v(t) =v(t_0)e^{\int_{t_0}^{t}\mathfrak{a}(\tau)\,d\tau} +\int_{t_0}^t \mathfrak{b}(s)e^{\int_{s}^{t}\mathfrak{a}(\tau)\,d\tau}\,ds,
$$
and we can deduce that
$$
\left|v(t) -v(t_0)e^{\int_{t_0}^{t}\mathfrak{a}(\tau)\,d\tau}\right| \le \tilde\varepsilon\int_{t_0}^t e^{\int_{s}^{t}\mathfrak{a}(\tau)\,d\tau}ds.
$$
Next, for $t$ large enough we may write
\begin{displaymath}
\begin{array}{rcl}
\displaystyle\int_{t_0}^te^{\int_{s}^{t}\mathfrak{a}(\tau)\,d\tau}\,ds &=&
\displaystyle \int_{t_0}^{t-T} e^{\int_{s}^{t}\mathfrak{a}(\tau)\,d\tau}\,ds +
\int_{t-T}^{t} e^{\int_{s}^{t}\mathfrak{a}(\tau)\,d\tau}ds \\\\
&\le & \displaystyle\int_{t_0}^{t-T} e^{-c(t-s)}ds +
\int_{t-T}^{t} e^{\|\mathfrak{a}\|_\infty T}ds \\\\
&\leq &\frac 1c + Te^{\|\mathfrak{a}\|_\infty T}.
\end{array}
\end{displaymath}

Setting $\tilde\varepsilon$ such that 
$$\tilde\varepsilon\left(\frac 1c + Te^{\|\mathfrak{a}\|_\infty T}
\right)\le \varepsilon$$
and passing to the limit and recalling that $e^{\int_{t_0}^{t}\mathfrak{a}(\tau)\,d\tau}=\textit{o}(1)$, it is seen that 
$\limsup\limits_{t\to\infty} |v(t)|\le \varepsilon$ and the Lemma follows.
\end{proof}

\begin{lemma}
\label{LT11}
Let $\mathfrak{a}\colon \mathbb{R}\to \mathbb{R}$ be almost periodic with $\mathcal{M}[\mathfrak{a}]>0$ and consider two bounded continuous functions $\mathfrak{c}\colon [0,+\infty)\to (0,+\infty)$ and $\phi\colon [0,+\infty)\to \mathbb{R}$ where $\phi(t)=\textit{o}(1)$, then 
there exists constants $K>0$ and $\alpha_{0}>0$ such that any 
solution $t\mapsto v(t)$ of
\begin{equation}
\label{inhomogeneous-22}
v'=[\mathfrak{a}(t)+\phi(t)]v+\mathfrak{c}(t)  \quad \textnormal{with $t\geq 0$}
\end{equation}
with $v(t_{0})\geq 1$ verifies $v(t)\geq K^{-1}e^{\alpha_{0}(t-t_{0})}$.
\end{lemma}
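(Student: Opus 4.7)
The plan is to transfer the exponential dichotomy of $x'=\mathfrak{a}(t)x$ (available from Proposition \ref{AVED}) to the perturbed linear equation $z'=[\mathfrak{a}(t)+\phi(t)]z$ and then read off a lower bound for $v(t)$ directly from the variation of constants formula, using the positivity of $\mathfrak{c}$ to discard the contribution of the inhomogeneity (which can only help).

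First, since $\mathcal{M}[\mathfrak{a}]>0$, Proposition \ref{AVED}(b) tells us that $x'=\mathfrak{a}(t)x$ has an exponential dichotomy on $\mathbb{R}$ with the null projector, so there exist $C\ge 0$ and $\alpha>0$ with
$$\int_t^s \mathfrak{a}(r)\,dr \;\ge\; \alpha(s-t)-C \quad \text{for every } s\ge t.$$
Because $\phi(t)=o(1)$, given any $\varepsilon\in (0,\alpha)$ we can pick $T>0$ such that $|\phi(t)|<\varepsilon$ for $t\ge T$; then for $s\ge t\ge T$,
$$\int_t^s[\mathfrak{a}(r)+\phi(r)]\,dr \;\ge\; (\alpha-\varepsilon)(s-t)-C,$$
which shows that $\mathfrak{a}+\phi$ satisfies \eqref{proj-null} on $[T,+\infty)$. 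Applying Lemma \ref{AHIH} (or, equivalently, absorbing the finite integral on $[0,T]$ into the constant), we obtain constants $\tilde C\ge 0$ and $\alpha_0>0$ such that
$$\int_{t_0}^{t}[\mathfrak{a}(r)+\phi(r)]\,dr \;\ge\; \alpha_0(t-t_0)-\tilde C \quad \text{for every } t\ge t_0\ge 0.$$
(Alternatively, one may quote Proposition \ref{rough-a} to obtain the exponential dichotomy of $\mathfrak{a}+\phi$ on $[0,+\infty)$ and then check that the projector is still the null one, since in the scalar case this is determined by the sign of the integral growth.)

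For the final step, I write the solution of \eqref{inhomogeneous-22} by the variation of constants formula:
$$v(t) \;=\; v(t_0)\,e^{\int_{t_0}^{t}[\mathfrak{a}(r)+\phi(r)]\,dr} \;+\; \int_{t_0}^{t}\mathfrak{c}(s)\,e^{\int_{s}^{t}[\mathfrak{a}(r)+\phi(r)]\,dr}\,ds.$$
Since $\mathfrak{c}>0$ on $[0,+\infty)$, the integral term is nonnegative for $t\ge t_0$, so using $v(t_0)\ge 1$ and the lower bound above,
$$v(t) \;\ge\; v(t_0)\,e^{\int_{t_0}^{t}[\mathfrak{a}(r)+\phi(r)]\,dr} \;\ge\; e^{-\tilde C}\,e^{\alpha_0(t-t_0)},$$
and the lemma follows with $K:=e^{\tilde C}$. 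I do not see any real obstacle here; the only subtlety is making sure that the null projector (and not the identity one) is inherited by the perturbed equation, which is where the computation with $\varepsilon\in(0,\alpha)$ is essential.
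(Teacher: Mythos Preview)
Your proof is correct and follows essentially the same route as the paper: obtain the null-projector dichotomy for $x'=\mathfrak{a}(t)x$ from Proposition~\ref{AVED}, transfer it to $z'=[\mathfrak{a}(t)+\phi(t)]z$ (the paper invokes Proposition~\ref{rough-a} directly, while you reprove the relevant estimate by hand and mention \ref{rough-a} as an alternative), and then drop the nonnegative inhomogeneous term in the variation-of-constants formula. Your explicit care about preserving the null projector under the perturbation is exactly the point the paper handles via roughness.
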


\begin{proof}
As $\mathfrak{a}(\cdot)$ is almost periodic and $\mathcal{M}[\mathfrak{a}]>0$, the statement b) of Proposition \ref{AVED}  allows us to deduce  
the existence of $C\geq 1$ and $\alpha>0$ such that
$$
e^{\int_{t_{0}}^{t}\mathfrak{a}(\tau)\,d\tau} \geq C^{-1}e^{\alpha(t-t_{0})} \quad \textnormal{for any $t\geq t_0$}.
$$

In addition, as $\phi(t)=\textit{o}(1)$, the Proposition \ref{rough-a} allow us to deduce the existence of
$K\geq 1$ and $\alpha_{0}\in (0,\alpha)$ such that
$$
e^{\int_{t_{0}}^{t}[\mathfrak{a}(s)+\phi(s)]}\geq K^{-1}e^{\alpha_{0}(t-t_{0})} \quad \textnormal{for any $t>t_{0}\geq 0$}.
$$

The above estimation, combined with the positiveness of $\mathfrak{c}(\cdot)$ implies that, for any $t>t_{0}\geq 0$, the solutions of (\ref{inhomogeneous-22}) with $v(t_{0})=v_{0}\geq 1$ verify:
\begin{displaymath}
\begin{array}{rcl}
    v(t) & = & \displaystyle v_0 e^{\int_{t_0}^t [\mathfrak{a}(s)+\phi(s)]\,ds} + \int_{t_0}^t \mathfrak{c}(s) e^{\int_s^t [\mathfrak{a}(r)+\phi(r)]\,dr}\,ds \\\\
     & \geq  & \displaystyle  K^{-1}e^{\alpha_{0} (t-t_{0})}+ \int_{t_0}^t \mathfrak{c}(s) e^{\int_s^t [\mathfrak{a}(r)+\phi(r)]\,dr}\,ds \\\\
     &\geq & \displaystyle  K^{-1}e^{\alpha_{0}(t-t_{0})},
\end{array}    
\end{displaymath}  
and the Lemma follows.
\end{proof}

\subsection{Strong persistence/extinction scenarios when $\mathcal{R}_{0}<h$}
Now, we will find sufficient conditions for all and each one of the scenarios described in the first column of the Table \ref{Table1}. 

In order to
do that, we will see that the solutions of (\ref{tilman1-TV-n2}) will be studied with the help of the almost periodic equation
\begin{equation}
\label{a2}
\dot{z}=a_{21}(t)z \quad \textnormal{with} \quad a_{21}(t)= -[h c_2(t)-m_2(t)-\{c_1(t)+c_{2}(t)\}p_{1,1}^*(t)],
\end{equation}
which describes the balance between the colonization forces $hc_{2}(t)$ and 
extinction forces $m_{2}(t)$ with an additive positive perturbation $[c_{1}(t)+c_{2}(t)]p_{1,1}^{*}(t)$ induced by the superior competitor, where $t\mapsto p_{1,1}^{*}(t)$ is given by \eqref{sol-periodique}. Moreover, by (\ref{GTH}) and (\ref{a2}) 
combined with the positiveness of $\mathcal{M}[c_{2}]$ we can deduce that:

\begin{equation}
\label{resumen3}
\left\{\begin{array}{rcl}
\mathcal{M}[a_{21}]<0  &\iff &  h\mathcal{M}[c_2] -\mathcal{M}[m_2]-\mathcal{M}\left[(c_{1}+c_{2})p_{1,1}^*\right]>0, \\\\
&\iff & \displaystyle\mathcal{M}[c_{2}]\left(h-\mathcal{R}_{1}-\frac{\mathcal{M}[(c_{1}+c_{2})p_{1,1}^{\ast}]}{\mathcal{M}[c_{2}]}\right)>0,\\\\
&\iff & \displaystyle \mathcal{R}_{1,1}^{\ast}:=\mathcal{R}_{1}+\underbrace{\frac{\mathcal{M}[(c_{1}+c_{2})p_{1,1}^{\ast}]}{\mathcal{M}[c_{2}]}}_{:=\ell_{1,1}}<h,
\end{array}\right.
\end{equation}
and
\begin{equation}
\label{resumen4}
\left\{\begin{array}{rcl}
\mathcal{M}[a_{21}]>0  &\iff &  h\mathcal{M}[c_2] -\mathcal{M}[m_2]-\mathcal{M}\left[(c_{1}+c_{2})p_{1,1}^*\right]<0, \\\\
&\iff & \displaystyle\mathcal{M}[c_{2}]\left(h-\mathcal{R}_{1}-\frac{\mathcal{M}[(c_{1}+c_{2})p_{1,1}^{\ast}]}{\mathcal{M}[c_{2}]}\right)<0,\\\\
&\iff & \displaystyle \mathcal{R}_{1,1}^{\ast}:=\mathcal{R}_{1}+\underbrace{\frac{\mathcal{M}[(c_{1}+c_{2})p_{1,1}^{\ast}]}{\mathcal{M}[c_{2}]}}_{:=\ell_{1,1}}>h.
\end{array}\right.
\end{equation}

\begin{theorem}[Global persistence]
\label{GP}
    Assume that $\mathcal{M}[c_{1}]<\mathcal{M}[c_{2}]$ if verified. If the fraction $h$ of habitable patches together with the colonization and extinction rates are such that the inequality
    \begin{displaymath}
     \max\left\{\mathcal{R}_{0},\mathcal{R}_{1,1}^{\ast}\right\}<h, 
    \end{displaymath}
    is also verified, then the metapopulation \eqref{tilman1-TV-n2} has a unique positive almost periodic solution $t\mapsto (p_{1,1}^*(t),p_{2,1}^*(t))$, where $t\mapsto p_{2,1}^{*}(t)$ is defined by:
\begin{equation}
\label{p2}
    p_{2,1}^{*}(t)=\left(\int_{-\infty}^t e^{\int_{s}^{t}a_{21}(r)\,dr}c_2(s)\,ds\right)^{-1}. 
\end{equation}
Furthermore, for every solution $t\mapsto (p_1(t),p_2(t))$ of \eqref{tilman1-TV-n2}--\eqref{TH} it follows that 
    \begin{displaymath}
\lim\limits_{t\to +\infty}|p_{1}(t)-p_{1,1}^{*}(t)| + |p_{2}(t)-p_{2,1}^{*}(t)|=0,
\end{displaymath}
and the convergence is exponential.
\end{theorem}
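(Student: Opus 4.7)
The plan exploits the triangular (hierarchical) structure of \eqref{tilman1-TV-n2}. First, since the equation for $p_{1}$ is decoupled and the hypothesis $\mathcal{R}_{0}<h$ holds, Theorem \ref{Prop-1} applies directly: $p_{1}(t)-p_{1,1}^{\ast}(t)$ converges exponentially to zero, and $p_{1,1}^{\ast}(t)\in(0,h)$ for all $t$. Write $\psi(t):=p_{1}(t)-p_{1,1}^{\ast}(t)$; this error enters the $p_{2}$--equation as an exponentially decaying perturbation.

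Next I would apply the Bernoulli trick to the second equation. Setting $u(t):=1/p_{2}(t)$, a direct computation gives
\begin{displaymath}
u'=\bigl[a_{21}(t)+(c_{1}(t)+c_{2}(t))\psi(t)\bigr]u+c_{2}(t),
\end{displaymath}
where $a_{21}(\cdot)$ is defined in \eqref{a2}. The hypothesis $\mathcal{R}_{1,1}^{\ast}<h$ is, by \eqref{resumen3}, exactly $\mathcal{M}[a_{21}]<0$. Therefore Propositions \ref{AVED} and \ref{AVAD} give a unique almost periodic solution $u^{\ast}(t)=\int_{-\infty}^{t}e^{\int_{s}^{t}a_{21}(r)\,dr}c_{2}(s)\,ds$ of the unperturbed equation $u'=a_{21}(t)u+c_{2}(t)$. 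The positivity and almost periodicity of $c_{2}(\cdot)$, combined with \eqref{infap}, yield a positive infimum for $u^{\ast}(\cdot)$; this lets me define $p_{2,1}^{\ast}:=1/u^{\ast}$, which is almost periodic by Theorem 7 of \cite{Besicovitch}, and by the same type of argument as in the proof of Theorem \ref{Prop-1} (evaluating at a local maximum) shows $p_{2,1}^{\ast}(t)\in(0,h)$.

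The crux is the asymptotic comparison. Since $\mathcal{M}[a_{21}]<0$, $\psi(t)=o(1)$, and $c_{2}(\cdot)$ is positive, Lemma \ref{LT1} guarantees that every solution $u(\cdot)$ of the perturbed equation is ultimately bounded. Setting $w:=u-u^{\ast}$, one obtains the linear inhomogeneous equation
\begin{displaymath}
w'=a_{21}(t)w+(c_{1}(t)+c_{2}(t))\psi(t)u(t),
\end{displaymath}
where the forcing term decays exponentially, because $\psi(\cdot)$ does and $u(\cdot)$ is bounded on $[t_{0},+\infty)$. By Lemma \ref{LT2}, $w(t)$ converges exponentially to zero. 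Finally,
\begin{displaymath}
|p_{2}(t)-p_{2,1}^{\ast}(t)|=\frac{|w(t)|}{u(t)u^{\ast}(t)}\le p_{2}(t)\,p_{2,1}^{\ast}(t)\,|w(t)|,
\end{displaymath}
and since $p_{2}(\cdot)$ is bounded above by $h$ (Lemma \ref{L1}) and $p_{2,1}^{\ast}(\cdot)$ is likewise bounded, the exponential convergence of $w$ transfers to $p_{2}-p_{2,1}^{\ast}$.

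The uniqueness of the positive almost periodic solution follows from that of $u^{\ast}$ given by Proposition \ref{AVAD}. The main obstacle I anticipate is bookkeeping around the nonuniform exponential constants: I must ensure that the bound on $u(\cdot)$ provided by Lemma \ref{LT1} is valid on a common interval $[t_{0},+\infty)$ so that the forcing $(c_{1}+c_{2})\psi\,u$ genuinely inherits the exponential decay of $\psi$, and then verify that Lemma \ref{LT2} upgrades $o(1)$ to exponential decay under this condition. Once this is handled, converting back from $w=u-u^{\ast}$ to $p_{2}-p_{2,1}^{\ast}$ is routine because both $p_{2}$ and $p_{2,1}^{\ast}$ are uniformly bounded away from zero and above.
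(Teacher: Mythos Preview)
Your proposal is correct and follows essentially the same route as the paper: invoke Theorem \ref{Prop-1} for $p_{1}$, apply the Bernoulli substitution $u=1/p_{2}$, use $\mathcal{M}[a_{21}]<0$ together with Propositions \ref{AVED}--\ref{AVAD} to obtain the almost periodic $u^{\ast}$, appeal to Lemma \ref{LT1} for the ultimate boundedness of $u$, and then to Lemma \ref{LT2} for the exponential decay of $w=u-u^{\ast}$ before converting back to $p_{2}$. The paper organizes this into three labeled steps but the ingredients and their order are the same; your anticipated bookkeeping around the nonuniform constants is exactly the care the paper takes implicitly when invoking Lemma \ref{LT2}.
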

 \begin{proof}
The proof will be divided into several steps.

\noindent \textit{Step 1: Existence of an almost periodic solution.} As
$\mathcal{R}_{0}<h$, by Theorem \ref{Prop-1} we have that $t\mapsto p_{1,1}^{*}(t)$ is the unique positive almost periodic solution of the first equation of (\ref{tilman1-TV-n2}). When replacing 
$t\mapsto p_{1}(t)$ by $t\mapsto p_{1,1}^*(t)$ in the second equation, we obtain:
\begin{equation}
\label{2ep}
 p_{2}'= \displaystyle p_{2}\left\{c_2(t)\left(h-p_2-p_{1,1}^{*}(t) \right)-m_2(t)-c_1(t) p_{1,1}^{*}(t) \right\}.
\end{equation}

The existence of a positive and exponentially attractive almost periodic solution of (\ref{2ep})
can be deduced analogously as in the proof of the Theorem \ref{Prop-1} by the transformation $u=1/p_{2}$, which leads to
\begin{equation}
\label{NHT3}
u'=a_{21}(t)u+c_{2}(t).
\end{equation}
Indeed, we know by (\ref{resumen3}) that $\mathcal{R}_{0}<h$ is equivalent to $\mathcal{M}[a_{21}]<0$ and Proposition \ref{AVED} implies that the equation (\ref{a2}) has an exponential dichotomy on $\mathbb{R}$ with the identity projector. That is,
there exist $K\geq 1$ and $\alpha>0$ such that
\begin{equation}
\label{DET3}
e^{\int_{t_{0}}^{t}a_{21}(s)\,ds}\leq Ke^{-\alpha(t-t_{0})} \quad \textnormal{for any $t\geq t_{0}$},
\end{equation}
and, by Proposition \ref{AVAD}, there exists
 a unique positive almost periodic solution $t\mapsto u^{*}(t)$ of (\ref{NHT3}). This implies that 
$t\mapsto p_{2,1}^{*}(t)=1/u^{*}(t)$ is an almost periodic solution of (\ref{2ep}) and thus we conclude that $t\mapsto (p_{1,1}^{*}(t),p_{2,1}^{*}(t))$ is an almost periodic solution of (\ref{tilman1-TV-n2}).

\medskip

\noindent\textit{Step 2: Uniform persistence of the species 2.} The second equation of (\ref{tilman1-TV-n2}) reads: 
\begin{displaymath}
    p_{2}'= p_{2}\left\{c_2(t)\left(h-p_{2}\right)-m_2(t)-[c_1(t)+c_{2}(t)]p_1(t) \right\}, 
\end{displaymath}
which, by using $u=1/p_{2}$, can be transformed into
\begin{equation}
\label{rough-0}
\dot{u}=b_{21}(t)u+c_{2}(t) \,\, \textnormal{where} \,\, b_{21}(t)=a_{21}(t)+[c_{1}(t)+c_{2}(t)]\{p_{1}(t)-p_{1,1}^{*}(t)\}.
\end{equation}

Now, let us define $\phi(t):=[c_{1}(t)+c_{2}(t)]\{p_{1}(t)-p_{1,1}^{*}(h,t)\}$ and note that the linear part of (\ref{rough-0}) is: 
\begin{displaymath}
\dot{z}=b_{21}(t)z  \quad \textnormal{or} \quad \dot{z}=[a_{21}(t)+\phi(t)]z,    
\end{displaymath}
where $\phi(t)=\textit{o}(1)$ since $p_{1}(t)-p_{1,1}^{*}(t)=\textit{o}(1)$ and
the colonization rates are bounded functions. This fact combined with $\mathcal{M}[a_{21}]<0$
and Lemma \ref{LT1} implies the existence $\theta>0$ such that any solution $t\mapsto u(t)$ of (\ref{rough-0}) verifies
$$
\limsup\limits_{t\to +\infty}u(t)\leq \theta\quad \textnormal{or equivalently} \quad
\liminf\limits_{t\to +\infty}p_{2}(t)\geq \theta^{-1},
$$
and the strong uniform persistence of $p_{2}$ follows.

\medskip

\noindent\textit{Step 3: End of proof.}
Let $u^{*}(t)=1/p_{2,1}^{*}(t)$ and $u(t)=1/p_{2}(t)$. We define the auxiliary function $v(t):=u(t)-u^{*}(t)$ and note that:
$$
v'=a_{21}(t)v+\phi(t)u(t),
$$
where the function $u(\cdot)\phi(\cdot)=\textit{o}(1)$ defined above also converges exponentially because 
$p_{1}(\cdot)-p_{1,1}^{*}(\cdot)=\textit{o}(1)$ converges exponentially and $t\mapsto u(t),c_{1}(t),c_{2}(t)$ are bounded.  
By using this fact combined with $\mathcal{M}[a_{21}]<0$ and Lemma \ref{LT2} we can deduce that $u(t)-u^{*}(t)=\textit{o}(1)$ and the convergence is exponential. Finally, note that
$$
|p_{2}(t)-p_{2,1}^{\ast}(t)|=\frac{|u(t)-u^{\ast}(t)|}{u(t)u^{\ast}(t)}\leq h^{2}|u(t)-u^{\ast}(t)|,
$$
which concludes the proof.
\end{proof}

The above result deserves some remarks: 

\medskip

Firstly, under the assumptions of Theorem \ref{GP}, a simple computation arising from (\ref{a2}) implies that the
averages of $t\mapsto p_{1,1}^{\ast}(t)$ and $t\mapsto p_{2,1}^{\ast}(t)$ verify the trade off condition: 
\begin{equation}
\label{tradeoff}
\mathcal{M}[(c_{1}+c_{2})p_{1,1}^{\ast}]+\mathcal{M}[c_{2}p_{2,1}^{\ast}]=\mathcal{M}[c_{2}](h-\mathcal{R}_{1}),
\end{equation}
then we can see that an increasing of the fraction of patches occupied by species 1 implies a declining of those occupied by the species 2 and vice versa. We point out that this fact is consistent with the competitive structure of the system (\ref{tilman1-TV-n2}).

Secondly, we know that the natural habitability threshold condition $\mathcal{R}_{1}<h$ 
ensures the strong persistence of the second species in absence of competition. Nevertheless, this condition is not enough to guarantee the uniform strong persistence under the presence of a superior competitor in some cases. Indeed, Theorem \ref{GP} states that if the patches occupied by the superior competitor converges to a positive almost periodic fraction, then a necessary condition to ensure the strong persistence of the species 2 is given by $\mathcal{R}_{1,1}^{\ast}=\mathcal{R}_{1}+\ell_{1,1}<h$. In this context, and emulating the literature of the autonomous framework \cite{Calcagno,Kinzig,Miller}, the interval $(\mathcal{R}_{1},\mathcal{R}_{1}+\ell_{1,1})$ will be called as the \textit{exclusion interval} or \textit{niche shadow}. The length of this is interval, namely $\ell_{1,1}$, is a constant
dependent of $h$ and $p_{1,1}^{\ast}(\cdot)$ and allow to see the threshold condition (\ref{tradeoff}) from another perspective.

Finally, the niche shadow condition $h\in (\mathcal{R}_{1},\mathcal{R}_{1}+\ell_{1,1})$ must imply the extinction
of the second species, which is discussed in the following result:

\begin{theorem}[Best competitor persists]
\label{BCP}
    Assume that $\mathcal{M}[c_{1}]<\mathcal{M}[c_{2}]$ is verified. If the fraction of habitable patches together with the colonization and extinction rates are such that the inequalities
    \begin{displaymath}
     \mathcal{R}_{0}<h<\mathcal{R}_{1,1}^{*},   
    \end{displaymath}
are also verified, then there exists a unique nontrivial almost periodic solution $t\mapsto (p_{1,1}^*(t),0)$, which is exponentially attractive for every positive solution $t\mapsto (p_1(t),p_2(t))$ of \eqref{tilman1-TV-n2}--\eqref{TH}, namely  
    \begin{displaymath}
\lim\limits_{t\to +\infty}|p_{1}(t)-p_{1,1}^{*}(t)| + |p_{2}(t)|=0.
\end{displaymath}
\end{theorem}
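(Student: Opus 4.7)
The plan is to exploit the hierarchical (triangular) structure of \eqref{tilman1-TV-n2}: the first coordinate is decoupled, and once its asymptotics are known, the second equation reduces to a scalar non-homogeneous linear ODE after the Bernoulli substitution $u=1/p_2$. Because $\mathcal{R}_0<h$, the first coordinate can be handled by a direct appeal to Theorem \ref{Prop-1}; the essential novelty relative to Theorem \ref{GP} is that the second balance has flipped sign, so extinction rather than persistence of the fugitive species is expected.

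First I would invoke Theorem \ref{Prop-1} on the decoupled first equation to obtain $|p_1(t)-p_{1,1}^*(t)|\to 0$ exponentially, where $p_{1,1}^*$ is the positive almost periodic solution described by \eqref{sol-periodique}. Setting $u=1/p_2$, the second equation becomes
\[
u'=b_{21}(t)u+c_2(t), \qquad b_{21}(t)=a_{21}(t)+\phi(t),
\]
where $\phi(t):=[c_1(t)+c_2(t)](p_1(t)-p_{1,1}^*(t))$ is $\textit{o}(1)$ (in fact exponentially small) by the first step and the boundedness of $c_1,c_2$. The hypothesis $h<\mathcal{R}_{1,1}^*$ translates via \eqref{resumen4} into $\mathcal{M}[a_{21}]>0$, so Lemma \ref{LT11} applies with $\mathfrak{a}=a_{21}$, perturbation $\phi$, and forcing $\mathfrak{c}=c_2$, producing constants $K\geq 1$ and $\alpha_0>0$ such that $u(t)\geq K^{-1}e^{\alpha_0(t-t_0)}$ whenever $u(t_0)\geq 1$. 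The latter assumption is automatic: by Lemma \ref{L1} we have $p_2(t_0)<h<1$, hence $u(t_0)>1/h>1$. Inverting yields $0<p_2(t)\leq Ke^{-\alpha_0(t-t_0)}$, so $p_2\to 0$ exponentially and the attractivity statement follows.

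For uniqueness of the nontrivial almost periodic solution, I would argue that any such $(q_1^*,q_2^*)$ forces $q_1^*$ to be a positive almost periodic solution of the first equation, so by the uniqueness part of Theorem \ref{Prop-1} we must have $q_1^*=p_{1,1}^*$; then $q_2^*$ solves the reduced equation and the argument above shows $q_2^*(t)\to 0$ at $+\infty$. A Bohr almost periodic function converging to zero at $+\infty$ must vanish identically (picking almost $\varepsilon$-periods $\tau$ arbitrarily large, at any fixed $s$ one has $|q_2^*(s)|\le |q_2^*(s+\tau)|+\varepsilon<2\varepsilon$), whence $q_2^*\equiv 0$. The only subtle point I anticipate is ensuring that the initial condition required by Lemma \ref{LT11} is met uniformly across the invariant region, which is precisely where Lemma \ref{L1} enters.
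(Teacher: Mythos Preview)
Your proposal is correct and follows essentially the same route as the paper: invoke Theorem \ref{Prop-1} for the first coordinate, apply the Bernoulli substitution $u=1/p_{2}$, use \eqref{resumen4} to obtain $\mathcal{M}[a_{21}]>0$, and conclude via Lemma \ref{LT11}. You go beyond the paper in two useful ways: you explicitly justify the hypothesis $u(t_{0})\geq 1$ of Lemma \ref{LT11} by appealing to Lemma \ref{L1}, and you supply an argument for the uniqueness of the almost periodic solution (via the standard fact that a Bohr almost periodic function tending to zero must vanish), whereas the paper simply asserts uniqueness without proof.
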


\begin{proof}
As $\mathcal{R}_{0}<h$, we know by Theorem \ref{Prop-1} that $t\mapsto p_{1,1}^{*}(t)$ is the unique non-trivial almost periodic solution of the first equation of (\ref{tilman1-TV-n2}) which attracts any solution
$p_{1}(t)$ at exponential rate. Now, the proof follows by proving that any component $p_{2}(t)$ converges exponentially to
zero.

We know that the second equation of (\ref{tilman1-TV-n2}) can be transformed 
via $v=1/p_{2}$ in the inhomogeneous equation
\begin{equation}
\label{a2-bis}
v'=[a_{21}(t)+\phi(t)]v+c_{2}(t), 
\end{equation}
where $t\mapsto a_{21}(t)$ is defined by (\ref{a2}) and
$\phi(t)=[c_1(t)+c_{2}(t)]\{p_1(t)-p_{1,1}^{*}(t)\}$, which is convergent to zero since $p_{1}(t)-p_{1,1}^{*}(t)=\textit{o}(1)$
and the colonization rates are bounded.

Moreover, as $h<\mathcal{R}_{1,1}^{*}$, by (\ref{resumen4}) we have that $\mathcal{M}[a_{21}]>0$. In 
addition, as $a_{21}(\cdot)$ is almost periodic and $\phi(t)=\textit{o}(1)$, the Lemma \ref{LT11} 
implies that $v(t)\geq K^{-1}e^{\alpha_{0}(t-t_{0})}$,  
now, as $p_{2}(t)=1/v(t)$. Finally we can deduce that
\begin{displaymath}
0\leq p_{2}(t) \leq  Ke^{-\alpha_{0}(t-t_{0})}   
\end{displaymath}
and the extinction of the metapopulation at exponential rate follows.
\end{proof}

\subsection{Strong persistence/extinction scenarios when $h< \mathcal{R}_{0}$}
Now, we will find sufficient conditions for the scenarios described in the second column of the Table \ref{Table1}. 
To this end, let us consider the almost periodic equation
\begin{equation}
\label{a3}
\dot{z}=a_{22}(t)z \quad \textnormal{with} \quad a_{22}(t)= -[h c_2(t)-m_2(t)].
\end{equation}

It is directly  verified that:
\begin{equation}
\label{resumen5}
\mathcal{M}[a_{22}]<0 \iff h\mathcal{M}[c_{2}]-\mathcal{M}[m_{2}]>0 \iff h>\mathcal{R}_{1},
\end{equation}
and
\begin{equation}
\label{resumen6}
\mathcal{M}[a_{22}]>0 \iff h\mathcal{M}[c_{2}]-\mathcal{M}[m_{2}]<0 \iff h<\mathcal{R}_{1},
\end{equation}
where $\mathcal{R}_{1}$ is the habitability threshold defined by (\ref{GTH}).

The next result says that the natural habitability threshold condition $\mathcal{R}_{1}<h$ 
assures the strong persistence of the second species even under the presence of a superior competitor
driven to the strong extinction. This follows from the absence of the niche shadow.

\begin{theorem}[Fugitive species persists]
\label{FSP}
Assume that $\mathcal{M}[c_{1}]<\mathcal{M}[c_{2}]$.
 If the fraction of habitable patches together with the colonization and extinction rates are such that the inequalities
    \begin{displaymath}
       \mathcal{R}_{1}<h < \mathcal{R}_{0},
    \end{displaymath}
are verified, then any solution $t\mapsto (p_{1}(t),p_{2}(t))$ of the system \eqref{tilman1-TV-n2}--\eqref{TH} verifies:
    \begin{displaymath}
\lim\limits_{t\to +\infty}|p_{1}(t)| + |p_{2}(t)-p_{2,3}^{*}(t)|=0,
\end{displaymath}
where the convergence is exponential and the almost periodic function $t\mapsto p_{2,3}^{*}(t)$ is defined by
\begin{equation}
\label{q2}
p_{2,3}^{*}(t)=\left(\int_{-\infty}^{t}e^{\int_{s}^{t}a_{22}(\tau)\,d\tau}c_{2}(s)\,ds\right)^{-1}.
\end{equation}
\end{theorem}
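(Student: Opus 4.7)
The proof will follow the template established in Theorems \ref{Prop-1} and \ref{GP}: derive the extinction of the first species, transform the second equation into a linear-like form via Bernoulli substitution, identify an almost periodic attractor, and conclude convergence using the dichotomy lemmas. First, since $h < \mathcal{R}_{0}$, Theorem \ref{Prop-1} immediately yields $p_{1}(t) \to 0$ exponentially as $t \to +\infty$. Consequently, the perturbation term $\phi(t) := [c_{1}(t) + c_{2}(t)]p_{1}(t)$ is bounded and converges to zero exponentially, because $c_{1}(\cdot)$ and $c_{2}(\cdot)$ are bounded almost periodic functions.

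Next, applying the change of variables $u = 1/p_{2}$ to the second equation of \eqref{tilman1-TV-n2} (which is justified by the strict positivity of solutions with $p_{2}(t_{0}) > 0$ arising from standard ODE arguments), a direct computation yields
\begin{equation*}
u' = [a_{22}(t) + \phi(t)]u + c_{2}(t),
\end{equation*}
where $a_{22}(t)$ is defined by \eqref{a3}. Since the hypothesis $h > \mathcal{R}_{1}$ translates via \eqref{resumen5} into $\mathcal{M}[a_{22}] < 0$, Proposition \ref{AVED} implies that $\dot{z} = a_{22}(t)z$ has an exponential dichotomy on $\mathbb{R}$ with the identity projector. Proposition \ref{AVAD} then furnishes a unique almost periodic solution $u^{*}(t) = \int_{-\infty}^{t} e^{\int_{s}^{t} a_{22}(\tau)\,d\tau} c_{2}(s)\,ds$ to the unperturbed equation $u' = a_{22}(t)u + c_{2}(t)$. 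By arguments identical to those employed in the proof of Theorem \ref{Prop-1}, $u^{*}(\cdot)$ is bounded below by a positive constant (using \eqref{infap} and $\|c_{2}\|_{\infty}$), so the function $p_{2,3}^{*}(t) = 1/u^{*}(t)$ defined in \eqref{q2} is almost periodic and takes values in $(0,h)$.

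The final step is to show exponential convergence of $u(t) - u^{*}(t)$ to zero. Since $\mathcal{M}[a_{22}] < 0$ and $\phi(t) = o(1)$, Lemma \ref{LT1} applied to the perturbed equation $u' = [a_{22}(t) + \phi(t)]u + c_{2}(t)$ ensures that $u(\cdot)$ is uniformly bounded on $[t_{0},+\infty)$. Setting $w(t) := u(t) - u^{*}(t)$ and subtracting the two equations gives
\begin{equation*}
w' = a_{22}(t) w + \phi(t) u(t),
\end{equation*}
where the forcing term $\phi(t)u(t)$ converges exponentially to zero. Lemma \ref{LT2} then yields $w(t) \to 0$ exponentially. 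To transfer this back to the original variable, we use the identity
\begin{equation*}
|p_{2}(t) - p_{2,3}^{*}(t)| = \frac{|u(t) - u^{*}(t)|}{u(t)u^{*}(t)} \leq h^{2}\,|w(t)|,
\end{equation*}
since both $p_{2}(t)$ (by Lemma \ref{L1}) and $p_{2,3}^{*}(t)$ are bounded above by $h$. This yields the desired exponential convergence.

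The main obstacle is verifying that the perturbation $\phi(t) u(t)$ indeed converges exponentially rather than merely to zero; this hinges on the \emph{exponential} decay of $p_{1}(t)$ provided by Theorem \ref{Prop-1} together with the uniform boundedness of $u(\cdot)$ supplied by Lemma \ref{LT1}. Once these two ingredients are combined, Lemma \ref{LT2}'s second assertion about exponential forcing delivers the conclusion without further difficulty. The rest of the argument mirrors closely the proofs of Theorems \ref{Prop-1} and \ref{GP}.
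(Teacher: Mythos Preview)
Your proof is correct and follows essentially the same three-step approach as the paper: (i) exponential extinction of $p_{1}$ via Theorem \ref{Prop-1}, (ii) uniform boundedness of $u=1/p_{2}$ via Lemma \ref{LT1}, and (iii) exponential convergence of $w=u-u^{*}$ via Lemma \ref{LT2}. In fact, your derivation $w'=a_{22}(t)w+\phi(t)u(t)$ is the correct form of the difference equation; the paper records it as $v'=[a_{22}(t)+\psi(t)]v$, which is a slip, but its subsequent appeal to the boundedness of $u(\cdot)$ and to Lemma \ref{LT2} shows the intended argument coincides with yours.
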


 \begin{proof}
The proof will be divided into several steps.

\noindent \textit{Step 1: Existence of an almost periodic solution.} As stated in Theorem \ref{Prop-1}, the inequality
$h < \mathcal{R}_{0}$ implies that $t\mapsto p_{1}(t)=\textit{o}(1)$ and this decay is exponential. Now, when replacing 
$t\mapsto p_{1}(t)$ by zero in the second equation, we obtain:
\begin{equation}
\label{2ep-AE}
p_{2}'= \displaystyle p_{2}\left\{c_2(t)\left(h-p_2\right)-m_2(t) \right\}.
\end{equation}

By proceeding analogously as in the proof of Theorem \ref{Prop-1} but considering the habitability threshold $\mathcal{R}_{1}
$, we can prove the existence of an almost periodic solution $t\mapsto p_{2,3}^{*}(t)$ of (\ref{2ep-AE}) which is exponentially attractive since $h>\mathcal{R}_{1}$ is equivalent to $\mathcal{M}[a_{3}]<0$.

\medskip  

\noindent\textit{Step 2: Strong uniform persistence of species 2.} The second equation of (\ref{tilman1-TV-n2}) is: 
\begin{displaymath}
    p_{2}'= p_{2}\left\{c_2(t)\left(h-p_{2}\right)-m_2(t)-[c_1(t)+c_{2}(t)]p_1(t) \right\}, 
\end{displaymath}
which, by using $u=1/p_{2}$, can be transformed into
\begin{equation}
\label{rough}
\dot{u}=b_{22}(t)u+c_{2}(t) \,\, \textnormal{where} \,\, b_{22}(t)=a_{22}(t)+[c_{1}(t)+c_{2}(t)]p_{1}(t).
\end{equation}

As in the proof of the previous Theorem, let us define $\psi(t)=[c_{1}(t)+c_{2}(t)]p_{1}(t)$ and notice that the linear part of (\ref{rough}) is 
\begin{displaymath}
\quad \dot{z}=[a_{22}(t)+\psi(t)]z.     
\end{displaymath}

As $p_{1}(t)=\textit{o}(1)$ and the colonization rates are bounded functions, it follows
that $\psi(t)=\textit{o}(1)$. By using this fact combined with $\mathcal{M}[a_{22}]<0$, the 
Lemma \ref{LT1} implies the existence of $\theta>0$ such that
$$
\limsup\limits_{t\to +\infty}u(t)\leq \theta
\quad \textnormal{or equivalently} \quad
\liminf\limits_{t\to +\infty}p_{2}(t)\geq \theta^{-1},
$$
and the strong uniform persistence of $p_{2}$ follows.

\medskip

\noindent\textit{Step 3: End of proof.}
Let $u^{*}(t)=1/p_{2,3}^{*}(t)$ and define $v(t)=u(t)-u^{*}(t)$. Note that
$v(\cdot)$ is a solution of the differential equation:
$$
v'=[a_{22}(t)+\psi(t)]v \quad \textnormal{with} \quad \psi(t)=[c_{1}(t)+c_{2}(t)]p_{1}(t).
$$

As $\mathcal{M}[a_{22}]<0$, $u(\cdot)$ is bounded and $\psi(t)=\textit{o}(1)$ exponentially, the Lemma \ref{LT2} implies that $v(t)=\textit{o}(1)$ and the convergence is exponential. Moreover, note that
$$
|p_{2,3}^{*}(t)-p_{2}(t)|=\frac{|u(t)-u^{*}(t)|}{u(t)u^{*}(t)}\leq h^{2}|v(t)|=\textit{o}(1),  
$$
and the exponential convergence follows. 
\end{proof}

\begin{theorem}[global strong extinction]
\label{GE}
    Assume that $\mathcal{M}[c_{1}]<\mathcal{M}[c_{2}]$ is verified. 
    If the fraction of habitable patches together with the colonization and extinction rates are such that the inequality
    \begin{displaymath}
    h < \min\left\{\mathcal{R}_{0},\mathcal{R}_{1}\right\}
    \end{displaymath}
  is verified, then every solution $t\mapsto (p_1(t),p_2(t))$ of \eqref{tilman1-TV-n2}--\eqref{TH} has the exponential convergence: 
    \begin{displaymath}
\lim\limits_{t\to +\infty}|p_{1}(t)| + |p_{2}(t)|=0.
\end{displaymath}
\end{theorem}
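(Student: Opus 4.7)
The plan is to decouple the system hierarchically: first invoke Theorem \ref{Prop-1} to dispatch $p_1(\cdot)$, and then treat the second equation as a scalar perturbed Bernoulli equation whose linear part has positive mean, so that Lemma \ref{LT11} yields the exponential extinction of $p_2(\cdot)$. The ingredients are exactly those used in the proofs of Theorems \ref{BCP} and \ref{FSP}, only recombined.

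First, since $h<\mathcal{R}_{0}$, Theorem \ref{Prop-1} immediately gives that every solution $p_1(\cdot)$ of the first equation of \eqref{tilman1-TV-n2} satisfies $p_1(t)=\textit{o}(1)$, and the decay is in fact exponential, so that for some constants $\tilde K\ge 1$ and $\tilde\alpha>0$ one has $0\le p_1(t)\le \tilde K e^{-\tilde\alpha(t-t_0)}$ for $t\ge t_0$.

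Second, using the change of variables $u=1/p_2$ in the second equation of \eqref{tilman1-TV-n2} (as in the proof of Theorem \ref{FSP}), the equation becomes
\begin{equation*}
u'=[a_{22}(t)+\psi(t)]u+c_2(t),\qquad \psi(t):=[c_1(t)+c_2(t)]p_1(t),
\end{equation*}
where $a_{22}(\cdot)$ is defined in \eqref{a3}. Since the colonization rates are bounded and $p_1(t)=\textit{o}(1)$, we have $\psi(t)=\textit{o}(1)$. Moreover $h<\mathcal{R}_1$ combined with \eqref{resumen6} yields $\mathcal{M}[a_{22}]>0$. The hypotheses of Lemma \ref{LT11} are thus met, and we obtain constants $K\ge 1$ and $\alpha_0>0$ such that every solution with $u(t_0)\ge 1$ satisfies $u(t)\ge K^{-1}e^{\alpha_0(t-t_0)}$. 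Translating back via $p_2(t)=1/u(t)$ gives $0\le p_2(t)\le K e^{-\alpha_0(t-t_0)}$, so $p_2(\cdot)$ also decays exponentially.

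Combining both estimates yields $|p_1(t)|+|p_2(t)|\to 0$ exponentially, as required. The only mild subtlety is handling the initial data for $u$: since \eqref{TH} only ensures $p_2(t_0)\in(0,h)$, one may have $u(t_0)<1$, but one can either shift the initial time to a later instant (using that $u(t)$ becomes arbitrarily large as $p_2(t)$ decreases) or simply apply Lemma \ref{LT11} on a rescaled equation; this is a routine adjustment rather than a genuine obstacle. No truly hard step arises, because the triangular/hierarchical structure of the model reduces the argument to two scalar problems already solved in the preceding results.
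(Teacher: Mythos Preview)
Your proof is correct and follows essentially the same route as the paper: reduce $p_1$ via Theorem~\ref{Prop-1}, then transform the second equation by $u=1/p_2$ into the form \eqref{rough} and exploit $\mathcal{M}[a_{22}]>0$ together with $\psi(t)=\textit{o}(1)$ to force exponential growth of $u$. The only cosmetic difference is that you invoke Lemma~\ref{LT11} directly, whereas the paper re-derives its content inline (via Propositions~\ref{AVED} and~\ref{rough-a}); your packaging is arguably cleaner. One small remark: the ``mild subtlety'' you flag about $u(t_0)\ge 1$ is in fact no subtlety at all, since \eqref{TH} gives $p_2(t_0)<h<1$, hence $u(t_0)=1/p_2(t_0)>1$ automatically.
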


\begin{proof}
By Theorem \ref{Prop-1}, we know that $p_{1}(t)=\textit{o}(1)$. In addition, notice that the second equation of (\ref{tilman1-TV-n2}) has the form: 
\begin{displaymath}
    p_{2}'= p_{2}\left\{c_2(t)\left(h-p_{2}\right)-m_2(t)-[c_1(t)+c_{2}(t)]p_1(t) \right\}, 
\end{displaymath}
which, by using $u=1/p_{2}$, can be transformed into (\ref{rough}), namely,
\begin{equation}
\label{NHET}
\dot{u}=b_{22}(t)u+c_{2}(t) \,\, \textnormal{where} \,\, b_{22}(t)=a_{22}(t)+[c_{1}(t)+c_{2}(t)]p_{1}(t).
\end{equation}

As in the proof of Theorem \ref{FSP}, let us define the map $\psi(t)=[c_{1}(t)+c_{2}(t)]p_{1}(t)$ and notice that the linear part of the above equation is 
\begin{equation}
\label{LET}
\dot{z}=[a_{22}(t)+\phi(t)]z.     
\end{equation}

Let us recall that $h<\mathcal{R}_{1}$ is equivalent to $\mathcal{M}[a_{22}]>0$. Then the Proposition \ref{AVED} implies the existence of $\alpha>0$ and $C\geq 1$ such that
\begin{displaymath}
e^{\int_{t_{0}}^{t}a_{22}(s)\,ds}\geq C^{-1}e^{\alpha(t-t_{0})} \quad \textnormal{for any $t\geq t_{0}$}.
\end{displaymath}

On the other hand, as $\psi(t)=\textit{o}(1)$, the Proposition \ref{rough-a} implies the existence of 
$K>1$ and $\alpha_{0}\in (0,\alpha)$ such that
\begin{displaymath}
e^{\int_{t_{0}}^{t}a_{22}(s)\,ds}\geq K^{-1}e^{\alpha_{0}(t-t_{0})} \quad \textnormal{for any $t\geq t_{0}\geq 0$},
\end{displaymath}
which allows us to prove that 
$$
K^{-1}e^{\int_{t_{0}}^{t}[a_{22}(s)+\phi(s)]\,ds}\geq e^{\alpha_{0}(t-t_{0})}e^{\int_{t_{0}}^{t}\phi(s)\,ds} \quad \textnormal{for any $t\geq t_{0}$},
$$
then, for any $t>t_{0}\geq 0$, we can see that any solution 
of (\ref{NHET}) verifies:
\begin{displaymath}
\begin{array}{rcl}
u(t) &\geq & K^{-1}e^{\alpha_{0} (t-t_{0})}u(t_{0})+K^{-1}\int_{t_{0}}^{t}e^{\int_{t_{0}}^{t}[a_{3}(s)+\phi(s)]c_{2}(s)}\,ds \\\\
&\geq & e^{\alpha_{0} (t-t_{0})},
\end{array}
\end{displaymath}
since $c_{2}(\cdot)>0$ and $u(t_{0})\geq 1$. Then the result follows since we can deduce that 
$$
p_{2}(t)\leq e^{-\alpha_{0}(t-t_{0})} \quad \textnormal{for any $t>t_{0}\geq 0$}.
$$

\end{proof}

\section{The case with $n\geq 3$}

The proofs carried out in the previous sections with $n=1,2$ show 
a pattern of regularity, which we will try to formalize in this section by 
describing a general procedure for cases beyond $n\geq 3$.

\medskip
\subsection{The general case} We will describe a formal procedure in several steps:

\medskip

\noindent\textsf{Step 1:}
If $n=k-1$, let us assume that the value of $h\in (0,1)$ implies that for any solution $t\mapsto p(t)=(p_{1}(t),\ldots,p_{k-1}(t))$ of the system 
\begin{equation}
\label{tilman1-TV-k}
p_{i}'=c_i(t) p_i\left(h-\sum_{j=1}^{i}p_j \right)-m_i(t)p_i-\sum_{j=1}^{i-1}c_j(t) p_i p_j, \quad
\textnormal{with $i=1,\ldots,k-1$},
\end{equation}
with $p_{1}(0)+\ldots+p_{k-1}(0)\in (0,h)$, there exists a unique $\ell \in \{1,\ldots,2^{k-1}\}$
and $k-1$ nonnegative almost periodic functions $t\mapsto p_{j,\ell}^{\ast}(t)$ with $j \in \{1,\ldots,k-1\}$ such that
\begin{displaymath}
\lim\limits_{t\to +\infty}|p_{1}(t)-p_{1,\ell}^{\ast}(t)|+\ldots+|p_{k-1}(t)-p_{k-1,\ell}^{\ast}(t)|=0,  
\end{displaymath}
and the convergence is exponential.

In other words, we are assuming the existence of $2^{k-1}$ scenarios of strong persistence/extinction described by the almost periodic functions $t\mapsto p_{\ell}^{\ast}(t)$:
$$
p_{\ell}^{*}(t)=(p_{1,\ell}^{*}(t),p_{2,\ell}^{*}(t),\ldots,p_{k-1,\ell}^{*}(t))
\quad \textnormal{where $\ell \in \{1,\ldots,2^{k-1}\}$},
$$
and any component $t\mapsto p_{j,\ell}^{\ast}(t)$ is either
zero or a positive almost periodic function.

\medskip
\noindent\textsf{Step 2:}
If $n=k$, the previous step ensures that the behavior of the first $k-1$ components of the system
\begin{equation}
\label{n=k}
p_{i}'=c_i(t) p_i\left(h-\sum_{j=1}^{i}p_j \right)-m_i(t)p_i-\sum_{j=1}^{i-1}c_j(t) p_i p_j, \quad
\textnormal{with $i=1,\ldots,k$},
\end{equation}
is well known, while the dynamics of the last equation
\begin{equation}
\label{mishvsof}
p_{k}'=c_{k}(t) p_k\left(h-p_{k}-\sum_{j=1}^{k-1}p_j \right)-m_{k}(t)p_{k}-p_{k}\sum_{j=1}^{k-1}c_j(t)p_j,
\end{equation}
remains to be studied. In order to do that, we know by the previous step that for any solution
$p(t)=(p_{1}(t),\ldots,p_{k-1}(t),p_{k}(t))$ of (\ref{n=k}), there exists a unique
$\ell \in \{1,\ldots,2^{k-1}\}$ such that
\begin{displaymath}
p_{j}(t)-p_{j,\ell}^{\ast}(t)=\textit{o}(1) \quad \textnormal{where $j\in \{1,\ldots,k-1\}$},
\end{displaymath}
and the convergence is exponential. This allows us to rewrite the equation (\ref{mishvsof}) as follows:
\begin{equation}
\label{mishvsof++}
\begin{array}{rcl}
p_{k}' & = & \displaystyle c_{k}(t) p_k\left(h-p_{k}-\sum_{j=1}^{k-1}[p_{j,\ell}^{\ast}(t)+\textit{o}(1)] \right)-m_{k}(t)p_{k}\\
&& \displaystyle -p_{k}\sum_{j=1}^{k-1}c_j(t)[p_{j,\ell}^{\ast}(t)+\textit{o}(1)],
\end{array}
\end{equation}
and we conclude that (\ref{mishvsof}) is asymptotically 
equivalent to the almost periodic scalar equation:
\begin{equation}
\label{mishvsof+}
p_{k}'=c_{k}(t) p_k\left(h-p_{k}-\sum_{j=1}^{k-1}p_{j,\ell}^{\ast}(t) \right)-m_{k}(t)p_{k}-p_{k}\sum_{j=1}^{k-1}c_j(t)p_{j,\ell}^{\ast}(t).
\end{equation}

\medskip
\noindent\textsf{Step 3:} We know that the change of variables
$u_{k}=1/p_{k}$ transforms (\ref{mishvsof+}) into the almost periodic inhomogeneous equation:
\begin{equation}
\label{linmish}
u_{k}'=\underbrace{-\left[hc_{k}(t)-m_{k}(t)-\sum_{j=1}^{k-1}\{c_{k}(t)+c_j(t)\}p_{j,\ell}^{\ast}(t)\right]}_{:=a_{k\ell}(t)}u_{k}+c_{k}(t).
\end{equation}

By Proposition \ref{AVED} we know that the linear part of the above equation has an exponential dichotomy on $\mathbb{R}$ if and only if
\begin{displaymath}
\mathcal{M}[a_{k\ell}]<0 \iff h>\mathcal{R}_{k-1,\ell}^{\ast} \quad \textnormal{or} \quad
\mathcal{M}[a_{k\ell}]>0 \iff h<\mathcal{R}_{k-1,\ell}^{\ast},
\end{displaymath}
where the habitability threshold $\mathcal{R}_{k,\ell}^{\ast}$ is defined by
\begin{displaymath}
\mathcal{R}_{k-1,\ell}^{\ast}:=\mathcal{R}_{k-1}+\ell_{k-1,\ell}\quad \textnormal{with} \quad \ell_{k-1,\ell}=\frac{\mathcal{M}\left[\sum_{j=1}^{k-1}\{c_{k}(t)+c_j(t)\}p_{j,\ell}^{\ast}(t)\right]}{\mathcal{M}[c_{k}]}.    
\end{displaymath}

By Proposition \ref{AVAD} and Corollary \ref{cor1} we know that if $\mathcal{M}[a_{k\ell}]<0$ then (\ref{linmish}) has a unique almost periodic solution described by
\begin{equation}
\label{pit-sc}
u_{k,\ell}^{\ast}(t)=\int_{-\infty}^{t}e^{-\int_{s}^{t}a_{k\ell}(\tau)\,d\tau}c_{k}(s)\,ds,
\end{equation}
while any other solution $t\mapsto u_{k}(t)$ verifies $|u_{k}-u_{k}^{\ast}|=\textit{o}(1)$ and the convergence
is exponential.

\medskip
\noindent\textsf{Step 4:}  Similarly, we know that the change of variables
$v_{k}=1/p_{k}$ transforms (\ref{mishvsof++}) into the inhomogeneous asymptotically almost periodic equation:
\begin{equation}
\label{FEQ}
v_{k}'=-[a_{k\ell}(t)+\psi(t)]v_{k}+c_{k}(t),  
\end{equation}
where the function $\psi(t)=\textit{o}(1)$ converges exponentially.

\medskip

 \textsf{Case A:}  $\mathcal{M}[a_{k\ell}]<0$, the  Lemma \ref{LT1} implies the existence $\theta>0$ such that any solution $t\mapsto v_{k}(t)$ of (\ref{FEQ}) verifies
$$
\limsup\limits_{t\to +\infty}v_{k}(t)\leq \theta\quad \textnormal{or equivalently} \quad
\liminf\limits_{t\to +\infty}p_{k}(t)\geq \theta^{-1},
$$
and the strong uniform persistence of $p_{k}$ follows. Now, as $u_{k,\ell}^{*}(t)=1/p_{k,\ell}^{*}(t)$ and $v(t)=1/p_{k}(t)$, we define the auxiliary function $w(t):=v_{k}(t)-u_{k,\ell}^{*}(t)$ and note that:
$$
w'=-a_{k\ell}(t)w+\psi(t)v_{k}(t).
$$

As $t\mapsto v_{k}(t)$ is bounded on $[0,+\infty)$ we have that $t\mapsto \psi(t)v_{k}(t)=\textit{o}(1)$ and the
$\psi(t)$ convergence is exponential. By using this fact combined with $\mathcal{M}[a_{k\ell}]>0$, the Lemma \ref{LT2} 
implies that $v_{k}(t)-u_{k,\ell}^{*}(t)=\textit{o}(1)$ and the convergence is exponential. Finally, it can be deduced that
$$
|p_{k}(t)-p_{k,\ell}^{\ast}(t)|=p_{k}(t)p_{k,\ell}^{*}(t)|v_{k}(t)-u_{k,\ell}^{*}(t)|<|v_{k}(t)-u_{k,\ell}^{*}(t)|=\textit{o}(1),
$$
then we have that any solution of (\ref{mishvsof}) converges exponentially to $p_{k,\ell}^{\ast}(\cdot)$ and the
strong persistence of the $k$--th species in ensured.

\medskip

\textsf{Case B:}  $\mathcal{M}[a_{k\ell}]>0$, the  Lemma \ref{LT11} implies the existence $K>0$ 
and $\alpha_{0}>0$
such that any solution $t\mapsto v_{k}(t)$ of (\ref{FEQ}) verifies $v_{k}(t)>K^{-1}e^{\alpha(t-t_{0})}$, which implies that 
any solution $t\mapsto p_{k}(t)$ of (\ref{mishvsof}) verifies $p_{k}(t)\leq Ke^{-\alpha_{0}(t-t_{0})}$ and the
strong persistence of the $k$--th species in ensured.

\medskip
\noindent \textsf{Final Step:} In the above steps, we have proved that for any fixed $\ell \in \{1,\ldots,2^{k-1}\}$. There
exists to possible scenarios of strong persistence/extinction of (\ref{mishvsof}) provided that $h>\mathcal{R}_{k,\ell}^{\ast}$
or $h<\mathcal{R}_{k,\ell}^{\ast}$. In consequence, there exist $2^{k}$ scenarios of strong persistence/extinction for the metapopulation (\ref{n=k}).

\medskip 

\subsection{The case $n=3$} We will provide a detailed application of the above procedure for the
case $n=3$, where the metapopulation model described by (\ref{tilman1-TV}) becomes: 
\begin{equation}
\label{tilman1-TV-n3}
\left\{
\begin{array}{rcl}
\displaystyle p_{1}' &=& p_{1}\left\{c_1(t)\left(h-p_1\right)-m_1(t)\right\} \\
\displaystyle p_{2}' &=& p_{2}\left\{c_2(t)\left(h-p_{1}-p_{2} \right)-m_2(t)-c_1(t) p_1\right\}  \\
\displaystyle p_{3}' &=& p_{3}\left\{c_{3}(t)\left(h-p_{1}-p_{2}-p_{3}\right)-m_{3}(t)
-c_{1}(t)p_{1}-c_{2}(t)p_{2}\right\},
\end{array}\right.
\end{equation}
with initial conditions 
\begin{equation}
\label{THn3}
0<p_{1}(t_{0})+p_{2}(t_{0})+p_{3}(t_{0})<h.
\end{equation}

\subsubsection{Scenarios when $\max\{\mathcal{R}_{0},\mathcal{R}_{1}^{*}\}<h$}

In this case, the behavior of the first two equations of (\ref{tilman1-TV-n3}) is described by Theorem \ref{GP} while the third equation is asymptotically equivalent to the almost periodic equation:
\begin{displaymath}
p_{3}'=p_{3}\left\{c_{3}(t)\left(h-p_{3}-p_{\tiny{1,1}}^{*}(t)-p_{2,1}^{*}(t)\right)-m_{3}(t)
-c_{1}(t)p_{1,1}^{*}(t)-c_{2}(t)p_{2,1}^{*}(t)\right\},
\end{displaymath}
where the positive almost periodic functions $t\mapsto p_{1,1}^{*}(t)$ and $t\mapsto p_{2,1}^{*}(t)$
are defined in (\ref{sol-periodique}) and (\ref{p2}) respectively. This equation can be transformed via $u=1/p_{3}$ into
\begin{equation}
\label{n3-1}
u'=a_{31}(t)u+c_{3}(t),
\end{equation}
where
\begin{displaymath}
 a_{31}(t)=-[c_{3}(t)\{h-p_{1,1}^{*}(t)-p_{2,1}^{*}(t)\}-m_{3}(t)-c_{1}(t)p_{1,1}^{*}(t)-c_{2}(t)p_{2,1}^{*}(t)].
\end{displaymath}

By following an analogous procedure to the previous section, we can see that the strong persistence/extinction of the third species is determined by either the positiveness or negativeness of the average $\mathcal{M}[a_{31}]$:
\begin{equation}
\label{sp3}
\mathcal{M}[a_{31}]<0  \iff  \mathcal{R}_{2,1}^{*}<h \quad \textnormal{and} \quad
\mathcal{M}[a_{31}]>0  \iff  \mathcal{R}_{2,1}^{*}>h,
\end{equation}
where 
\begin{displaymath}
\mathcal{R}_{2,1}^{*}:=\mathcal{R}_{2}+\ell_{2,1} \quad
\textnormal{with} \,\, \ell_{2,1}:=\frac{\mathcal{M}[p_{1,1}^{*}(c_{1}+c_{3})]+\mathcal{M}[p_{2,1}^{*}(c_{2}+c_{3})]}{\mathcal{M}[c_{3}]}
\end{displaymath}
and $\mathcal{R}_{2}$ is described by (\ref{GTH}) with $i=3$. Notice that,
this defines a extinction interval or niche shadow $[\mathcal{R}_{2},\mathcal{R}_{2}+\ell_{2,1}]$ for the species 3, which is induced
by the superior competitors.

Now, let $t\mapsto (p_{1}(t),p_{2}(t),p_{3}(t))$ be a solution of (\ref{tilman1-TV-n3})--(\ref{THn3}), then
we have the following scenarios where the convergence is exponential:
\begin{itemize}
\item[$\bullet$] If $\max\left\{\mathcal{R}_{0},\mathcal{R}_{1,1}^{*},\mathcal{R}_{2,1}^{*}\right\}<h$, then
\begin{displaymath}
\lim\limits_{t\to +\infty}|p_{1}(t)-p_{1,1}^{*}(t)|+|p_{2}(t)-p_{2,1}^{*}(t)|+|p_{3}(t)-p_{3,1}^{*}(t)|=0,    
\end{displaymath}
\item[$\bullet$] If $\max\left\{\mathcal{R}_{0},\mathcal{R}_{1,1}^{*}\right\}<h<\mathcal{R}_{2,1}^{*}$, then
\begin{displaymath}
\lim\limits_{t\to +\infty}|p_{1}(t)-p_{1,1}^{*}(t)|+|p_{2}(t)-p_{2,1}^{*}(t)|+|p_{3}(t)|=0  \quad \textnormal{(note that $p_{3,2}^{*}(t)=0$)},    
\end{displaymath}
\end{itemize}
and the almost periodic functions $t\mapsto p_{i,1}^{*}(h,t)$ are respectively described by (\ref{sol-periodique})
and (\ref{p2}) for $i=1$ and $i=2$, while
\begin{displaymath}
p_{3,1}^{*}(t)=\left( \int_{-\infty}^{t}e^{\int_{s}^{t}a_{31}(r)\,dr}c_{3}(s)\,ds\right)^{-1}.
\end{displaymath}

\subsubsection{Scenarios when $\mathcal{R}_{0}<h<\mathcal{R}_{1,1}^{*}$}
The behavior of the first two equations of (\ref{tilman1-TV-n3}) is described by Theorem \ref{BCP} while the third equation is asymptotically equivalent to the almost periodic equation:
\begin{displaymath}
p_{3}'=p_{3}\left\{c_{3}(t)\left(h-p_{3}-p_{1,1}^{*}(t)\right)-m_{3}(t)
-c_{1}(t)p_{1,1}^{*}(t)\right\},
\end{displaymath}
where the almost periodic function $t\mapsto p_{1,1}^{*}(t)$ 
is defined in (\ref{sol-periodique}). This equation is transformed into
\begin{equation}
\label{n3-2}
u'=a_{32}(t)u+c_{3}(t) \,\, \textnormal{where} \,\,\, a_{32}(t)=-[hc_{3}(t)-m_{3}(t)-\{c_{1}(t)+c_{3}(t)\}p_{1,1}^{*}(t)].
\end{equation}

As before, the strong persistence/extinction of the third species is determined by either the positiveness or negativeness of the average $\mathcal{M}[a_{32}]$:
\begin{equation}
\label{sp31}
\mathcal{M}[a_{32}]<0  \iff  \mathcal{R}_{2,2}^{*}<h \quad \textnormal{and} \quad
\mathcal{M}[a_{32}]>0  \iff  \mathcal{R}_{2,2}^{*}>h,
\end{equation}
where 
\begin{displaymath}
\mathcal{R}_{2,2}^{*}:=\mathcal{R}_{2}+\ell_{2,2} \quad \textnormal{where} \quad \ell_{2,2}=\frac{\mathcal{M}[p_{1}^{*}(c_{1}+c_{3})]}{\mathcal{M}[c_{3}]}
\end{displaymath}
and note that this defines a extinction interval or niche shadow $[\mathcal{R}_{2},\mathcal{R}_{2}+\ell_{2,2}]$ for the species 3, which is induced
by a superior competitor, namely, the species 1.

Let $t\mapsto (p_{1}(t),p_{2}(t),p_{3}(t))$ be a solution of (\ref{tilman1-TV-n3})--(\ref{THn3}) then
the following sce\-na\-rios with exponential convergence are verified:
\begin{itemize}
\item[$\bullet$] If $\max\{\mathcal{R}_{0},\mathcal{R}_{2,2}^{*}\}<h<\mathcal{R}_{1,1}^{*}$, then
\begin{displaymath}
\lim\limits_{t\to +\infty}|p_{1}(t)-p_{1,1}^{*}(t)|+|p_{2}(t)|+|p_{3}(t)-p_{3,3}^{*}(t)|=0.    
\end{displaymath}
\item[$\bullet$] If $\mathcal{R}_{0}<h<\min\{\mathcal{R}_{1,1}^{*},\mathcal{R}_{2,2}^{*}\}$, then
\begin{displaymath}
\lim\limits_{t\to +\infty}|p_{1}(t)-p_{1,1}^{*}(t)|+|p_{2}(t)|+|p_{3}(t)|=0   \quad \textnormal{(note that $p_{3,4}^{*}(t)=0$)},   
\end{displaymath}
\end{itemize}
and the almost periodic functions $t\mapsto p_{i,1}^{*}(t)$ are respectively described by (\ref{sol-periodique})
and (\ref{p2}) for $i=1$ and $i=2$, while the almost periodic function $t\mapsto p_{3,3}^{*}(t
)$ is defined by:
\begin{displaymath}
p_{3,3}^{*}(t)=\left( \int_{\infty}^{t}e^{\int_{s}^{t}a_{32}(r)\,dr}c_{3}(s)\,ds\right)^{-1}.  \end{displaymath}

\subsubsection{Scenarios when $\mathcal{R}_{1}<h<\mathcal{R}_{0}$}
 The behavior of the first two equations of (\ref{tilman1-TV-n3}) is described by Theorem \ref{FSP} while the third equation is asymptotically equivalent to the almost periodic equation:
\begin{displaymath}
p_{3}'=p_{3}\left\{c_{3}(t)\left(h-p_{3}-p_{2,3}^{*}(t)\right)-m_{3}(t)
-c_{2}(t)p_{2,3}^{*}(t)\right\},
\end{displaymath}
where the almost periodic function $t\mapsto p_{2,3}^{*}(t)$
is defined by (\ref{q2}). This equation can be transformed into
\begin{equation}
\label{n3-32}
u'=a_{33}(t)u+c_{3}(t) \quad \textnormal{where} \quad 
 a_{33}(t)=-[hc_{3}(t)-m_{3}(t)-\{c_{2}(t)+c_{3}(t)\}p_{2,3}^{*}(t)].
\end{equation}

The strong persistence/extinction of the third species is determined by either the positiveness or negativeness of the average $\mathcal{M}[a_{33}]$:
\begin{equation}
\label{sp30}
\mathcal{M}[a_{33}]<0  \iff  \mathcal{R}_{2,3}^{*}<h \quad \textnormal{and} \quad
\mathcal{M}[a_{33}]>0  \iff  \mathcal{R}_{2,3}^{*}>h,
\end{equation}
where 
\begin{displaymath}
\mathcal{R}_{2,3}^{*}:=\mathcal{R}_{2}+\ell_{2,3} \quad \textnormal{with} \quad \ell_{2,3}=\frac{\mathcal{M}[p_{2,3}^{*}(c_{2}+c_{3})]}{\mathcal{M}[c_{3}]},
\end{displaymath}
which defines an extinction interval or niche shadow $[\mathcal{R}_{2},\mathcal{R}_{2}+\ell_{2,3}]$ for the species 3, which is induced
by a superior competitor, namely, the species 2.

Let $t\mapsto (p_{1}(t),p_{2}(t),p_{3}(t))$ be a solution of (\ref{tilman1-TV-n3})--(\ref{THn3}), then we have the following
scenarios with exponential convergence:
\begin{itemize}
\item[$\bullet$] If $\max\{\mathcal{R}_{1},\mathcal{R}_{2,3}^{*}\}<h<\mathcal{R}_{0}$, then
\begin{displaymath}
\lim\limits_{t\to +\infty}|p_{1}(t)|+|p_{2}(t)-p_{2,3}^{*}(t)|+|p_{3}(t)-p_{3,5}^{*}(t)|=0.    
\end{displaymath}
\item[$\bullet$] If $\mathcal{R}_{1}<h<\min\{\mathcal{R}_{0},\mathcal{R}_{2,3}^{*}\}$, then
\begin{displaymath}
\lim\limits_{t\to +\infty}|p_{1}(t)|+|p_{2}(t)-p_{2,3}^{*}(t)|+|p_{3}(t)|=0    \quad \textnormal{(note that $p_{3,6}^{*}(t)=0$)}, 
\end{displaymath}
\end{itemize}
where the almost periodic functions $t\mapsto p_{2,3}^{*}(t)$ and $t\mapsto p_{3,5}^{*}(t)$ are described by
(\ref{q2}) and:
\begin{displaymath}
p_{3,5}^{*}(t)=\left( \int_{-\infty}^{t}e^{\int_{s}^{t}a_{33}(r)\,dr}c_{3}(s)\,ds\right)^{-1}.  \end{displaymath}

\subsubsection{Scenarios when $h<\min\{\mathcal{R}_{0},\mathcal{R}_{1}\}$}
 The behavior of the first two equations of (\ref{tilman1-TV-n3}) is described by Theorem \ref{GE} while the third equation is asymptotically equivalent to the almost periodic equation:
\begin{displaymath}
p_{3}'=p_{3}\left\{c_{3}(t)\left(h-p_{3}\right)-m_{3}(t)\right\},
\end{displaymath}
which can be transformed into
\begin{equation}
\label{n3-33}
u'=a_{34}(t)u+c_{3}(t) \quad \textnormal{where} \quad
 a_{34}(t)=-[hc_{3}(t)-m_{3}(t)].
\end{equation}

The strong persistence/extinction of the third species is determined by either the positiveness or negativeness of the average $\mathcal{M}[a_{34}]$:
\begin{equation}
\label{sp4}
\mathcal{M}[a_{34}]<0  \iff  \mathcal{R}_{2}<h \quad \textnormal{and} \quad
\mathcal{M}[a_{34}]>0  \iff  \mathcal{R}_{2}>h.
\end{equation}

Let $t\mapsto (p_{1}(t),p_{2}(t),p_{3}(t))$ be a solution of (\ref{tilman1-TV-n3})--(\ref{THn3}) then we have the
following scenarios with exponential convergence:
\begin{itemize}
\item[$\bullet$] If $\mathcal{R}_{2}<h<\min\{\mathcal{R}_{0},\mathcal{R}_{1}\}$, then
\begin{displaymath}
\lim\limits_{t\to +\infty}|p_{1}(t)|+|p_{2}(t)|+|p_{3}(t)-p_{3,7}^{*}(t)|=0.    
\end{displaymath}
\item[$\bullet$] If $h<\min\{\mathcal{R}_{0},\mathcal{R}_{1},\mathcal{R}_{2}\}$, then
\begin{displaymath}
\lim\limits_{t\to +\infty}|p_{1}(t)|+|p_{2}(t)|+|p_{3}(t)|=0  \quad \textnormal{(note that $p_{3,8}^{*}(t)=0$)}
\end{displaymath}
\end{itemize}
where the almost periodic function $t\mapsto p_{3,7}^{*}(t)$ is defined by:
\begin{displaymath}
p_{3,7}^{*}(t)=\left( \int_{\infty}^{t}e^{\int_{s}^{t}a_{34}(r)\,dr}c_{3}(s)\,ds\right)^{-1}.  \end{displaymath}

The $2^{3}$ Strong persistence/Strong extinction scenarios are summarized in the Figure \ref{sce3}

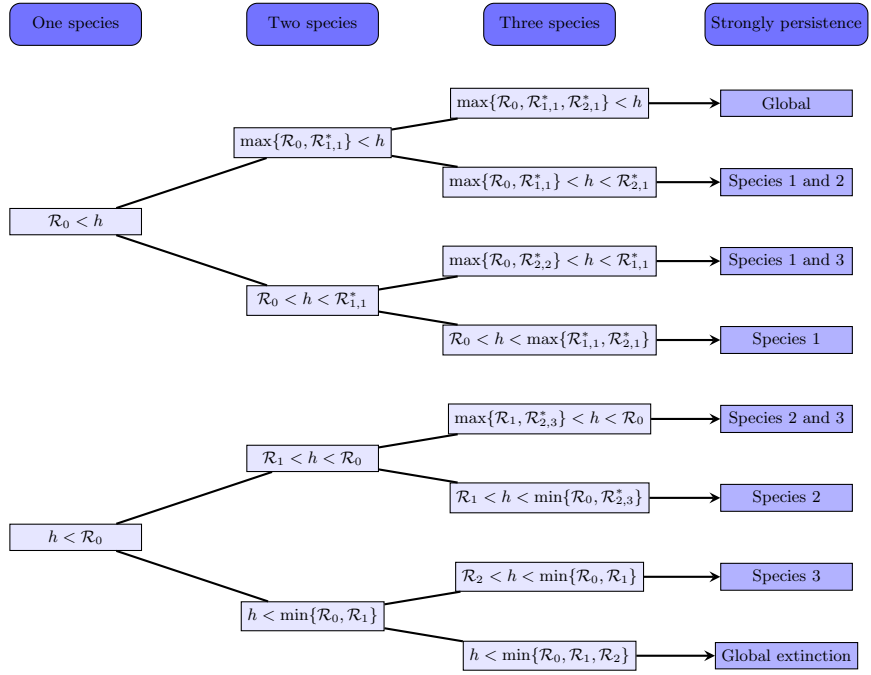
\begin{figure}[h]
    \centering
    \begin{tikzpicture}[node distance=1.5cm, scale=0.7, transform shape]

\node (species2) [startstop] {Two species};
\node (species3) [startstop, right of = species2, xshift=3 cm] {Three species};
\node (species1) [startstop, left of = species2, xshift=-3cm] {One species};
\node (persistence) [startstop, right of = species3, xshift=3cm] {Strongly persistence};
\node (global) [process, below of = persistence] {Global};
\node (12) [process, below of = global] {Species 1 and 2};
\node (13) [process, below of = 12] {Species 1 and 3};
\node (1) [process, below of = 13] {Species 1};

\node (23) [process, below of = 1] {Species 2 and 3};
\node (2) [process, below of = 23] {Species 2};
\node (3) [process, below of = 2] {Species  3};
\node (0) [process, below of = 3] {Global extinction};

\node (cg) [decision, left of = global, xshift=-3cm] {$\max\{ \mathcal{R}_0,\mathcal{R}_{1,1}^*,\mathcal{R}_{2,1}^*\}<h$};
\node (c12) [decision, left of = 12, xshift=-3cm] {$\max\{ \mathcal{R}_0,\mathcal{R}_{1,1}^*\}<h<\mathcal{R}_{2,1}^*$};
\node (c13) [decision, left of = 13, xshift=-3cm] {$\max\{ \mathcal{R}_0,\mathcal{R}_{2,2}^{*}\}<h<\mathcal{R}_{1,1}^*$};
\node (c1) [decision, left of = 1, xshift=-3cm] {$ \mathcal{R}_0<h<\max\{\mathcal{R}_{1,1}^*,\mathcal{R}_{2,1}^{*}\}$};

\node (c23) [decision, left of = 23, xshift=-3cm] {$\max\{ \mathcal{R}_1,\mathcal{R}_{2,3}^{*}\}<h<\mathcal{R}_0$};
\node (c2)[decision, left of = 2, xshift=-3cm] {$ \mathcal{R}_1<h<\min\{\mathcal{R}_0,\mathcal{R}_{2,3}^{*}\}$};
\node (c3)[decision, left of = 3, xshift=-3cm] {$ \mathcal{R}_2<h<\min\{\mathcal{R}_0,\mathcal{R}_1\}$};
\node (c0) [decision, left of = 0, xshift=-3cm] {$ h<\min\{\mathcal{R}_0,\mathcal{R}_1,\mathcal{R}_2\}$};

\node (cg2) [decision, left of = cg, xshift=-3cm, yshift=-0.75cm] {$\max\{ \mathcal{R}_0,\mathcal{R}_{1,1}^*\}<h$};
\node (c132) [decision, left of = c13, xshift=-3cm, yshift=-0.75cm] {$ \mathcal{R}_0<h<\mathcal{R}_{1,1}^*$};

\node (c232) [decision, left of = c23, xshift=-3cm, yshift=-0.75cm] {$\mathcal{R}_1<h<\mathcal{R}_0$};
\node (c32)[decision, left of = c3, xshift=-3cm, yshift=-0.75cm] {$ h<\min\{\mathcal{R}_0,\mathcal{R}_1\}$};
\node (cg21) [decision, left of = cg2, xshift=-3cm, yshift=-1.5cm] {$ \mathcal{R}_0<h$};

\node (c2321) [decision, left of = c232, xshift=-3cm, yshift=-1.5cm] {$h<\mathcal{R}_0$};

\draw [arrow] (cg21) -- (cg2) -- (cg)--(global);
\draw [arrow] (cg2) -- (c12)--(12);

\draw [arrow] (c132) -- (c13)--(13);
\draw [arrow] (cg21) --(c132) -- (c1)--(1);

\draw [arrow] (c2321) --(c232) -- (c23)--(23);
\draw [arrow] (c232) -- (c2)--(2);

\draw [arrow] (c32) -- (c3)--(3);
\draw [arrow] (c2321) --(c32) -- (c0)--(0);

\end{tikzpicture}
    \caption{Summary of conditions for the different scenarios of strong persistence in the case $n=3$.}
    \label{sce3}
\end{figure}

\section{Some Scenarios with weak extinction at some level}
As we have seen, the strong persistence/extinction results of the previous sections for $n=1,2,3$ were based
on the study of non homogeneous linear equations where the linear
part is an almost periodic function having nonzero average. In this section, we will drop
this assumption for some averages,
so the corresponding linear equation has not an exponential dichotomy
and, consequently, the results from section 2 cannot be used to study the metapopulation model at some step. 

\subsection{Technical results}

The following results are related to a linear scalar equation
and its nonhomogenous perturbation in absence of exponential dichotomy.

\begin{lemma}
\label{L3}
Let us consider an almost periodic function $\mathfrak{a}\colon \mathbb{R}\to \mathbb{R}$ and a continuous function
 $\psi\colon [0,+\infty)\to \mathbb{R}$. If $\mathcal{M}[\mathfrak{a}]=0$ and 
$\psi(t)=\textit{o}(1)$, then the perturbed equation
\begin{equation}
\label{perturbada}
\dot{x}=[\mathfrak{a}(t)+\psi(t)]x    
\end{equation}
does not has an exponential dichotomy on $[0,+\infty)$.
\end{lemma}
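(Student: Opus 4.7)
The plan is to argue by contradiction, leveraging the roughness property (Proposition \ref{rough-a}) in combination with the spectral characterization for almost periodic coefficients that immediately precedes Proposition \ref{AVED}.

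Suppose, for contradiction, that the perturbed equation \eqref{perturbada} does admit an exponential dichotomy on $[0,+\infty)$ with constants $K\geq 1$ and $\alpha>0$. The idea is to \emph{subtract back} the perturbation $\psi$: since $-\psi(t)=\textit{o}(1)$, we have
\[
\limsup_{t\to+\infty}|-\psi(t)|=0<\alpha,
\]
so the hypothesis of Proposition \ref{rough-a} is trivially satisfied for the coefficient $\mathfrak{a}(\cdot)+\psi(\cdot)$ perturbed by $-\psi(\cdot)$. Applying the roughness result to the dichotomic equation \eqref{perturbada} with this admissible perturbation yields that
\[
\dot{x}=[\mathfrak{a}(t)+\psi(t)-\psi(t)]x=\mathfrak{a}(t)x
\]
also possesses an exponential dichotomy on $[0,+\infty)$.

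This is the desired contradiction: since $\mathfrak{a}(\cdot)$ is almost periodic, the spectral identity recalled in the paragraph preceding Proposition \ref{AVED} gives $\Sigma_{[0,+\infty)}(\mathfrak{a})=\{\mathcal{M}[\mathfrak{a}]\}=\{0\}$, so by Definition \ref{DEFSPEC} the equation $\dot{x}=\mathfrak{a}(t)x$ does \emph{not} have an exponential dichotomy on $[0,+\infty)$. Equivalently, one can cite directly Proposition \ref{AVED}, which states that under $\mathcal{M}[\mathfrak{a}]=0$ the homogeneous equation lacks exponential dichotomy. Either way, the assumption on \eqref{perturbada} is untenable.

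No step is really an obstacle; the only subtlety is the direction of the roughness argument (using it to \emph{rule out} dichotomy of the unperturbed equation by reversing the role of the perturbation), and noticing that the smallness condition $\limsup|\mathfrak{e}(t)|<\alpha$ in Proposition \ref{rough-a} holds vacuously because $\psi(t)\to 0$. Hence the conclusion follows with essentially no computation.
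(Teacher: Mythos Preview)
Your proof is correct but follows a genuinely different route from the paper. The paper computes the dichotomy spectrum directly: invoking Proposition \ref{SPEC}, it shows $\Sigma_{[0,+\infty)}(\mathfrak{a}+\psi)=[\beta^{-}(\mathfrak{a}+\psi),\beta^{+}(\mathfrak{a}+\psi)]$ and then squeezes these Bohl exponents to zero via the sub/superadditivity inequalities $\beta^{\pm}(\mathfrak{a})+\beta^{\mp}(\psi)$, using that $\beta^{\pm}(\mathfrak{a})=\mathcal{M}[\mathfrak{a}]=0$ (almost periodicity) and that $\beta^{\pm}(\psi)=0$ (a short direct estimate from $\psi(t)=o(1)$). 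This yields the stronger conclusion $\Sigma_{[0,+\infty)}(\mathfrak{a}+\psi)=\{0\}$, of which the lemma is the $\lambda=0$ case. Your argument instead runs roughness \emph{in reverse}: assuming a dichotomy for $\mathfrak{a}+\psi$ and perturbing by $-\psi$ forces a dichotomy for $\mathfrak{a}$, contradicting $\mathcal{M}[\mathfrak{a}]=0$. This is shorter and avoids any computation with Bohl exponents, at the cost of not delivering the full spectral identity. Both are clean; yours is the more economical route to the bare statement, while the paper's gives strictly more information.
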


\begin{proof}
In order to conclude the result, we will prove a stronger fact, namely, $\Sigma_{[0,+\infty)}(\mathfrak{a}+\psi)=\{0\}$
which implies, according to Definition \ref{DEFSPEC}, that (\ref{perturbada}) has not an exponential dichotomy
on $[0,+\infty)$.

By Proposition \ref{SPEC}, we know that $\Sigma_{[0,+\infty)}(\mathfrak{a}+\psi)=[\beta^{-}(\mathfrak{a}+\psi),\beta^{+}(\mathfrak{a}+\psi)]$. Now, by using the
properties of lower and upper limits, we can deduce that:
$$
\beta^-(\mathfrak{a}) + \beta^-(\psi) \le \beta^-(\mathfrak{a}+\psi) \le \beta^+(\mathfrak{a}+\psi)
\le \beta^+(\mathfrak{a}) + \beta^+(\psi).
$$

Moreover, as $\mathfrak{a}(\cdot)$ is almost periodic, we know that the Bohl exponents coincide with the average $\mathcal{M}[\mathfrak{a}]$ which implies  that $\beta^+(\mathfrak{a})=\beta^-(\mathfrak{a})=0$ and we only need to verify that $\beta^+(\psi)=\beta^-(\psi)=0$. Indeed, for $\varepsilon >0$ fix $s_0$ such that $|\psi(s)| <\varepsilon$ for $s>s_0$, then for arbitrary $t>s>s_0$ we obtain
$$\left|\frac 1{t-s}\int_s^t\psi(r)\, dr\right| \le \varepsilon 
$$
and the result holds trivially.
\end{proof}

\begin{lemma}
\label{L33}
Let $\mathfrak{a}\colon \mathbb{R}\to \mathbb{R}$ be almost periodic with $\mathcal{M}[\mathfrak{a}]=0$ and consider two bounded continuous functions: a positive almost periodic function $\mathfrak{b}\colon \mathbb{R}\to (0,+\infty)$ 
{with {
    $\displaystyle\liminf_{t\to +\infty}\mathfrak b(t)=\mathfrak b_{0}>0$}}
and $\psi\colon [0,+\infty)\to \mathbb{R}$ where 
$\psi(t)=\textit{o}(1)$, then any solution of the perturbed equation
\begin{equation}
\label{perturbada+}
\dot{u}=[\mathfrak{a}(t)+\psi(t)]u+\mathfrak{b}(t)    
\end{equation}
with positive initial condition at $t=0$ is upper unbounded on $[0,+\infty)$.
\end{lemma}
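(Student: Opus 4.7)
My plan is to argue by contradiction, combining two results already established in the excerpt: item 4 of Proposition \ref{Prop-Cop} and Lemma \ref{L3}. No delicate estimate on $u(\cdot)$ itself will be needed; the whole argument lives at the level of the structural dichotomy theory.

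As a preliminary, I will observe that any solution $t\mapsto u(t)$ of \eqref{perturbada+} starting with $u(0)>0$ remains strictly positive on $[0,+\infty)$: if it attempted to vanish at a first instant $t_{1}$, one would have $\dot u(t_{1})=\mathfrak{b}(t_{1})>0$, which prevents the crossing. Consequently, being \emph{upper} unbounded on $[0,+\infty)$ is equivalent to being unbounded there, so I may argue with boundedness in both senses interchangeably.

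Now assume, for contradiction, that $u$ is bounded above on $[0,+\infty)$; by the previous paragraph it is then bounded. I reinterpret \eqref{perturbada+} as a scalar inhomogeneous equation with continuous coefficient $\tilde{\mathfrak a}(t):=\mathfrak{a}(t)+\psi(t)$ and forcing term $\mathfrak{b}(\cdot)$. The function $\mathfrak{b}$ is bounded and continuous (almost periodic) and satisfies $\liminf_{t\to+\infty}|\mathfrak{b}(t)|=\mathfrak{b}_{0}>0$; hence item 4 of Proposition \ref{Prop-Cop} applies to $\tilde{\mathfrak a}$ and yields that the homogeneous equation
\begin{displaymath}
\dot x=[\mathfrak{a}(t)+\psi(t)]\,x
\end{displaymath}
has an exponential dichotomy on $[0,+\infty)$. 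However, because $\mathcal{M}[\mathfrak{a}]=0$ and $\psi(t)=\textit{o}(1)$, Lemma \ref{L3} asserts precisely the opposite. This contradiction closes the argument.

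I do not foresee any real obstacle. The only compatibility point to double-check is that Proposition \ref{Prop-Cop} only demands continuity of the scalar coefficient (not almost periodicity), so invoking it with $\mathfrak{a}+\psi$ in place of $\mathfrak{a}$ is legitimate; everything else is bookkeeping.
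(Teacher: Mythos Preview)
Your proof is correct and follows essentially the same route as the paper: argue by contradiction, observe that a solution with positive initial value stays positive (hence upper bounded implies bounded), then apply item 4 of Proposition \ref{Prop-Cop} to the coefficient $\mathfrak{a}+\psi$ to force an exponential dichotomy, contradicting Lemma \ref{L3}. Your explicit remark that Proposition \ref{Prop-Cop} only requires continuity of the coefficient is a useful sanity check that the paper leaves implicit.
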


\begin{proof} The proof will be made by contradiction. Indeed, if we assume the existence of an upper bounded solution $t\mapsto \hat{u}(t)$ of (\ref{perturbada+}) with positive
initial condition at $t=0$, it follows that
$0<\hat{u}(t)\leq M_{0}$ for any $t\geq 0$ because any solution of (\ref{perturbada+}) with positive initial condition at $t=0$ remains positive for any $t\geq 0$.

By using the boundedness of $\hat{u}(\cdot)$ combined with
$\liminf\limits_{t\to +\infty}\mathfrak{b}(t)=\mathfrak{b}_{0}>0$, the statement 4 from Proposition \ref{Prop-Cop} implies that the linear equation $\dot{u}=[\mathfrak{a}(t)+\psi(t)]u$ has an exponential dichotomy on $[0,+\infty)$, obtaining a contradiction with Lemma \ref{L3}.
\end{proof}

{The above result only proves that the solutions $t\mapsto u(t)$ of (\ref{perturbada+}) verify 
$$
\limsup\limits_{t\to +\infty}u(t)=+\infty,
$$ 
which arises the following question: Can we ensure that $\liminf\limits_{t\to +\infty}u(t)=+\infty$?. The next result provides a partial answer}

{\begin{lemma}
\label{Lper}
Under the assumptions of Lemma \ref{L33} restricted to $T$--periodic functions $\mathfrak{a}(\cdot)$ and $\mathfrak{b}(\cdot)$ if follows that $\lim\inf\limits_{t\to +\infty}u(t)=+\infty$.
\end{lemma}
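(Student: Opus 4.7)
The plan is to absorb the zero-mean periodic factor $\mathfrak{a}(\cdot)$ by a multiplicative change of variables, then analyze the resulting equation at integer multiples of the period. Set $A(t):=\int_{0}^{t}\mathfrak{a}(r)\,dr$; since $\mathfrak{a}$ is $T$-periodic with $\mathcal{M}[\mathfrak{a}]=0$, the primitive $A$ is itself $T$-periodic, hence $e^{\pm A(t)}$ lies between two positive constants. Defining $v(t):=u(t)e^{-A(t)}$, a direct differentiation yields
\begin{equation*}
v'(t)=\psi(t)\,v(t)+g(t), \qquad g(t):=\mathfrak{b}(t)e^{-A(t)},
\end{equation*}
where $g$ is continuous, $T$-periodic and strictly positive, so $g(t)\ge g_{0}>0$ for some $g_{0}>0$. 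Since $v(0)=u(0)>0$, $v$ remains positive for every $t\ge 0$, and because $e^{A(t)}$ is bounded below by a positive constant, establishing $\lim_{t\to+\infty}v(t)=+\infty$ is equivalent to the desired conclusion $\lim_{t\to+\infty}u(t)=+\infty$.

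Applying variation of constants on $[nT,(n+1)T]$ yields the discrete recursion
\begin{equation*}
v((n+1)T)=\alpha_{n}\,v(nT)+\beta_{n},
\end{equation*}
where
\begin{equation*}
\alpha_{n}=\exp\!\Bigl(\int_{nT}^{(n+1)T}\!\psi(r)\,dr\Bigr),\quad \beta_{n}=\int_{nT}^{(n+1)T}\!\exp\!\Bigl(\int_{s}^{(n+1)T}\!\psi(r)\,dr\Bigr)g(s)\,ds.
\end{equation*}
Because $\psi(r)\to 0$, one has $\int_{nT}^{(n+1)T}|\psi|\le T\sup_{r\ge nT}|\psi(r)|\to 0$, hence $\alpha_{n}\to 1$. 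The same reasoning shows that $\exp(\int_{s}^{(n+1)T}\psi(r)\,dr)\to 1$ uniformly for $s\in[nT,(n+1)T]$, and then $T$-periodicity of $g$ gives $\beta_{n}\to I:=\int_{0}^{T}g(s)\,ds>0$.

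The main obstacle is to deduce $v(nT)\to+\infty$ from this recursion, since $\alpha_{n}$ tends to $1$ but need not be $\ge 1$, precluding a direct telescoping argument. My approach is a one-parameter iteration. Fix $\delta\in(0,1)$ arbitrary, and choose $N=N(\delta)$ so that $\alpha_{n}\ge 1-\delta$ and $\beta_{n}\ge I/2$ for every $n\ge N$. A straightforward induction on $k$ then gives
\begin{equation*}
v((N+k)T)\;\ge\;(1-\delta)^{k}\,v(NT)+\frac{I}{2\delta}\bigl(1-(1-\delta)^{k}\bigr),\qquad k\ge 0,
\end{equation*}
so passing to $k\to\infty$ yields $\liminf_{n\to\infty}v(nT)\ge I/(2\delta)$. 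Since $\delta>0$ was arbitrary, $v(nT)\to+\infty$.

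The last step upgrades this discrete convergence to continuous $t$. For $t\in[nT,(n+1)T]$, variation of constants on $[nT,t]$ together with the positivity of the source $g$ give
\begin{equation*}
v(t)\;\ge\;v(nT)\exp\!\Bigl(\int_{nT}^{t}\psi(r)\,dr\Bigr),
\end{equation*}
and the exponential factor tends to $1$ uniformly in $t$ over each period-interval as $n\to\infty$, since $\psi\to 0$. Hence $v(t)\ge v(nT)/2$ for all $n$ large enough and all $t\in[nT,(n+1)T]$, which yields $v(t)\to+\infty$; the lower bound on $e^{A(\cdot)}$ then transfers this divergence to $u(t)=v(t)e^{A(t)}$, proving the claim.
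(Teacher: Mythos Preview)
Your proof is correct and takes a genuinely different route from the paper's. The paper argues by contradiction: assuming $\liminf_{t\to+\infty}u(t)=M<+\infty$, it picks a sequence $t_n\to+\infty$ with $u(t_n)\to M$ (fluctuation lemma), uses periodicity of $\mathfrak{a}$ with zero mean to bound $\int_s^t\mathfrak{a}(\xi)\,d\xi$ from below by a constant, and then estimates the variation-of-constants formula directly to obtain $u(t_n)\ge \frac{c}{\varepsilon}(1-e^{-\varepsilon(t_n-\tau_0)})$ for $\varepsilon<c/M$, a contradiction. You instead absorb $\mathfrak{a}$ into a bounded multiplicative factor via $v=ue^{-A}$, reducing the problem to $v'=\psi v+g$ with $g$ periodic positive and $\psi=o(1)$, and then run a clean discrete recursion at the lattice $nT$ followed by interpolation. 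Both arguments hinge on the same key periodic-case fact---that $A(t)=\int_0^t\mathfrak{a}$ is bounded---which is precisely what fails in the general almost periodic setting (cf.\ Remark~\ref{OQ}). The paper's argument is shorter; your approach is more constructive, yielding an explicit lower bound $\liminf v(nT)\ge I/(2\delta)$ for every $\delta>0$, and isolates more transparently where the periodicity enters.
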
}

\begin{proof}
The proof will be made by contradiction by assuming the existence of $M>0$ such that $\liminf\limits_{t\to+\infty}u(t)=M>0$. Then, by the fluctuation Lemma \cite[Lemma A.1]{Smith} there exists
a divergent sequence $t_n$ such that $u(t_{n})\to M$.

Moreover, the assumption $\mathcal M[\mathfrak a]=0$ implies $\int_s^t\mathfrak a(\xi)\, d\xi \ge -\int_0^T|\mathfrak a(\xi)|\, d\xi$ for all $s\le t$. Thus, if $u(\cdot)$ is a solution of (\ref{perturbada+}) with nonnegative initial condition, then $u(t)>0$ for all $t>0$ and, for arbitrary $\tau_0\ge 0$, 
$$u(t)\ge \int_{\tau_0}^t e^{\int_s^t[\mathfrak a(\xi) +\psi(\xi)]\, d\xi}\mathfrak b(s)\, ds\ge c  \int_{\tau_0}^t e^{\int_s^t \psi(\xi)\, d\xi} \, ds
$$
for some positive constant $c$ depending only on $\mathfrak a$ and $\mathfrak b$. Next, fix $\tau_0$ such that $|\psi(t)|<\varepsilon$ for $t>\tau_0$ with $0<\varepsilon< \frac cM$. 
Taking $t=t_n$, it is seen that
$$
u(t_{n})\ge c \int_{\tau_0}^{t_n} e^{-\varepsilon(t_{n}-s)}\, ds = \frac c\varepsilon \left( 1 - e^{-\varepsilon(t_n-\tau_0)}\right),
$$
which is a contradiction when $n$ is sufficiently large. 
\end{proof}

\begin{remark}
\label{OQ}
The above result cannot be extended to the almost periodic case since, as is well known, the assumption $\mathcal M[\mathfrak a]=0$ does not imply that 
the integral $\int_s^t\mathfrak a(\xi)\, d\xi$ is bounded from below for arbitrary $t> s$.
In consequence, to the best of our knowledge, to determine if always $\liminf\limits_{t\to +\infty}u(t)=+\infty$ remains an elusive question. 
\end{remark}

\subsection{Some weak extinction examples}
The following results explore the asymptotic behavior of the metapopulation for the limit cases $\mathcal{R}_{0}=h$ and $\mathcal{R}_{1,1}^{*}=h$ for $n=1$ and $n=2$.

\begin{theorem}
\label{wextinction-1}
If the fraction of habitable patches together with the colonization and extinction rates are such that $h=\mathcal{R}_{0}$,
then the fraction of occupied patches by the first species verifies a weak extinction, namely, any solution of \eqref{EEN1}
with $p_{1}(0)\in (0,h)$ verifies $\liminf\limits_{t\to +\infty}p_{1}(t)=0$.
\end{theorem}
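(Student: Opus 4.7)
My plan is to reduce the problem to Lemma \ref{L33} via the Bernoulli substitution $u = 1/p_{1}$ used throughout the paper. Under this substitution, equation \eqref{EEN1} becomes the nonhomogeneous linear equation $u' = a_{11}(t)u + c_{1}(t)$ from \eqref{lineaire}, where $a_{11}(t) = -[c_{1}(t)h - m_{1}(t)]$ is almost periodic. The hypothesis $h = \mathcal{R}_{0}$ is equivalent to $h\mathcal{M}[c_{1}] = \mathcal{M}[m_{1}]$, i.e. $\mathcal{M}[a_{11}] = 0$, placing us precisely in the degenerate regime of Section~6.1.

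Next, I would observe that the initial condition $p_{1}(0) \in (0,h)$ gives $u(0) = 1/p_{1}(0) > 1/h > 0$, so $u$ is a positive-initial-condition solution of $u' = a_{11}(t) u + c_{1}(t)$ with $a_{11}(\cdot)$ almost periodic, $\mathcal{M}[a_{11}] = 0$, and $c_{1}(\cdot)$ almost periodic satisfying $\inf_{\mathbb{R}} c_{1}(t) = c_{1}^{-} > 0$ by \eqref{infap}. In particular, $\liminf_{t\to +\infty} c_{1}(t) \geq c_{1}^{-} > 0$. Therefore the hypotheses of Lemma \ref{L33} hold (with the perturbation $\psi \equiv 0$, which trivially satisfies $\psi(t) = \textit{o}(1)$), and the conclusion of that lemma yields that $t \mapsto u(t)$ is upper unbounded on $[0,+\infty)$, i.e. $\limsup_{t\to +\infty} u(t) = +\infty$.

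Finally, reverting to $p_{1}(t) = 1/u(t) > 0$ and using that $\limsup u(t) = +\infty$ is equivalent to the existence of a sequence $t_{n} \to +\infty$ with $u(t_{n}) \to +\infty$, I conclude $p_{1}(t_{n}) \to 0$, hence
\[
\liminf_{t\to +\infty} p_{1}(t) = 0,
\]
which is the claimed weak extinction.

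I do not expect a genuine obstacle here, since the heavy lifting has been packaged into Lemmas \ref{L3} and \ref{L33}: the former rules out exponential dichotomy when the average vanishes, and the latter uses the admissibility characterization (statement 4 of Proposition \ref{Prop-Cop}) to exclude upper-bounded positive solutions of the inhomogeneous equation. The only points requiring care are the verification that the hypothesis $\liminf_{t\to +\infty} c_{1}(t) > 0$ is indeed implied by \eqref{infap} (which is immediate), and the logical equivalence between $\limsup u = +\infty$ and $\liminf p_{1} = 0$ for positive solutions. Note that this argument does \emph{not} give strong extinction (i.e. $p_{1}(t) \to 0$) — as Remark \ref{OQ} warns, in the almost periodic case one cannot upgrade $\limsup u = +\infty$ to $\liminf u = +\infty$, which is consistent with the phrasing of the theorem.
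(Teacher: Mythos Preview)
Your proposal is correct and follows essentially the same approach as the paper: both use the substitution $u=1/p_{1}$ to reduce to the linear equation \eqref{lineaire} with $\mathcal{M}[a_{11}]=0$, and both conclude unboundedness of $u$ from the admissibility criterion (statement 4 of Proposition~\ref{Prop-Cop}). The only cosmetic difference is that you invoke Lemma~\ref{L33} directly (with $\psi\equiv 0$), whereas the paper inlines that argument using Proposition~\ref{Prop-Cop} and Proposition~\ref{AVED}.
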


\begin{proof}
Let us recall that  $u=1/p_{1}$ transforms (\ref{EEN1}) into (\ref{lineaire}): 
    \begin{equation*}
    u'= a_{11}(t)u + c_{1}(t) \quad \textnormal{where $a_{11}(t)= -[c_1(t)h-m_1(t)]$}.
    \end{equation*}

As $p_{1}(t)<h\leq 1$, it follows that $u(t)>1$ for any $t \geq 0$. Now, we will prove that 
$t\mapsto u(t)$ is upper unbounded on $[0,+\infty)$. Indeed, otherwise, 
by using (\ref{infap})  there exists $c_{0}>0$ such that $\liminf\limits_{t\to +\infty}c_{1}(t)\geq c_{0}>0$ and, by the statement 4 from
Proposition \ref{Prop-Cop}, 
the linear equation $x'=a_{11}(t)x$ has an exponential dichotomy
on $[0,+\infty)$ and by Proposition \ref{AVED} we would have that $\mathcal{M}[a_{11}]\neq 0$. Nevertheless, this
contradicts the assumption $h=\mathcal{R}_{0}$ which is equivalent to $\mathcal{M}[a_{11}]=0$. 

Finally, as any solution $u(\cdot)$ of (\ref{lineaire}) is upper unbounded on $[0,+\infty)$, we have that 
$$
\limsup\limits_{t\mapsto +\infty}u(t)=+\infty  \quad \textnormal{or equivalently} \quad \liminf\limits_{t\to +\infty}p_{1}(t)=0,
$$
and the result follows.
\end{proof}

\begin{theorem}[weak extinction of the inferior competitor when $n=2$]
\label{WBCP}
    Assume that $\mathcal{M}[c_{1}]<\mathcal{M}[c_{2}]$ is verified. If the fraction of habitable patches together with the colonization and extinction rates are such that the properties
    \begin{displaymath}
     \mathcal{R}_{0}<h=\mathcal{R}_{1,1}^{*},   
    \end{displaymath}
are also verified, then the first species is strongly uniformly persistent while the second one is weakly
driven to extinction, namely, for every positive solution $t\mapsto (p_1(t),p_2(t))$ of \eqref{tilman1-TV-n2}--\eqref{TH}, 
it follows that 
\begin{displaymath}
\lim\limits_{t\to +\infty}|p_{1}(t)-p_{1,1}^{*}(t)|=0  \quad \textnormal{and} \quad \liminf\limits_{t\to +\infty}p_{2}(t)=0,
\end{displaymath}
where $t\mapsto p_{1,1}^{*}(t)$ is defined by \eqref{sol-periodique}.
\end{theorem}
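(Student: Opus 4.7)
The plan is to split the proof according to the two claims about $p_1$ and $p_2$ respectively. The first claim is immediate: since $\mathcal{R}_{0}<h$ the hypothesis of Theorem \ref{Prop-1} applies to the first (decoupled) equation of \eqref{tilman1-TV-n2}, giving the exponential convergence $p_{1}(t)-p_{1,1}^{*}(t)\to 0$ together with the boundedness and positivity of $p_{1,1}^{*}(\cdot)$. This also provides the fact, needed below, that the difference $\phi(t):=[c_{1}(t)+c_{2}(t)]\{p_{1}(t)-p_{1,1}^{*}(t)\}$ is $\textit{o}(1)$ and the convergence is exponential.

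For the second claim the strategy is to reduce the equation for $p_2$ to a scalar equation where Lemma \ref{L33} applies. As in the proof of Theorem \ref{GP}, the substitution $u=1/p_{2}$ transforms the second equation of \eqref{tilman1-TV-n2} into
\begin{equation*}
u'=[a_{21}(t)+\phi(t)]u+c_{2}(t),
\end{equation*}
where $a_{21}(\cdot)$ is the almost periodic function defined in \eqref{a2}. The assumption $h=\mathcal{R}_{1,1}^{*}$ combined with the equivalences \eqref{resumen3}--\eqref{resumen4} yields exactly $\mathcal{M}[a_{21}]=0$, which is the crucial hypothesis placing us in the dichotomy-free regime treated in the previous subsection.

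Now I would verify the remaining hypotheses of Lemma \ref{L33} applied to this transformed equation: $c_{2}(\cdot)$ is positive almost periodic and satisfies $\liminf_{t\to+\infty}c_{2}(t)\geq c_{2}^{-}>0$ by \eqref{infap}, while $\phi(t)=\textit{o}(1)$ by the previous step. The initial condition $u(0)=1/p_{2}(0)$ is positive since $0<p_{2}(0)<h<1$. Lemma \ref{L33} then asserts that $u(\cdot)$ is upper unbounded on $[0,+\infty)$, i.e. $\limsup_{t\to+\infty}u(t)=+\infty$, which translates directly into $\liminf_{t\to+\infty}p_{2}(t)=0$ as desired.

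The main obstacle in this proof is conceptual rather than technical: one must recognize that the boundary case $h=\mathcal{R}_{1,1}^{*}$ is precisely the one where the linear part of the transformed equation loses its exponential dichotomy (since its almost periodic coefficient has zero average), so the roughness-based arguments used in Theorems \ref{GP} and \ref{BCP} fail and must be replaced by the admissibility obstruction of Lemma \ref{L3}/\ref{L33}. Note also that this argument only yields $\liminf p_{2}=0$ and not $\lim p_{2}=0$, consistently with Remark \ref{OQ}, so the conclusion is genuinely a weak extinction statement and one should not attempt to strengthen it along the almost periodic path.
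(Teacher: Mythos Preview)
Your proof is correct and follows essentially the same approach as the paper: invoke Theorem \ref{Prop-1} for the first species, transform the second equation via $u=1/p_2$ into $u'=[a_{21}(t)+\phi(t)]u+c_2(t)$, observe that $h=\mathcal{R}_{1,1}^{*}$ forces $\mathcal{M}[a_{21}]=0$, and apply Lemma \ref{L33} to conclude unboundedness of $u$ and hence $\liminf p_2=0$. Your additional explicit verification of the hypotheses of Lemma \ref{L33} and the closing conceptual remark are welcome clarifications but do not change the argument.
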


\begin{proof}

Since $\mathcal{R}_0<h$, the proof that $|p_{1}(t)-p_{1,1}^{*}(t)|=\textit{o}(1)$ and the convergence is exponential can be made in a similar way as in Theorem \ref{Prop-1}.

In addition, and similarly as in the proof of Theorem \ref{BCP}, the second equation of (\ref{tilman1-TV-n2}) can be transformed 
via $v=1/p_{2}$ in the inhomogeneous equation (\ref{a2-bis}):
\begin{equation*}
v'=[a_{21}(t)+\phi(t)]v+c_{2}(t), 
\end{equation*}
where $t\mapsto a_{21}(t)$ is defined by (\ref{a2}) and
$\phi(t)=[c_1(t)+c_{2}(t)]\{p_1(t)-p_{1,1}^{*}(t)\}$, which is convergent to zero since $p_{1}(t)-p_{1,1}^{*}(t)=\textit{o}(1)$
and the colonization rates are bounded.

As $h=\mathcal{R}_{1,1}^{*}$, this implies that $\mathcal{M}[a_{21}]=0$. Moreover, as $\phi(t)=\textit{o}(1)$, Lemma \ref{L33} implies that any solution $v(\cdot)$ of (\ref{a2-bis}) is upper unbounded on $[0,+\infty)$, that is 
$$
\limsup\limits_{t\mapsto +\infty}v(t)=+\infty,  \quad \textnormal{which implies} \quad \liminf\limits_{t\to +\infty}p_{2}(t)=0,
$$
and the result follows.
\end{proof}

\begin{theorem}[Weak extinction of the advantaged competitor when $n=2$]
\label{T63}
Assume that $\mathcal{M}[c_{1}]<\mathcal{M}[c_{2}]$.
 If the fraction of habitable patches together with the colonization and extinction rates are such that the inequalities
    \begin{displaymath}
       \mathcal{R}_{1}<h= \mathcal{R}_{0},
    \end{displaymath}
are verified, then any solution $t\mapsto (p_{1}(t),p_{2}(t))$ of the system \eqref{tilman1-TV-n2}--\eqref{TH} verifies:
    \begin{displaymath}
\liminf\limits_{t\to +\infty}p_{1}(t)=0, 
\end{displaymath}
and the system $x'=a_{22}(t)x$, where $a_{22}(t)$ is defined by \eqref{a3}, has exponential dichotomy  with identity projector and constants  $K\geq 1$ and $\alpha>0$. 

In addition, if $\limsup\limits_{t\to+\infty}[c_1(t)+c_2(t)]p_1(t)=a_3^+<\alpha$ then the second species is strongly uniformly persistent.
\end{theorem}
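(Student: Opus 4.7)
The plan is to decompose the statement into three logically independent pieces and dispatch each by invoking exactly one tool from Section 2.

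\textbf{Step 1 (Exponential dichotomy of $x'=a_{22}(t)x$).} Since $h>\mathcal{R}_{1}$, the chain of equivalences \eqref{resumen5} yields $\mathcal{M}[a_{22}]<0$. By Proposition \ref{AVED}(a) the homogeneous equation $x'=a_{22}(t)x$ has exponential dichotomy on $\mathbb{R}$ (hence on $[0,+\infty)$) with the identity projector; that is, there exist constants $K\ge 1$ and $\alpha>0$ such that $e^{\int_{s}^{t}a_{22}(r)\,dr}\le Ke^{-\alpha(t-s)}$ for every $t\ge s\ge 0$. This pins down the constants referenced in the theorem statement.

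\textbf{Step 2 (Weak extinction of the advantaged competitor).} The first equation of \eqref{tilman1-TV-n2} is decoupled from $p_{2}$ and reduces exactly to equation \eqref{EEN1} with $h=\mathcal{R}_{0}$. Applying Theorem \ref{wextinction-1} verbatim gives $\liminf_{t\to+\infty}p_{1}(t)=0$. No further work is needed.

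\textbf{Step 3 (Strong uniform persistence of the second species under the extra hypothesis).} Transforming the second equation via $v=1/p_{2}$ produces
\[
v'=[a_{22}(t)+\phi(t)]v+c_{2}(t),\qquad \phi(t):=[c_{1}(t)+c_{2}(t)]p_{1}(t)\ge 0.
\]
The hypothesis $\limsup_{t\to+\infty}\phi(t)=a_{3}^{+}<\alpha$ is precisely the smallness requirement of the roughness Proposition \ref{rough-a}, which therefore guarantees that the perturbed linear part $z'=[a_{22}(t)+\phi(t)]z$ retains the exponential dichotomy on $[0,+\infty)$ with identity projector, with new constants $\widetilde{K}\ge 1$ and any $\alpha_{0}\in(0,\alpha-a_{3}^{+})$. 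Inserting this estimate in the variation of constants formula for $v(\cdot)$ gives
\[
v(t)\le \widetilde{K}\,v(t_{0})\,e^{-\alpha_{0}(t-t_{0})}+\frac{\widetilde{K}\,\|c_{2}\|_{\infty}}{\alpha_{0}}\bigl(1-e^{-\alpha_{0}(t-t_{0})}\bigr),
\]
so $\limsup_{t\to+\infty}v(t)\le \widetilde{K}\|c_{2}\|_{\infty}/\alpha_{0}$. Equivalently, $\liminf_{t\to+\infty}p_{2}(t)\ge \alpha_{0}/(\widetilde{K}\|c_{2}\|_{\infty})=:\delta_{2}>0$, which is the required uniform lower bound.

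\textbf{Main obstacle.} The delicate point is Step 3: the perturbation $\phi(t)$ is merely bounded and, in contrast with Theorems \ref{GP}, \ref{BCP}, \ref{FSP}, \ref{GE}, does \emph{not} satisfy $\phi(t)=o(1)$, because in this limit regime $p_{1}(t)$ only verifies $\liminf p_{1}(t)=0$. Consequently the decay-type Lemmas \ref{LT1}--\ref{LT2} are unavailable, and the only way to preserve the dichotomy of the linear part is to control the size of $\phi(\cdot)$ against the dichotomy rate $\alpha$. This is exactly what the quantitative assumption $a_{3}^{+}<\alpha$ is designed for, and it is the reason the roughness proposition---rather than the almost periodic admissibility machinery---is the natural tool here.
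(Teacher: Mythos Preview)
Your proof is correct and follows essentially the same route as the paper's: weak extinction of $p_{1}$ via Theorem \ref{wextinction-1}, dichotomy of $x'=a_{22}(t)x$ via Proposition \ref{AVED}, and then roughness (Proposition \ref{rough-a}) applied to the perturbation $\phi(t)=[c_{1}(t)+c_{2}(t)]p_{1}(t)$ to handle the second species. The only notable difference is in Step 3: the paper concludes the uniform bound on $u=1/p_{2}$ by citing Lemma \ref{LT1}, whereas you write out the variation-of-constants estimate explicitly. Your version is in fact the cleaner one here, since---as you correctly point out in your ``Main obstacle'' paragraph---the hypothesis $\psi(t)=o(1)$ of Lemma \ref{LT1} is not available in this limit case, so the paper's citation is really shorthand for the computation you spell out.
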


 \begin{proof}
Since $h=\mathcal{R}_{0}$, this implies that $\mathcal{M}[a_{11}]=0$, and by Theorem \ref{wextinction-1} it follows that $\liminf\limits_{t\to +\infty}p_{1}(t)=0$. Moreover, since $h>\mathcal{R}_1$ is equivalent to $\mathcal{M}[a_{22}]<0$, we conclude by 
Proposition \ref{AVED} that system $x'=a_{22}(t)x$ has an exponential dichotomy on $\mathbb{R}$ with the identity projector and constants $K\geq 1$ and $\alpha>0$.

In order to study the second species, let us consider the inhomogeneous equation
\begin{displaymath}
\dot{u}=b_{22}(t)u+c_{2}(t) \,\, \textnormal{where} \,\, b_{22}(t)=a_{22}(t)+[c_{1}(t)+c_{2}(t)]p_{1}(t),    
\end{displaymath}
corresponding to the second equation of (\ref{tilman1-TV-n2}). If $\limsup\limits_{t\to+\infty}[c_1(t)+c_2(t)]p_1(t)=a_3^+<\alpha$ holds, then the Proposition \ref{rough-a} says that $z'=[a_{22}(t)+\psi(t)]z$ also has an exponential dichotomy on $[0,+\infty)$ with the identity projector, that is, there exist $K\geq 1$ and $\alpha-a_3^+= \alpha_0>0$. As consequence, Lemma \ref{LT1} implies the existence of $\theta>0$ such that
$$
\limsup\limits_{t\to +\infty}u(t)\leq \theta
\quad \textnormal{or equivalently} \quad
\liminf\limits_{t\to +\infty}p_{2}(t)\geq \theta^{-1},
$$
and the uniform persistence of $p_{2}$ follows.
\end{proof}

\begin{remark}
If the rates $c_{i}(\cdot)$ and $m_{i}(\cdot)$ are $T$--periodic functions,
for $i=1,2$ the above results can be improved since the Lemma \ref{Lper} allows to prove that
the weak persistence cannot be verified.
\end{remark}

\subsection{An open question: existence of weakly persistent solutions}
The above results only provide sufficient conditions ensuring weak extinction of a species and this is due
to Lemma \ref{L33}, which ensures the unboundedness of the solutions $t\mapsto u(t)$ of (\ref{perturbada+}). Nevertheless,
it does not ensure that $\lim\limits_{t\to +\infty}u(t)=+\infty$ since, as stated by Remark \ref{OQ}, the case $\liminf\limits_{t\to +\infty}u(t)=\ell>1$
is theoretically possible and, as $p(t)=1/u(t)$ it implies that a species $t\mapsto p(t)$ will verify
$$
\liminf\limits_{t\to +\infty}p(t)=0 \quad \textnormal{and} \quad \limsup\limits_{t\to +\infty}p(t)=1/\ell>0,
$$
that is, a species could be weakly persistent and driven weakly to the extinction at the same time. 

The above comment combined with Remark \ref{OQ} raise a natural question: in absence of exponential dichotomy, can a species be weakly persistent?
This is equivalent to determine additional conditions to those stated in Lemma \ref{L33} ensuring that some solutions of (\ref{perturbada+}) have positive lower limit at $+\infty$. 

\section{Numerical examples}

To illustrate our results, we analyze the dynamical behavior of the system
(\ref{tilman1-TV-n2}) and we will consider the almost periodic colonization rates given by:
\begin{displaymath}
\begin{array}{rcl}
    c_1(t)&=& 2.1+0.3\sin(\pi t )-0.3\cos(\sqrt{2}\pi t),\\
    c_2(t)&=& 3-0.3\sin(\sqrt{2}\pi t)+0.5\cos(\sqrt{3}\pi t),
\end{array}
\end{displaymath}
and we point out that these functions are not supported by experimental data. We can verify that
\begin{displaymath}
\mathcal{M}[c_{1}]=  2.1  \quad \textnormal{and} \quad \mathcal{M}[c_{2}]= 3.
\end{displaymath}

The parameters used in the simulation for (\ref{tilman1-TV-n2}) are summarized in Table \ref{table:1};
\begin{table}[h]
\centering
\begin{tabular}{||l l l l ||} 
 \hline
 Parameter & Values & Meaning & Source \\ 
 \hline
 $m_1$ & $0.7$ & mortality rate of species 1 & this work \\
 $m_2$ & $0.2$  & mortality rate of species 2 & this work \\ 
 $h$ & $[0.03\,,\,0.78]$ & percentage of habitat available & this work\\[1ex] 
 \hline
\end{tabular}
\caption{Synthetic parameters used in the simulations for  (\ref{tilman1-TV-n2}).}
\label{table:1}
\end{table}

Then we have 
\begin{displaymath}
\mathcal{R}_{0}=\frac{\mathcal{M}[m_{1}]}{\mathcal{M}[c_{1}]}=0.\bar{3} \quad   \mbox{and} \quad
\mathcal{R}_{1}=\frac{\mathcal{M}[m_{2}]}{\mathcal{M}[c_{2}]}=0.0\bar{6}.    
\end{displaymath}

The numerical simulations were implemented using Python 3, and the libraries \textsc{numpy}, \textsc{scipy}, and \textsc{matplotlib.pyplot} for efficient computation and visualization\footnote{The corresponding code is accessible at \url{https://github.com/Oehninger/Metapopulations}.}. In particular, the function \textsc{odeint} of \textsc{scipy.integrate} was used to solve the system (\ref{tilman1-TV-n2}) over the interval $[0,1000]$. Next, we approximate $p_{1,1}^{*}(\cdot)$, using the values of $p_1(\cdot)$ in the interval $[100,1000]$ and cubic interpolation. Then we consider $\mathcal{M}[(c_{1}+c_{2})p_{1,1}^*]\approx \frac{1}{900}\int_{100}^{1000}[c_{1}(t)+c_{2}(t)]p_{1,1}^*(t)dt$, and use this value to approximate $\mathcal{R}_{1,1}^*$. 
The above process is repeated for each $h\in \{0.78,\, 0.63,\, 0.48,\, 0.\bar{3},\, 0.18,\, 0.03\}$, in order to represent different behavior of the solutions and illustrate our results.

\medskip

The Figure \ref{fig: a} considers a fraction of habitable patches $h=0.78$. In this case we have that $\mathcal{R}_{0}=0,\bar{3}<h$ and, by Theorem \ref{Prop-1}, it follows that the fraction of patches 
occupied by the species 1 converges exponentially to an almost periodic function $t\mapsto p_{1,1}^{*}(t)$, which leads to
a value $\mathcal{R}_{1,1}^{\ast} \approx 0.82606$.  Finally, as $\mathcal{R}_{0}<h<\mathcal{R}_{1,1}^{\ast}$ Theorem \ref{BCP} implies the 
strong extinction of the species 2.

\medskip

The Figure \ref{fig: b} considers a fraction of habitable patches $h=0.63$. In this case we also have that $\mathcal{R}_{0}=0,\bar{3}<h$ and, by Theorem \ref{Prop-1}, it follows that the fraction of patches 
occupied by the species 1 converges exponentially to an almost periodic function $t\mapsto p_{1,1}^{*}(t)$, which leads to $\mathcal{R}_{1,1}^{\ast} \approx 0.57113$.  Finally, as $\max\{\mathcal{R}_{0},\mathcal{R}_{1,1}^{\ast}\}<h$, the Theorem \ref{GP} implies the strong persistence of the species 2.

\medskip

The Figure \ref{fig: c} considers a fraction of habitable patches $h=0.48$. In this case we also have that $\mathcal{R}_{0}=0,\bar{3}<h$ and, by Theorem \ref{Prop-1}, it follows that the fraction of patches 
occupied by the species 1 converges exponentially to an almost periodic function $t\mapsto p_{1,1}^{*}(t)$, which leads to
a value $\mathcal{R}_{1,1}^{\ast} \approx 0.31611$.  Finally, as $\max\{\mathcal{R}_{0},\mathcal{R}_{1,1}^{\ast}\}<h$, the Theorem \ref{GP} implies the strong persistence of the species 2. Furthermore, we stress that in this scenario the almost periodic fraction $t\mapsto p_{2,1}^{*}(t)$ of the patches occupied by the weaker competitor surpasses those of the advantaged one $t\mapsto p_{1,1}^{*}(t)$.

\medskip

The Figure \ref{fig: d} considers a fraction of habitable patches $h=0.\bar{3}$. In this case we have that $\mathcal{R}_{0}=h$ and, by Theorem \ref{wextinction-1}, it follows that the fraction of patches 
occupied by the species 1 drives weakly to extinction, namely $\liminf\limits_{t\to+\infty}p_{1}(t)=0$, which leads to
a value $\mathcal{R}_{1,1}^{\ast} \approx 0.06871$.  Despite that $h>\mathcal{R}_{1}$, the Theorem \ref{T63} cannot be applied, but the numerical simulations suggest that the fraction of inhabitated patches by the second species is strongly persistent.

The Figure \ref{fig: e} considers a fraction of habitable patches $h=0.18$. In this case we have that $\mathcal{R}_{0}>h$ and, by Theorem \ref{Prop-1}, it follows that the fraction of patches 
occupied by the species 1 drives strongly to extinction $p_{1}(t)=\textit{o}(1)$, which leads to
a value $\mathcal{R}_{1,1}^{\ast} \approx 0.06666$.  Finally, as $\mathcal{R}_{1}=0.0\bar{6}$, we have  $\mathcal{R}_1<h <\mathcal{R}_0$ and  the Theorem \ref{FSP} implies the strong persistence of species 2.

The Figure \ref{fig: f} considers a fraction of habitable patches $h=0.03$. In this case we have that $\mathcal{R}_{0}>h$ and, by Theorem \ref{Prop-1}, it follows that the fraction of patches 
occupied by the species 1 drives strongly to extinction $p_{1}(t)=\textit{o}(1)$, which leads to
a value $\mathcal{R}_{1,1}^{\ast} \approx 0.06666$.  Moreover, as $\mathcal{R}_{1}=0.0\bar{6}>h$,  by Theorem \ref{Prop-1}, it follows that the fraction of patches 
occupied by the species 2 also drives strongly to extinction $p_{2}(t)=\textit{o}(1)$.

\begin{figure}[ht!]
    \centering
\begin{subfigure}{0.355\textwidth}
    \includegraphics[width=\textwidth]{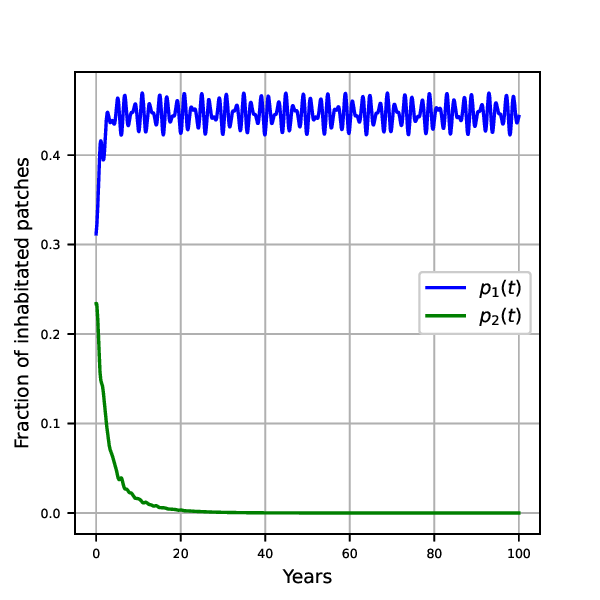} 
    \caption{\scriptsize{Dynamics with $h=0.78$ and $\mathcal{R}_{1,1}^*\approx 0.82606$.}}\label{fig: a}
\end{subfigure}
\centering
\begin{subfigure}{0.355\textwidth}
    \includegraphics[width=\textwidth]{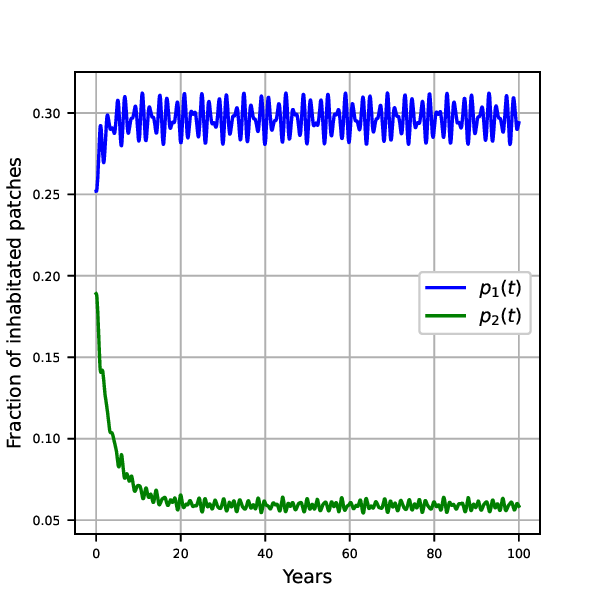} 
    \caption{\scriptsize{Dynamics with $h=0.63$ and $\mathcal{R}_{1,1}^*\approx 0.57113$.}}\label{fig: b}
\end{subfigure}
\centering
\begin{subfigure}{0.355\textwidth}
    \includegraphics[width=\textwidth]{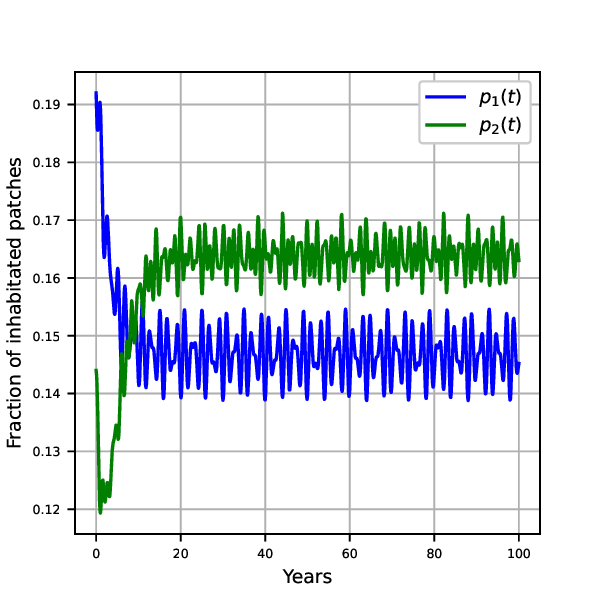} 
    \caption{\scriptsize{Dynamics with $h=0.48$ and $\mathcal{R}_{1,1}^*\approx 0.31611$.}}\label{fig: c}
\end{subfigure}
\centering
\begin{subfigure}{0.355\textwidth}
    \includegraphics[width=\textwidth]{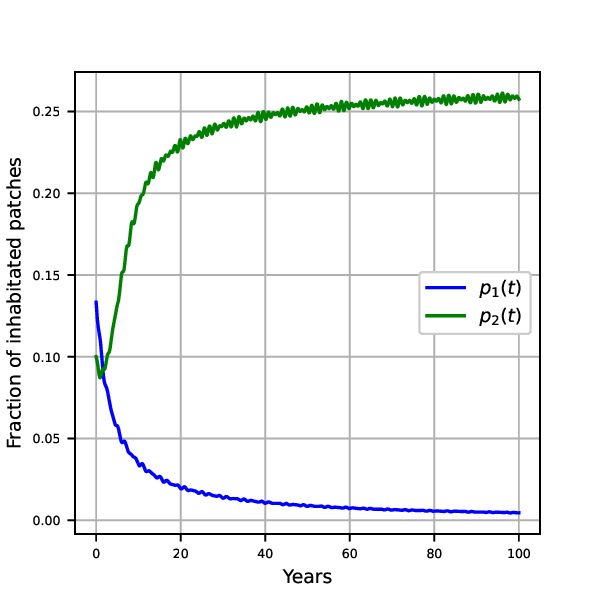} 
    \caption{\scriptsize{Dynamics with $h=0.\overline{3}$ and $\mathcal{R}_{1,1}^*\approx 0.06871$.}}\label{fig: d}
\end{subfigure}
\centering
\begin{subfigure}{0.35\textwidth}
    \includegraphics[width=\textwidth]{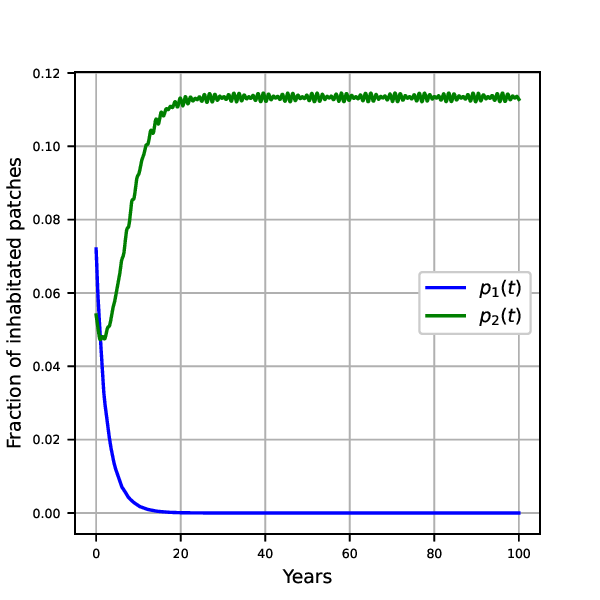} 
    \caption{\scriptsize{Dynamics with $h=0.18$ and $\mathcal{R}_{1,1}^*\approx 0.06666$.}}\label{fig: e}
\end{subfigure}
\centering
\begin{subfigure}{0.35\textwidth}
    \includegraphics[width=\textwidth]{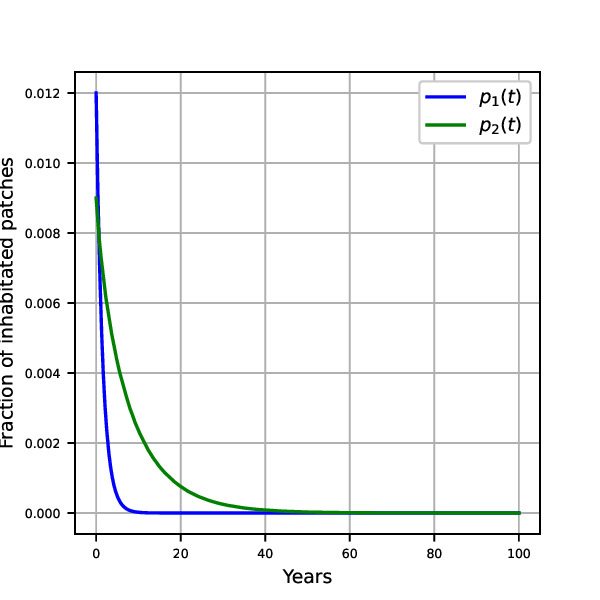} 
    \caption{\scriptsize{Dynamics with $h=0.03$ and $\mathcal{R}_{1,1}^*\approx 0.06666$.}}\label{fig: f}
\end{subfigure}

\caption{\scriptsize{Dynamic of the fraction of inhabited patches by the species for different $h$. The initial conditions are:  $(p_1,p_2)= (0.4h, 0.3h)$, while $\mathcal{R}_0=1/3$ and $\mathcal{R}_1=2/30$.}}\label{fig: 2}
\end{figure}

\newpage

\newpage
\appendix
\section{Proof of Technical Results}

\subsection{Proof of Lemma \ref{L1}}

\begin{proof}
The positiveness of $p_{i}(t)$ for any $t$ is straightforward. Now, let us define $u(t):=p_{1}(t)+\ldots+p_{n}(t)$ and notice that
(\ref{tilman1-TV}) can be written as follows:
\begin{displaymath}
  p_{i}'=c_i(t) p_i\left(h-u(t)+\sum_{j=i+1}^{n}p_j \right)-m_i(t)p_i-\sum_{j=1}^{i-1}c_j(t) p_i p_j, \quad
\textnormal{for $i=1,\ldots,n$}.  
\end{displaymath}

The proof will be made by contradiction. Indeed, if we suppose that $u(t_1)=h$ for the first time at some $t_1>t_0$, then $u'(t_1)\ge 0$ and we may write
\begin{displaymath}
p_i'(t_{1}) =c_i(t_{1}) p_i(t_{1})\sum_{j=i+1}^{n}p_j(t_{1}) -m_i(t_{1})p_i(t_{1})-\sum_{j=1}^{i-1}c_j(t_{1}) p_i(t_{1}) p_j(t_{1})  \quad
\textnormal{for $i=1,\ldots,n$}, 
\end{displaymath}
which combined with the positiveness of $m_{i}(t_{1})p_{i}(t_{1})$ yields to:
$$
\begin{array}{rcl}
u'(t_1) &< & \displaystyle   \sum_{i=1}^n \sum_{j=i+1}^{n} c_i(t_{1}) p_i(t_{1}) p_j(t_{1})  - 
\sum_{i=1}^n\sum_{j=1}^{i-1}c_j(t_{1}) p_i(t_{1}) p_j(t_{1})  \\\\
&=& \displaystyle   \sum_{i=1}^{n-1} \sum_{j=i+1}^{n} c_i(t_{1}) p_i(t_{1}) p_j(t_{1})  - 
\sum_{i=2}^n\sum_{j=1}^{i-1}c_j(t_{1}) p_i(t_{1}) p_j(t_{1}), 
\end{array}
$$
where the last equality holds because   $\sum_{n+1}^{n}=\sum_{j=1}^0=0$. Next, we may change the order of summation in the first term in order to obtain
$$u'(t_1)<\sum_{j=2}^{n} \sum_{i=1}^{j-1} c_i(t_{1}) p_i(t_{1}) p_j(t_{1})  - 
\sum_{i=2}^n\sum_{j=1}^{i-1}c_j(t_{1}) p_i(t_{1}) p_j(t_{1}) 
$$
and by relabeling the sub-indices we deduce that $u'(t_1)<0,$
a contradiction.
\end{proof}

\subsection{Proof of Lemma \ref{AHIH}}

We will provide the proof for the interval $[0,+\infty)$ since the other case can be done in an analogous way.
As the backward direction is trivial, we only prove the forward direction by assuming that (\ref{dico}) 
is verified on $[T,+\infty)$.

If $0\leq s<T\leq t$, we have that:
\begin{displaymath}
\begin{array}{rcl}
\left|\int_s^t \mathfrak{a}(r)\,dr 
\right|  &\geq & 
\left|\int_T^t \mathfrak{a}(r)\,dr 
\right|-
\left|\int_s^T \mathfrak{a}(r)\,dr 
\right|  \\\\
&\geq & 
\alpha(t-T)-C-\max\limits_{\theta \in [0,T]}
\left|\int_\theta^T \mathfrak{a}(r)\,dr 
\right|  \\\\
&\geq & \alpha(t-s)-C-\max\limits_{\theta \in [0,T]}
\left|\int_\theta^T \mathfrak{a}(r)\,dr\right| +\alpha(s-T) \\\\
&\geq & \alpha(t-s)-\underbrace{\left\{C+\max\limits_{\theta \in [0,T]}
\left|\int_\theta^T \mathfrak{a}(r)\,dr\right| +\alpha T \right\}}_{:=C_{1}}.
\end{array}
\end{displaymath}

If $0\leq s\leq t<T$, the above case implies that $\left|\int_s^T \mathfrak{a}(r)\,dr 
\right|\geq \alpha(T-s)-C_{1}$ and we can deduce that:
\begin{displaymath}
\begin{array}{rcl}
\left|\int_s^t \mathfrak{a}(r)\,dr 
\right|  &\geq & 
\left|\int_s^T \mathfrak{a}(r)\,dr 
\right|-
\left|\int_t^T \mathfrak{a}(r)\,dr 
\right|  \\\\
&\geq & \alpha (T-s)-C_{1}-\max\limits_{\theta \in [0,T]}
\left|\int_\theta^T \mathfrak{a}(r)\,dr\right| \\\\
&\geq & \alpha (t-s)-\underbrace{\left\{C_{1}+\max\limits_{\theta \in [0,T]}
\left|\int_\theta^T \mathfrak{a}(r)\,dr\right|\right\}}_{:=C_{2}}.
\end{array}
\end{displaymath}

By gathering the above estimations, and noticing that $C\leq C_{1}\leq C_{2}$, we can conclude that:
\begin{displaymath}
\left|\int_s^t \mathfrak{a}(r)\,dr 
\right|\ge \alpha(t-s) - C_{2}\qquad \textnormal{for any $t\geq s$ with $t,s\in [0,+\infty)$}.
\end{displaymath}
 
\subsection{Proof of Proposition \ref{rough-a}}

    Fix $\alpha_0>0$ such that $\alpha_{0}< \alpha - \delta_{0}$ and $T\geq 0$ such that $|\mathfrak{e}(t)|<\alpha-\alpha_0$ for any $t\ge T$. The proof will consider
the three possible cases: $t\geq s\geq T$, $T\geq t\geq s$ and $t\geq T \geq s$.
    
\medskip 

Firstly, when $t\ge s\ge T$ it is verified that: 
    $$\left|\int_s^t [\mathfrak{a}(r)+\mathfrak{e}(r)] \,dr \right| 
\ge \alpha(t-s) - C - \int_s^t |\mathfrak{e}(r)|\, dr \ge \alpha_0(t-s) - C.
$$

Secondly, when $s\leq t\leq T$, let us define $C_0:= \int_0^{T} |\mathfrak{a}(r)+\mathfrak{e}(r)|\ dr$. It is clear that 
$$\left|\int_s^t [\mathfrak{a}(r)+\mathfrak{e}(r)] \, dr \right| 
\ge - C_0  \ge \alpha_0(t-s) - C_0 -\alpha_0\,T.
$$
Finally, for $t> T\ge s$, we have that
\begin{displaymath}
\begin{array}{rcl}
\left|\int_s^t [\mathfrak{a}(r)+\mathfrak{e}(r)] \,dr \right| &\ge&  
\left|\int_{T}^t [\mathfrak{a}(r)+\mathfrak{e}(r)] \,dr \right| - 
\int_s^{T} |\mathfrak{a}(r)+\mathfrak{e}(r)|\ dr \\
&\ge& \alpha_0(t-T) - C-C_0\ge \alpha_0 (t-s) - \alpha_0 T-
C-C_0,
\end{array}
\end{displaymath}
then we have that the equation has an exponential dichotomy with constants $\alpha_0$ and $\tilde C:= C+ C_0 + \alpha_0\,T$.

\subsection{Proof of Proposition \ref{Prop-Cop}} Proof of $1) \Rightarrow 2)$:
If (\ref{proj-id}) holds, then any solution of (\ref{sca0}) has the form:
\begin{equation}
\label{sol-gral}x(t)= x(t_0) e^{\int_{t_0}^{t}\mathfrak{a}(r)\,dr} + \int_{t_0}^t \mathfrak{v}(s) e^{\int_{s}^{t}\mathfrak{a}(r)\,dr}\, ds.    
\end{equation}
The first term of the right-hand side is bounded since it converges to $0$ when $t\to +\infty$, while the second term satisfies, for $t\ge t_0$:
$$\left|\int_{t_0}^t \mathfrak{v}(s) e^{\int_{s}^{t}\mathfrak{a}(r)\,dr}\, ds\right| \le \|\mathfrak{v}\|_\infty 
\int_{t_0}^t Ke^{-\alpha(t-s)}\, ds = \frac K\alpha \|\mathfrak{v}\|_\infty \left(1 - e^{-\alpha (t-t_0)}\right),
$$
and the boundedness on $[0,+\infty)$ of any solution of (\ref{sca0}) is verified. Now, if (\ref{proj-null}) holds instead, then it is readily verified that the expression
$$
x(t)= -\int_t^{+\infty} \mathfrak{v}(s) e^{-\int_{t}^{s}\mathfrak{a}(r)\,dr}\, ds
$$
is well defined and gives an explicit description for  the unique solution of (\ref{sca0})
which is bounded on $[0,+\infty)$. 

On the other hand, observe that the statements 3) and 4) are particular cases of 2), which implies that $1) \Rightarrow 3)$ and $1) \Rightarrow 4)$ are straightforwardly verified.

Proof of $4) \Rightarrow 3)$: Let $t\mapsto x(t)$ be a bounded solution of (\ref{inhp}) such that
\begin{displaymath}
m\leq x(t)\leq M \quad \textnormal{for any $t\in [0,+\infty)$},    
\end{displaymath}

By hypothesis, we have the existence of $v_{0}>0$ such that $|\mathfrak{v}(t)|>v_{0}$
for any $t\geq t_{0}$ with $t_{0}\geq 0$. Without loss of generality, we can suppose
$t_{0}=0$.

Firstly, we will assume that $\mathfrak{v}(t)\ge v_0 >0$ for any $t\geq 0$, 
then
the above solution $t\mapsto x(t)$ also verifies:
$$
x(t)=x_0 e^{\int_{0}^{t}\mathfrak{a}(r)\,dr } + \int_{0}^t e^{\int_{s}^{t}\mathfrak{a}(r)\,dr}\mathfrak{v}(s)ds\ge x_0 e^{\int_{0}^{t}\mathfrak{a}(r)\,dr} + v_0\int_{0}^{t} e^{\int_{s}^{t}\mathfrak{a}(r)\,dr}\,ds,
$$
which allow us to deduce:
$$
\frac {x(t)}{v_0} \ge 
\frac{x_0}{v_0} e^{\int_{0}^{t}\mathfrak{a}(r)\,dr} + \int_{0}^{t} e^{\int_{s}^{t}\mathfrak{a}(r)\,dr}\,ds:=z(t).
$$

We can see that $t\mapsto z(t)$ is solution of $z'=\mathfrak{a}(t)z+1$ passing through $x_{0}/v_{0}$ at $t=0$, which also verifies $z(t)\leq M/v_{0}$ for any $t\geq 0$.

Now, let us observe that
\begin{displaymath}
\begin{array}{rcl}
z(t)&=&\displaystyle \frac{x_0}{v_0} e^{\int_{0}^{t}\mathfrak{a}(r)\,dr} + \int_{0}^{t} e^{\int_{s}^{t}\mathfrak{a}(r)\,dr}\,ds \\\\
&=&\displaystyle ||\mathfrak{v}||_{\infty}^{-1}\left(x_{0}\frac{||\mathfrak{v}||_{\infty}}{v_0} e^{\int_{0}^{t}\mathfrak{a}(r)\,dr} + \int_{0}^{t} e^{\int_{s}^{t} \mathfrak{a}(r)\,dr}||\mathfrak{v}||_{\infty}\,ds \right)\\\\
&\geq & \displaystyle ||\mathfrak{v}||_{\infty}^{-1}\left(x_{0}\ e^{\int_{0}^{t}\mathfrak{a}(r)\,dr} + \int_{0}^{t} e^{\int_{s}^{t}\mathfrak{a}(r)\,dr}\mathfrak{v}(s)\,ds \right)\\\\
&\geq & ||\mathfrak{v}||_{\infty}^{-1}x(t),
\end{array}
\end{displaymath}
and we proved that 
\begin{displaymath}
m||\mathfrak{v}||_{\infty}^{-1}\leq z(t)\leq v_{0}^{-1}M \quad \textnormal{for any $t\in [0,+\infty)$}.   \end{displaymath}

{Secondly, if we assume that $-\mathfrak{v}(t) \geq v_{0}>0$ for any $t\geq 0$, we multiply (\ref{inhp}) by $-1$ and make the transformations $u=-x$ and $\mathfrak{w}(t)=-\mathfrak{v}(t)$, leading to $u'=\mathfrak{a}(t)u+\mathfrak{w}(t)$, which also has a bounded solution
and the above proof can be carried out}.

Proof of $3) \Rightarrow 1)$: Let $t\mapsto z(t)$ be a solution
of $z'=\mathfrak{a}(t)z+1$ which is bounded on $[0,+\infty)$. Firstly, a noteworthy
property of the above equation is that: if $z(0)\geq 0$, the solution remains positive
for any $t>0$ while, if $z(0)<0$ the solution either remains negative or has a unique change of sign.

Secondly, let us verify that
\begin{equation}
\label{LIA}
\displaystyle\liminf_{t\to+\infty} |z(t)| >0.
\end{equation}

In fact, this property is immediately verified when $z(\cdot)$ is positive
after some value since $t\mapsto z(t)$ is increasing at any value of $t$ such that $|z(t)|\|\mathfrak{a}\|_\infty < 1$. Now, if $z(t)<0$ for any $t\geq 0$, we have to consider two cases: 

\medskip

\noindent \textit{Case a)} $\limsup\limits_{t\to+\infty} z(t)<0$, which is equivalent to (\ref{LIA}).

\noindent \textit{Case b)} $\limsup\limits_{t\to+\infty} z(t)=0$, which cannot be possible, because $z(\cdot)$ is increasing after some finite time and, consequently, $z(t)=\textit{o}(1)$, which leads to $\liminf\limits_{t\to+\infty} z'(t) >0$, obtaining a contradiction with the boundedness of $t\mapsto z(t)$ and (\ref{LIA}) follows.

In consequence, we can assume that $0<c\le |z(t)|\le M$ for any $t\geq T$.  In the case that $z(\cdot)$ is positive for $t>s\geq T$ it follows that:
$$C\ge \ln\left(\frac{z(t)}{z(s)}\right) = \int_s^t \mathfrak{a}(r)\, dr + \int_s^t \frac 1{z(r)}\, dr,$$
which leads to:
$$\int_s^t \mathfrak{a}(r)\, dr \le  C - \frac 1M (t-s).$$

Similarly, if $z(\cdot)$ is negative  for any $t\leq s$ we will have that
$$
\int_t^s \mathfrak{a}(r)\, dr \ge  -C + \frac 1M (s-t),
$$
then the equation (\ref{sca1}) has an exponential dichotomy on $[T,+\infty)$
and the exponential dichotomy is verified on $[0,+\infty)$ as a consequence of Lemma  \ref{AHIH}.

\subsection{Proof of Proposition \ref{SPEC}}

{We will prove an equivalent result, namely:
$$
\mathbb{R}\setminus \Sigma_{[0,+\infty)]}(\mathfrak{a})=(-\infty,\beta^{-}(\mathfrak{a}))\cup (\beta^{+}(\mathfrak{a}),+\infty).
$$}
{If $\lambda \notin \Sigma_{[0,+\infty)}(\mathfrak{a})$
it follows from Definition \ref{DEFSPEC} that $x'=[\mathfrak{a}(t)-\lambda]x$
has an exponential dichotomy on $[0,+\infty)$, that is either (\ref{proj-id}) or (\ref{proj-null}) holds for $\mathfrak{a}(\cdot)-\lambda$}. 

{The proof will be a consequence of the following statements:
\begin{itemize}
\item[i)] The property  (\ref{proj-null}) holds for $\mathfrak{a}(\cdot)-\lambda$
if and only if $\lambda \in (-\infty,\beta^{-}(\mathfrak{a}))$,
\item[ii)] The property  (\ref{proj-id}) holds for $\mathfrak{a}(\cdot)-\lambda$
if and only if $\lambda \in (\beta^{+}(\mathfrak{a}),+\infty)$.
\end{itemize}
}

Assume for example that (\ref{proj-null}) holds for $\mathfrak{a}(\cdot)-\lambda$, then for arbitrary $s>t$ we have
$$ \frac 1{s-t}\int_t^s \mathfrak{a}(r)\, dr \ge \lambda + \alpha - \frac C{s-t} 
$$
for some $\alpha>0$. Taking lower 
limit for $t\to+\infty$ and $L:=s-t\to +\infty$, we deduce that $\beta^-(\mathfrak{a})\ge \lambda + \alpha>\lambda$. Conversely, for $\beta^-(\mathfrak{a})> \lambda$ we may fix $\alpha\in (0,\beta^-(\mathfrak{a})- \lambda)$
and $s_0, L_0>0$ such that 
$$ 
\frac 1{s-t}\int_t^s \mathfrak{a}(r)\, dr > \lambda + \alpha \quad \textnormal{with} \quad t>t_0 \quad  \textnormal{and} \quad s-t>L_0,  
$$
that is
\begin{equation}
\label{condition}
\int_t^s [\mathfrak{a}(r)-\lambda]\, dr >  \alpha(s-t) \quad \textnormal{with} \quad s-t>L_0 \quad \textnormal{and} \quad t>t_0
\end{equation}

We have to study the cases where the conditions $s-t>L_0$ and $t>t_0$ are not satisfied, namely: a) $0\le s-t\le L_0$, b) $t_{0}\geq s-L_{0}>t$ and c) $s-L_{0}>t_{0}\geq t$. 

Firstly, the cases a) and b) imply that $0\le s-t\le L_0$ and $t\le s\le L_0+t_0$. Then if a) or b) are
satisfied we have that:
$$
\int_t^s [\mathfrak{a}(r)-\lambda]\, dr \ge -\|\mathfrak{a} - \lambda\|_\infty (L_0+s_0)
\ge \alpha (s-t) - (\|\mathfrak{a} - \lambda\|_\infty + \alpha)(L_0+t_0).
$$

Finally, if $t\le t_{0} < s-L_{0}$, then by using (\ref{condition}) we have
\begin{displaymath}
\begin{array}{rcl}
\int_t^s [\mathfrak{a}(r)-\lambda]\, dr &=& 
\int_{t_{0}}^{s} [\mathfrak{a}(r)-\lambda]\, dr + \int_{t}^{t_0} [\mathfrak{a}(r)-\lambda]\, dr \\
&>& \displaystyle \alpha (s-t_0) - \|\mathfrak{a} - \lambda\|_\infty t_0 \\
&\ge&  \displaystyle
\alpha (s-t) - (\|\mathfrak{a} - \lambda\|_\infty +\alpha) t_0.
\end{array}
\end{displaymath}

By gathering the above cases we have that (\ref{proj-null}) is verified with
$\alpha>0$ and $C=(\|\mathfrak{a} - \lambda\|_\infty +\alpha)t_{0}$, then the statement i) is verified.
The statement ii) can be proved in an analogous way.

\end{document}